  \providecommand\BibTeX{{%
    \normalfont B\kern-0.5em{\scshape i\kern-0.25em b}\kern-0.8em\TeX}}}
\newcommand{\rmv}[1]{}
\newcommand{\rmvOF}[1]{}
\newcommand{\MP}[1]{}
\newcommand{\toOmit}[1]{#1}
\newtheorem{theorem}{Theorem}
\newaliascnt{lem}{theorem}  
\newtheorem{lem}[lem]{Lemma}  
\newaliascnt{corol}{theorem}  
\newaliascnt{observation}{theorem}  
\algnewcommand\algorithmicforeach{\textbf{for each}}
\algnewcommand\algorithmicto{\textbf{to}}
\def\BState{\State\hskip-\ALG@thistlm}
\begin{document}

\title{Tight Lower Bounds for the RMR Complexity
	of Recoverable Mutual Exclusion}


\author{David Yu Cheng Chan}
\email{david.chan1@ucalgary.ca}
\affiliation{%
	\institution{University of Calgary}
	\city{Calgary}
	\state{Alberta}
	\country{Canada}%
}

\author{Philipp Woelfel}
\email{woelfel@ucalgary.ca}
\affiliation{%
	\institution{University of Calgary}
	\city{Calgary}
	\state{Alberta}
	\country{Canada}%
}

\renewcommand{\shortauthors}{Chan and Woelfel}

\begin{abstract}
	We present a tight RMR complexity lower bound for 
		the recoverable mutual exclusion (RME) problem, 
		defined by Golab and Ramaraju \cite{GR2019a}.
	In particular, we show that any $n$-process RME algorithm using 
		only atomic read, write, fetch-and-store, fetch-and-increment, 
		and compare-and-swap operations, has an RMR complexity 
		of $\Omega(\log n/\log\log n)$ on the CC and DSM model.
	This lower bound covers all realistic synchronization primitives 
		that have been used in RME algorithms
		and matches the best upper bounds of algorithms
		employing swap objects (e.g., \cite{GH2017a,JJJ2019a,DM2020a}).

	Algorithms with better RMR complexity than that 
		have only been obtained by either 
		(i) assuming that all failures are system-wide \cite{GH2018a},
		(ii) employing fetch-and-add objects of size 
		$(\log n)^{\omega(1)}$ \cite{KM2020a}, 
		or (iii) using artificially defined synchronization primitives 
		that are not available in actual systems \cite{GH2017a,JJJ2018a}.
\end{abstract}

\begin{CCSXML}
<ccs2012>
<concept>
<concept_id>10003752.10003809.10010170.10010171</concept_id>
<concept_desc>Theory of computation~Shared memory algorithms</concept_desc>
<concept_significance>500</concept_significance>
</concept>
<concept>
<concept_id>10011007.10010940.10010941.10010949.10010957.10010962</concept_id>
<concept_desc>Software and its engineering~Mutual exclusion</concept_desc>
<concept_significance>300</concept_significance>
</concept>
<concept>
<concept_id>10010520.10010575.10010577</concept_id>
<concept_desc>Computer systems organization~Reliability</concept_desc>
<concept_significance>100</concept_significance>
</concept>
</ccs2012>
\end{CCSXML}

\ccsdesc[500]{Theory of computation~Shared memory algorithms}
\ccsdesc[300]{Software and its engineering~Mutual exclusion}
\ccsdesc[100]{Computer systems organization~Reliability} 

\keywords{recoverable mutual exclusion, asynchronous system, shared memory, fetch and increment}


\maketitle


\section{Introduction}
\label{sec:Intro}
Recent research on the mutual exclusion problem has focused on \emph{recoverable} algorithms \cite{Ram2015a,JJ2017a,DM2020a,JJJ2018a,GH2018a,GH2017a,GR2019a,JJJ2019a,CW2020a,KM2020a}.
Here, a process may crash at any point during the execution, upon which its entire local state is being reset, including all its local program variables.
Shared memory, however, is not affected by process crashes.
When a process crashes, it starts a recovery routine that allows it to resume participation in the mutual exclusion protocol.
This variant of mutual exclusion has been motivated by recent advances in non-volatile memory architectures \cite{GR2019a}.

The standard complexity measure for mutual exclusion algorithms is RMR complexity.
The RMR complexity of conventional, non-recoverable $n$-process mutual exclusion is well understood: If only read-write registers and compare-and-swap objects are available, then a worst-case RMR complexity of $\Theta(\log n)$ is optimal  \cite{YA1995a,AHW2008a}.
Using other standard synchronization primitives, such as fetch-and-store (swap) or fetch-and-increment, the RMR complexity can be reduced to $O(1)$ \cite{MS1991a,Cra1993a,MLH1994a}.

Recoverable mutual exclusion is seemingly harder: Many algorithms use fetch-and-store objects, and the best ones achieve an RMR complexity of $\Theta(\log n/\log\log n)$ \cite{GH2017a,JJJ2019a}.
To study what it takes to achieve better RMR complexity than that, artificially defined synchronization primitives have been used that do not exist in actual hardware \cite{GH2017a,JJJ2018a}.
Katzan and Morrison \cite{KM2020a} observed that one can obtain an RMR complexity of $O(\log_w n)$ using $w$-bit fetch-and-add objects.
In particular, if $w=n^\varepsilon$ for some $\varepsilon>0$, then constant RMR complexity is possible.
But it is a standard (and reasonable) assumption that $w=O(\log n)$.
Even for poly-logarithmic values of $w$ their algorithm does not beat the best known upper bounds of $O(\log n/\log\log n)$.

It is therefore not surprising that it has been stated as an open problem (see e.g., \cite{KM2020a,CW2020a}), whether there are algorithms with better RMR complexity.
In this paper we provide an answer, for almost all standard synchronization primitives that have been used to solve RME, with the exception of fetch-and-add.
(For fetch-and-add objects that can store $w=(\log n)^{\omega(1)}$ bits, the lower bound does not apply due to \cite{KM2020a}, but for the  more realistic assumption $w=(\log n)^{O(1)}$ it remains open if there exist algorithms with $o(\log n/\log\log n)$ RMR complexity.)

\begin{theorem}
	\label{thm:MainTheorem}
  Any deadlock-free $n$-process RME algorithm, where all objects support only read, write,
  fetch-and-store, fetch-and-increment, and compare-and-swap operations,
  has RMR complexity $\Omega(\log n / \log \log n)$ in the CC and the DSM model.
\end{theorem}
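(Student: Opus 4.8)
The plan is to prove \autoref{thm:MainTheorem} by an adversary argument that extends the indistinguishability method of Attiya, Hendler and Woelfel \cite{AHW2008a}, using process crashes as the new ingredient needed to handle fetch-and-store and fetch-and-increment. Assuming an algorithm with RMR complexity $c$, I would inductively build a single legal execution together with a large set $S$ of processes that is \emph{oblivious}: every process of $S$ is still in its entry or recovery section, and the shared memory encodes no information that distinguishes the members of $S$ from one another (formally, the configuration is invariant under every permutation of the identifiers in $S$). Because identifiers are persistent and survive a crash, obliviousness is weaker than ``identical local state'': a process may use its own identifier, but it has not yet learned any other member's identifier or relative position. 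The construction advances in rounds, maintaining the invariant that after $r$ rounds every process of $S$ has performed at least $r$ RMRs while $|S|$ is still large. The two correctness properties do the rest: while $|S|\ge 2$, mutual exclusion together with the permutation symmetry forbids the algorithm from admitting any member of $S$ into the critical section (admitting one would force admitting all), and then deadlock-freedom compels these processes to keep taking steps, and hence to keep producing RMRs.

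The quantitative core is a bound on how much \emph{distinguishing information} a single RMR can inject into the oblivious set $S$. The classical analysis of \cite{AHW2008a} shows that reads, writes, and compare-and-swap each leak only $O(1)$ bits about $S$ per RMR --- intuitively because readers of a location all observe the same value, overlapping writes can be ordered so that only the last is observable, and a compare-and-swap produces a single winner against a uniform field of losers. Hence the configuration still fails to separate $n/2^{O(r)}$ of the processes after $r$ rounds, and obliviousness for at least two of them survives until $r = \Omega(\log n)$. This already establishes the theorem for algorithms that use only these three primitives, so the entire remaining difficulty concerns the two primitives powerful enough to break the $\Omega(\log n)$ barrier.

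That difficulty is fetch-and-store and fetch-and-increment. A single fetch-and-increment returns pairwise distinct ranks to its accessors, and a single fetch-and-store hands each accessor the identifier of its predecessor; either operation therefore injects $\Theta(\log n)$ bits into $S$ in one step, which is exactly why these primitives yield $O(1)$-RMR mutual exclusion in the non-recoverable model. The plan is to show that in the \emph{recoverable} model the adversary can throttle this injection down to $O(\log\log n)$ bits per RMR. The lever is that the distinguishing value --- a rank, or a predecessor's identifier --- initially resides only in \emph{volatile} local state: by crashing a process the adversary erases that value, at the cost of the single RMR the process performed, before it can be copied into a persistent, symmetry-breaking location of shared memory. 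The key technical step I would isolate is a \emph{separation lemma} asserting that, with crashes available, one batched access to a single fetch-and-store or fetch-and-increment object leaks at most $O(\log\log n)$ bits about $S$, equivalently that the adversary can retain an oblivious subset of size $|S| / \operatorname{polylog}(n)$. Granting it, $|S| \ge n / 2^{O(r\log\log n)}$ after $r$ rounds, so two oblivious competitors survive as long as $r = O(\log n / \log\log n)$, and the symmetry-plus-deadlock-freedom argument forces the execution to run for that many RMR-producing rounds --- giving some process $\Omega(\log n / \log\log n)$ RMRs, as claimed.

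The main obstacle is precisely this separation lemma, and getting its constant tight: I must show that crashing the processes that would otherwise be separated (i) leaves no residue in shared memory that breaks symmetry in a later round, and (ii) confers no ``free'' progress toward the critical section from the entry work those processes are forced to repeat, while simultaneously arguing that no more than $O(\log\log n)$ bits can be salvaged into persistent memory per RMR. It is this last, two-sided estimate that produces the $\log\log n$ in the denominator, and it has to match the swap-based upper bounds of \cite{GH2017a,JJJ2019a} exactly, neither losing nor gaining a logarithmic factor. Two smaller points I would dispatch uniformly for both models: by writing the surviving common value once, by another process, just before each retained batch of accesses, every retained access is a genuine cache miss and hence an RMR in the CC model; and by placing each accessed variable in a segment remote to the accessing processes, the same accesses are remote references and hence RMRs in the DSM model. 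Throughout, I would verify that every crash respects the crash--recovery semantics of the RME model and that the schedule produced remains a legal, deadlock-free execution.
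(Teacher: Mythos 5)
Your high-level strategy---rounds of an indistinguishability argument in which crashes are used to erase the ranks/predecessor-identifiers handed out by fetch-and-increment and fetch-and-store, with a per-round loss factor of $\log^{O(1)} n$ yielding $\Omega(\log n/\log\log n)$ rounds---is indeed the same skeleton as the paper's proof (which follows Anderson and Kim \cite{AK2002a} rather than \cite{AHW2008a}). But two of your load-bearing steps have genuine gaps. First, your invariant is unmaintainable as stated: you require the \emph{configuration} to be invariant under every permutation of the identifiers in $S$. An arbitrary RME algorithm need not be symmetric in identifiers---process $p$ may branch on its own ID and access registers dedicated to $p$ (indeed, in the DSM model every sensible algorithm spins on the caller's own segment)---so after even one step per process no nontrivial permutation invariance holds. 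The workable substitute, which is what the paper formalizes with its ``$i$-compliant'' arrays, is indistinguishability across a \emph{family} of executions indexed by the subsets of the active set: each active process has the same local state, and each register essentially the same value, in every execution of the family containing it. Relatedly, your claim that mutual exclusion plus symmetry forbids admitting \emph{any} member of $S$ into the critical section (``admitting one would force admitting all'') leans on the broken invariant; the paper instead allows at most one active process per phase to enter the critical section and simply prunes it from the family, losing an additive constant per round.

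Second, the ``separation lemma'' is not an isolatable technicality---it is the entire theorem, and your sketch of it skips the step where all the difficulty lives. Crashing a process erases its volatile knowledge, but deadlock-freedom then \emph{obligates} the adversary to schedule that crashed process until it completes its super-passage; during that completion it performs up to $\log n$ further RMRs, each of which may touch registers owned by, previously accessed by, or about to be accessed by other members of $S$, thereby separating them. Bounding this cascade is what actually produces the $\log\log n$: the paper groups the processes poised on a common high-contention register into groups of size $k=\log^d n$, retires only two ``alpha'' processes per group, hides a single surviving ``beta'' step among the alphas' steps via operation-specific interleavings (a FAS is overwritten by the next FAS, a FAI substitutes for an alpha's FAI, a CAS is forced to fail), and discards every process the retired alphas subsequently discover---the ratio of group size $\log^{O(1)} n$ to the $\log n$ discoveries each retired process can make is exactly the per-round shrinkage factor. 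An information-theoretic ``at most $O(\log\log n)$ bits salvaged into persistent memory per RMR'' bound never engages with this obligation-to-finish cascade, and I see no way to prove it without reconstructing the group/alpha--beta machinery. Finally, your decomposition by operation type (read/write/CAS handled by \cite{AHW2008a}, FAS/FAI handled by crashes) does not match the real dichotomy, which is by \emph{contention}: when the active processes are spread over many distinct registers (including FAS/FAI registers), no batching argument applies and one needs a separate independent-set (Tur\'{a}n) argument over a conflict graph whose edges also encode register ownership and prior access; and conversely, high contention on a single register must be handled uniformly for all operation types, reads included, within the same induction.
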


This lower bound is tight: 
	It shows that the algorithms by Golab and Hendler \cite{GH2017a} (for the CC model) 
	and Jayanti, Jayanti, and Joshi \cite{JJJ2019a} 
	(for the CC and DSM model) are optimal.
Both algorithms use registers and fetch-and-store objects, 
	and \cite{GH2017a} also uses compare-and-swap.
The lower bound demonstrates for the first time, 
	that RME is strictly harder than non-recoverable mutual exclusion,
	because, as mentioned above, the latter has constant RMR complexity 
	if fetch-and-store or fetch-and-increment objects 
	and registers are available \cite{MS1991a,Cra1993a,MLH1994a}.
Chan and Woelfel \cite{CW2020a} gave an RME algorithm 
	with constant \emph{amortized} RMR complexity 
	using fetch-and-increment and compare-and-swap objects in addition to registers.
Thus, our result separates worst-case from amortized RMR complexity for RME.
Interestingly, for non-recoverable mutual exclusion, 
	worst-case and amortized RMR complexity is the same for any subset of primitives 
	(that includes at least read-write registers) to which our lower bound applies.
It is also worth pointing out that our lower bound 
	applies to all deadlock-free RME algorithms, 
	and does not rely on the critical-section re-entry property, 
	which is usually required for RME.

In addition, Golab and Hendler \cite{GH2018a} showed that 
	with the stricter system-wide failure model (all processes crash simultaneously), 
	there exist RME algorithms with $O(1)$ RMR complexity.
Thus our lower bound also demonstrates that RME is strictly harder 
	in the more general failure model that allows processes to crash independently  
	than in the system-wide failure model.

Recently, Dhoked and Mittal \cite{DM2020a} gave an algorithm that adapts to the number of process crashes (in the ``recent'' past), $F$.
In particular, they achieve an RMR complexity of $O(\min\{\sqrt{F},\log n/\log\log n\})$.
In our proof we construct an execution, in which one process incurs $\Omega(\log n/\log\log n)$ RMRs, even though each process attempts to enter the critical section at most once and crashes at most once.
This shows that the RMR complexity of mutual exclusion algorithms, such as the one by Dhoked and Mittal, can only adapt to the \emph{total number} of crashes---bounding the number of crashes per process does not not suffice to improve RMR complexity.

\section{Preliminaries}
We consider the standard asynchronous shared memory model, 
	where $n$ processes with unique IDs communicate 
	by executing atomic operations (called steps) 
	on shared base objects.

A mutual exclusion algorithm is a shared (implemented) object that supports methods \textsc{Enter()} and \textsc{Exit()}, and calls to these methods must alternate, starting with \textsc{Enter()}.
A process is in the \emph{critical section} when it has finished an \textsc{Enter()} call but not yet called \textsc{Exit()}.
It is in the \emph{remainder section}, if it is not in the critical section and has no pending \textsc{Enter()} or \textsc{Exit()} call.
Such algorithms satisfy at least two conditions: \emph{Mutual exclusion} requires that no two processes are in the critical section at the same time, and \emph{deadlock-freedom} requires that some process with a pending \textsc{Enter()} or \textsc{Exit()} call will eventually finish this call, provided that all processes that are not in the remainder section keep taking steps.

A \emph{recoverable} mutual exclusion (RME) algorithm provides an additional method, \textsc{Recover()}.
It is assumed that a process may crash at any point. 
(Formally, a process performs a crash step, 
	which does not alter any shared memory objects.)
If a process that is not in the remainder section crashes, all its local variables are reset to their initial values, and the process immediately begins executing method \textsc{Recover()}.
\emph{Deadlock-freedom} is only required if the number of crashes is finite.
(Otherwise a process might repeatedly crash in its critical section, preventing other processes from making progress.)
In addition to mutual exclusion and deadlock-freedom, RME algorithms must satisfy \emph{critical section re-entry}, which means that if a process crashes in the critical section, then it will reenter the critical section before any other process.
Our lower bound is independent of that property (deadlock-freedom and mutual exclusion are sufficient).

There are two common models used for the complexity analysis of mutual exclusion algorithms.
In the cache-coherent (CC) model, processes are equipped with caches, and whenever a process performs a read operation it obtains a cache-copy of the corresponding memory location.
Any non-read operation of that memory location invalidates all cache copies.
A process's operation incurs a \emph{remote memory reference} (RMR), if it is a read operation and the process has no valid cache copy, or if it is not a read operation.
In the distributed shared memory (DSM) model, the shared memory is partitioned into segments, one for each process.
An operation on a shared memory location incurs an RMR if and only if that memory location is not in the calling process's memory segment.
Almost all work on the efficiency of (recoverable and non-recoverable) mutual exclusion algorithms has considered RMR complexity in one of those two models.

A \emph{passage} of a process begins when it calls \textsc{Enter()} and ends when the process crashes or when it finishes its following \textsc{Exit()} call.
A \emph{super-passage} of a process begins when the process calls \textsc{Enter()} and when it completes its following \textsc{Exit()} call.
(Note that in the absence of process crashes, all passages and super-passages coincide.)
The RMR complexity of a mutual exclusion algorithm is the maximum number of RMRs a process may incur in any passage.

For our lower bound proof, base objects can store values from arbitrary (even uncountable) domains.
We assume that processes can perform atomically any of the following operations on a shared object with value $x$:
\begin{itemize}
  \item \texttt{read()}: returns $x$;
  \item \texttt{FAS($x'$)}: writes $x'$ and returns $x$;
  \item \texttt{CAS($y,x'$}): writes $x'$ and returns true, provided that $x=y$; otherwise it leaves the value unchanged and returns false.
  \item \texttt{FAI()}: writes $x+1$ and returns $x$, provided that $x$ is an integer. Otherwise it does not change the value of the object and returns $x$.
\end{itemize}
Note that a FAS operation is strictly stronger than a write operation (which does not return anything), so our lower bound proof does not consider write operations separately.




\newcommand{\sigmaOld}{\sigma_\textit{old}}
\newcommand{\sigmaNew}{\sigma_\textit{new}}
\newcommand{\sigmaFinal}{\sigma_\textit{round}}
\newcommand{\SMax}{S_\textit{max}}
\newcommand{\SOMax}{S^\textit{old}_\textit{max}}
\newcommand{\SSMax}{S^\textit{setup}_\textit{max}}
\newcommand{\SLMax}{S^\textit{low}_\textit{max}}

\newcommand{\sigmaSetupA}{\sigma_\textit{setupA}}
\newcommand{\sigmaSetupB}{\sigma_\textit{setupB}}

\newcommand{\sigmaLowA}{\sigma_\textit{lowA}}
\newcommand{\sigmaLowB}{\sigma_\textit{lowB}}
\newcommand{\sigmaLowC}{\sigma_\textit{lowB}}

\newcommand{\POP}{\textit{opt}_H}
\newcommand{\sigmaHighA}{\sigma_\textit{highA}}
\newcommand{\sigmaHighB}{\sigma_\textit{highB}}
\newcommand{\sigmaHighC}{\sigma_\textit{highC}}

\section{The RME Lower Bound Proof}
\label{sec:proof}
We consider an arbitrary algorithm that solves the RME problem
with $o(\log n)$ RMR complexity.

\paragraph{Assumptions:}
We make the following assumptions w.l.o.g.:
\begin{enumerate}[label=(A\arabic*)]
	\item In the critical section, each process performs
		operation(s) that incur at least one RMR.
		\label{RME-assumption:RMR-in-CS}

	\item Since the algorithm incurs
		$o(\log n)$ RMRs in every passage of every execution,
		we assume $n$ is sufficiently large such that
		every passage of every execution incurs
		no more than $\log n$ RMRs.
		\label{RME-assumption:logNRMRs}

	\item Each process begins at most one super-passage, i.e.,
		it leaves the remainder section at most once.
		(Note that this assumption makes our proof stronger,
		since a solution for the RME problem that
		allows multiple super-passages per process
		clearly also solves the RME problem
		in the scenario where each process
		can begin at most one super-passage.)
		\label{RME-assumption:OneSP}
\end{enumerate}

\paragraph{Definitions:}
We define the following:
\begin{itemize}
	\item Let $\mathcal{P}=\{1,\dots,n\}$ be the set of processes
		and $\mathcal{R}$ be the set of objects.

	\item Given any array $A[0..2^n-1]$ and any set $S \subseteq \mathcal{P}$,
		we use $A[S]$ to denote $A[\sum_{p \in S} 2^{p-1}]$.

	\item A schedule is a sequence over $\{p,\hat{p}:p\in\mathcal{P}\}$,
		where $p$ denotes a non-crash step by process $p$,
		and $\hat{p}$ denotes a crash-step by $p$.

	\item Let $C_0$ denote the initial configuration.

	\item For each schedule $\sigma$, a configuration $C$,
		and a register $R$, we define the following:
		\begin{itemize}
			\item $P(\sigma)$: the set of all processes that have steps in $\sigma$.
			

			\item $E(C,\sigma)$: the execution determined by $\sigma$
				starting in configuration $C$.
			\textbf{Note:} An execution is a sequence of \emph{events}, where each event corresponds to a step by some process and contains the following information: The process that is executing the step, the shared memory operation that process is executing, the object (register) on which the shared memory operation is executed, and whether the shared memory operation incurs an RMR.

			\item $\textit{val}_R(C,\sigma)$:
				the value of $R$ at the end of $E(C,\sigma)$.

			\item $\textit{state}_p(C,\sigma)$:
				the state of $p$ at the end of $E(C,\sigma)$.

			\item $\textit{last}_R(C,\sigma)$:
				the process that last performed an operation on $R$
				at the end of $E(C,\sigma)$; or $\bot$
				if no process has ever performed an operation on $R$.

			\item $F(C,\sigma)$: the set of processes
				that have finished their super-passage
				at the end of $E(C,\sigma)$.
		\end{itemize}
	We also define $E(\sigma)=E(C_0,\sigma)$,
		$\textit{val}_R(\sigma)=\textit{val}_R(C_0,\sigma)$,
		$\textit{state}_p(\sigma)=\textit{state}_p(C_0,\sigma)$,
		$\textit{last}_R(\sigma)=\textit{last}_R(C_0,\sigma)$,
		and $F(\sigma)=F(C_0,\sigma)$.
\end{itemize}


\subsection{Overview of the Proof}

\newcommand{\sigmaEnd}{\sigma_\textit{goal}}

Our goal is to show that the algorithm has $\Omega(\log n / \log \log n)$ RMR complexity even when each process begins at most one super-passage and crashes at most once.
Towards that end, we will construct a schedule $\sigmaEnd$
	such that during $E(\sigmaEnd)$:
\begin{itemize}
	\item Each process begins at most one super-passage, and crashes at most once.
	\item Some process never crashes and never enters the critical section,
		yet incurs $\Omega(\log n / \log \log n)$ RMRs.
\end{itemize}
On a very high level the construction follows the outline of Anderson and Kim's lower bound for non-recoverable mutual exclusion algorithms \cite{AK2002a}.
Their proof applies only to read-write registers.
In order to deal with stronger primitives, we have to crash processes at opportune points in time, so that they ``forget'' information they may have observed (e.g., as a result of FAI or FAS operations).

We begin with a simple observation:
	if multiple processes are 'actively'
	attempting to enter the critical section,
	then they cannot safely enter the critical section
	before discovering one another,
	lest they violate mutual exclusion. 
Thus throughout the proof,
	we will construct several closely related schedules in which
	we attempt to maximize both the number of these \emph{active} processes
	and the number of RMRs they incur without discovering one another.

More formally, let $\sigmaFinal[0..\infty][0..2^n-1]$
	be an initially empty table of schedules
	with an unbounded number of rows and $2^n$ columns.
Roughly speaking, for every non-negative integer $i$,
	the $i$-th row of the table
	will contain only schedules in which
	the active processes have incurred
	at least $i$ RMRs.
For every integer $s \in \{0,1,\ldots,2^n-1\}$,
	we associate the $s$-th column
	with the unique set $S \subseteq \mathcal{P}$
	of processes such that $s = \sum_{p \in S} 2^{p-1}$.
Then the $s$-th column will contain only schedules in which
	only the processes in $S$ can begin super-passages.

Filling the first row of the table is simple:
	in the empty schedule,
	every active process has incurred $0$ RMRs,
	and the set of processes that have begun super-passages
	is $\varnothing$, a subset of every possible set of processes.
Thus we set every cell of $\sigmaFinal[0][0..2^n-1]$
	to contain the empty schedule.

The proof then proceeds in rounds,
	where in each round $i \geq 1$,
	we fill in some cells of the $i$-th row with schedules
	derived by appending more steps to the schedules in the $(i-1)$-th row.
Since we only want schedules in which
	the active processes do not discover one another,
	many of the cells in each row will be left with the value $\bot$,
	indicating that we did not find a schedule
	matching the required criteria.
Thus as we go down through the rows of the table,
	the number of cells in each row
	that we fill with schedules decreases.

As such, the goal of each round is to limit
	this decrease, such that $\Omega(\log n / \log \log n)$ rounds
	complete before the number of schedules becomes too few to continue.
After which, every schedule in the final round
	would have active processes that incur $\Omega(\log n / \log \log n)$ RMRs
	without entering the critical section (or crashing). 

To facilitate this, we maintain a number of invariants
	on every row of schedules that we construct.
Roughly speaking, these invariants are:
\begin{enumerate}
	\item There is a \emph{maximal} schedule which has
		the maximal number of active processes,
		and all other schedules are 'sub'-schedules
		that correspond to every possible subset of
		the active processes in the maximal schedule.
	This invariant ensures that if the maximal schedule cannot be extended
		without allowing some active processes to discover one another,
		then a sub-schedule can be extended and made into
		the new maximal schedule for the next round.

	\item The state of every process is the same
		in every schedule it is part of.
	This invariant ensures that the active processes
		have not discovered one another,
		since they have the same state in a schedule
		where there are no other active processes.

	\item For each register, its value in each schedule
		depends only on whether the schedule contains
		the process that last accessed it in the maximal schedule.
	This invariant ensures that register values are sufficiently similar
		across different schedules that it becomes difficult
		for the active processes to later distinguish between different schedules.

	\item In every schedule, each process crashes at most once,
		and every process that is within a super-passage
		has not yet entered the critical section.
	The invariant makes the proof significantly simpler,
		since it prevents interactions between the active processes
		and the inactive processes that have already entered the critical section
		but not yet completed their super-passage.

	\item In the DSM model, the registers that are owned by active processes
		have not been accessed by any other active process.
	This invariant also simplifies the proof,
		since it prevents non-RMR-incurring steps from allowing
		an active process to discover another active process, and thus
		allows the proof to focus on the RMR-incurring steps.
		
	\item In the CC model, for each process $p$, the set of registers
		that $p$ has valid cache copies of is identical over
		all schedules that contain $p$.
	This invariant ensures that in the CC model,
		for each process $p$, the number of RMRs incurred by $p$
		is the same in every schedule it is part of.

	\item In the $i$-th row, every active process in every schedule
		has incurred at least $i$ RMRs.
\end{enumerate}

It is easy to see that these invariants all hold for row $0$.
Furthermore, for every non-negative integer $i$,
	let $n_i$ be the number of active processes 
	in the maximal schedule of row $i$.
Then the first invariant asserts that row $i$ has $2^{n_i}$ schedules.
Moreover, to show that $\Omega(\log n / \log \log n)$ rounds can be completed,
	it suffices to show that for every integer $i \geq 1$,
	$n_i > n_{i-1} / O(\log^{O(1)} n)$.

Each round of the proof is divided into two phases:
	a setup phase in which non-RMR-incurring steps are appended to the schedules
	until every active process in every schedule is poised to incur an RMR,
	and a contention phase,
	in which RMR-incurring steps are appended in specific orders
	that limit the fraction of active processes discovered.

In the setup phase, multiple non-RMR-incurring step(s) are appended
	for each active process
	until they are poised to incur an RMR.
By the above invariants,
	the non-RMR-incurring steps appended
	for each process are the same
	in every schedule that contains the process.
This is because each process begins with the same state
	in every schedule that contains the process,
	and then:
\begin{itemize}
	\item In the DSM model, its non-RMR-incurring steps only access its own registers,
		which have never been accessed by any other active process,
		and thus these steps intuitively provide no new information that would cause
		the process to change its next steps.

	\item In the CC model, its non-RMR-incurring steps would be reads
		on registers that it has valid cache copies of in
		every schedule that contains it.
	Then, since the process already has valid cache copies of these registers,
		they intuitively provide no new information that would cause
		the process to change its next steps.
\end{itemize}

In the contention phase, our construction method differs depending
	on the relative number of registers that the active processes are poised to access
	(in the maximal schedule).

In a low contention scenario, the active processes are poised to access
	a relatively large number of registers,
	and so on average, each register has relatively few processes poised to access it.
In this case, we construct a graph with nodes representing the active processes,
	and edges that intuitively indicate processes that could discover one another:
	either because they are poised to access the same register,
	or they are poised to access a register that is owned
	or previously accessed by another active process.
Since the contention is relatively low,
	the resulting graph is relatively sparse,
	and thus contains a relatively large independent set.
We now discard any schedule that contains 
	any process outside of this independent set,
	so that the remaining schedules only contain active processes that
	would not discover one another with their next step.
The remaining schedules then have a single step appended for each active process,
	and then are used to fill the next row of $\sigmaFinal[0..\infty][0..2^n-1]$.
It is straightforward to show that the above invariants 
	still hold for this new row of schedules. 
Furthermore, due to the relative largeness of the independent set, 
	it is also straightforward to show that 
	$n_i > n_{i-1} / O(\log^{O(1)} n)$
	for every row $i \geq 1$ constructed in a low contention scenario.

In a high contention scenario, the active processes are poised to access
	a relatively small number of registers,
	and so on average, each register has relatively 
	many processes poised to access it.
In this scenario, it is often inevitable that 
	some active processes are discovered by the others,
	and these active processes must then be inactivated 
	by allowing them to enter the critical section,
	then complete their super-passage.
To further complicate matters,
	each such process could discover $o(\log n)$ other active processes
	before completing its super-passage, and 
	these discovered processes must then be removed
	(schedules that contain such processes are discarded).
Nevertheless, we can limit the number of discovered processes as follows.

First, we determine the plurality type of operation that
	the plurality of active processes are poised to perform.
Every active process that is not poised to perform
	this plurality type of operation is then removed
	(schedules that contain such processes are discarded).
Note that since there are only a constant number of operation types,
	a constant fraction of the active processes must remain.

We then divide these remaining active processes 
	into groups of $O(\log^{O(1)} n)$ processes,
	such that within each group, all processes are poised to access the same register
	(we remove any active processes that cannot be placed into such groups, 
	the number of which is at most a constant fraction 
	of the remaining active processes).
Then within each group, we select two active processes (preferentially
	those applying operations that would change the value of the register)
	that we call the alpha processes.
Every schedule that does not contain all of the alpha processes is then discarded.

Intuitively, these alpha processes are the processes that will be discovered:
	they will be crashed, and then
	allowed to run until they complete their super-passages.
Any other active processes that they discover along the way will be removed
	(schedules that contain such processes are discarded). 

Now recall that each process incurs at most $o(\log n)$ RMRs 
	during its super-passage,
	whereas each group contains $O(\log^{O(1)} n)$ processes.
Thus we can ensure that a constant fraction of the groups
	still contain active processes that have not been discovered.
Then one undiscovered active process in each such group,
	called the beta process, is allowed to take an RMR-incurring step
	that is intuitively hidden by the steps
	of the alpha processes in its group as follows:
\begin{itemize}
	\item If the plurality type of operation is read, then since reads
		do not change the value of a register, 
		the beta process can safely perform its read
		between the reads of the alpha processes 
		without affecting the value of the register.

	\item If the plurality type of operation is fetch-and-store,
		then since fetch-and-stores completely overwrite the value of a register,
		the beta process can safely perform its fetch-and-store
		between the fetch-and-stores of the alpha processes without
		affecting the final value of the register.

	\item If the plurality type of operation is fetch-and-increment,
		we can replace the fetch-and-increment of the first alpha process
		with a fetch-and-increment by the beta process,
		and the final value of the register will remain the same.

	\item If the plurality type of operation is compare-and-swap,
		then there must be an ordering of the alpha and beta processes
		such that the beta process fails its compare-and-swap operation,
		and so has no effect on the final value of the register.
\end{itemize}
Roughly speaking, this allows the beta processes to take
	their RMR-incurring steps without changing the value of any register,
	and although the alpha processes can immediately discover the beta processes,
	they will immediately crash and forget the beta processes,
	and will never discover the beta processes again. 
Thus we can construct schedules that allow
	the beta processes to remain active
	without being discovered;
	all other remaining active processes are removed. 
Since a constant fraction of the groups of $O(\log^{O(1)} n)$ processes
	yield an undiscovered beta process for the new maximal schedule, 
	we can also prove that $n_i > n_{i-1} / O(\log^{O(1)} n)$
	for every row $i \geq 1$ constructed in a high contention scenario.

Thus, regardless of whether each round $i \geq 1$ has
	a low contention phase or a high contention phase,
	$n_i > n_{i-1} / O(\log^{O(1)} n)$.
By the first invariant,  the number of schedules 
	in each row $i$ is $2^{n_i}$.
So $\Omega(\log n / \log \log n)$ rounds
	complete before the number of schedules becomes too few to continue.
After which, the schedules in the final round
	would have active processes that incur 
	$\Omega(\log n / \log \log n)$ RMRs
	without entering the critical section (or crashing).
Consequently, the algorithm has $\Omega(\log n / \log \log n)$ RMR complexity.

\subsection{Proof Details}


\paragraph{Invariants:}
To prove the main theorem, we will iteratively
	construct arrays of schedules.

Let $i$ be a non-negative integer,
	and $A[0..2^n-1]$ be an array such that
	each array entry contains either a schedule or $\bot$.
Then we say that $A[0..2^n-1]$ is
	\emph{$i$-compliant} 
	if it satisfies the following invariants:
\begin{enumerate}[label=(I\arabic*)]
	\item For every set $S \subseteq \mathcal{P}$,
		if $A[S] \neq \bot$, then $P(A[S]) \subseteq S$.
		\label{RME-invar:PsubsetS}
		(Note that this implies $F(A[S]) \subseteq S$.)

	\item There is a unique set $\SMax \subseteq \mathcal{P}$ such that
		for every set $S \subseteq \mathcal{P}$, $A[S] \neq \bot$
		if and only if $F(A[\SMax]) \subseteq S \subseteq \SMax$.
		\label{RME-invar:uniqueSMax}

	\item For every process $p \in \SMax$
		and every set $S \subseteq \mathcal{P}$ that contains $p$,
		if $A[S] \neq \bot$, then
		$\textit{state}_p(A[S]) = \textit{state}_p(A[\SMax])$.
		\label{RME-invar:subsetStatesSame2}

	\item $F(A[S]) = F(A[\SMax])$ for every
		set $S \subseteq \mathcal{P}$ with $A[S] \neq \bot$.
		\label{RME-invar:FSame}
		(Note that this invariant immediately follows
			from Invariants~\ref{RME-invar:PsubsetS},
			\ref{RME-invar:uniqueSMax}, and \ref{RME-invar:subsetStatesSame2}.)

	\item For every register $R \in \mathcal{R}$,
		there is a value $y_R$ such that
		for every set $S \subseteq \mathcal{P}$,
		if $A[S] \neq \bot$, then:
		\begin{displaymath}
			\textit{val}_R\bparen{A[S]}=
			\begin{cases}
					\textit{val}_R(A[\SMax]) &
						\text{if $\textit{last}_R(A[\SMax]) \in S$} \\
					y_R & \text{otherwise}
			\end{cases}
		\end{displaymath}
		Note that it is possible that
			$y_R=\textit{val}_R(A[\SMax])$.
		Furthermore, if $\textit{last}_R(A[\SMax]) = \bot \not\in S$,
			then $\textit{val}_R(A[S])= y_R$
			for every set $S \subseteq \mathcal{P}$
			with $A[S] \neq \bot$.
		\label{RME-invar:subsetRegsSimilar}

	\item For every set $S \subseteq \mathcal{P}$ with $A[S] \neq \bot$,
		during $E(A[S])$, each process crashes at most once,
		and each process that is not in $F(A[S])$ never crashes.
		\label{RME-invar:crashes}

	\item For every set $S \subseteq \mathcal{P}$ with $A[S] \neq \bot$,
		each process that is not in $F(A[S])$
		does not enter the critical section during $E(A[S])$.
		\label{RME-invar:CS}

	\item In the DSM model, for every process
		$p \in \SMax \setminus F(A[\SMax])$,
		every register $R \in \mathcal{R}$ owned by $p$,
		and every set $S \subseteq \mathcal{P}$ with $A[S] \neq \bot$,
		$R$ can only be accessed by $p$ during $E(A[S])$.
		\label{RME-invar:DSMSelfAccessOnly}
	(Or equivalently, In the DSM model,
		for every set $S \subseteq \mathcal{P}$ such that $A[S] \neq \bot$,
		during $E(A[S])$, each register $R \in \mathcal{R}$
		can only be accessed by its owner
		if the owner of $R$ is in $\SMax \setminus F(A[\SMax])$.)

	\item In the CC model, for every process
		$p \in \SMax \setminus F(A[\SMax])$,
		there is a set $\mathcal{R}_p$ of registers such that
		for every set $S \subseteq \mathcal{P}$ that contains $p$,
		if $A[S] \neq \bot$, then
		the set of registers that $p$ has valid cache copies of
		at the end of $E(A[S])$ is exactly $\mathcal{R}_p$.
		\label{RME-invar:ValidCCSame}
	(Or equivalently, for every set $S \subseteq \mathcal{P}$
		such that $A[S] \neq \bot$,
		and every process $p \in S \cap (\SMax \setminus F(A[\SMax]))$,
		the set of registers that $p$ has valid cache copies of
		at the end of $E(A[S])$ is exactly the same as at the end of $E(A[\SMax])$.)

	\item For every set $S \subseteq \mathcal{P}$
		and every process $p \in S \setminus F(A[S])$,
		if $A[S] \neq \bot$, then
		$p$ incurs at least $i$ RMRs
		during $E(A[S])$.
		\label{RME-invar:iRMRs}
\end{enumerate}
Let $i$ be a non-negative integer,
	and $A[0..2^n-1]$ be an array that is $i$-compliant.
Then we denote by $\SMax(A[0..2^n-1])$
	the unique set of Invariant~\ref{RME-invar:uniqueSMax}.

Let $\sigmaFinal[0..\infty,0..2^n-1]$ be a table
	with all entries initially containing $\bot$.
Our goal is to fill in the table such that
	for every non-negative integer $i$,
	either $\sigmaFinal[i,0..2^n-1]$ is $i$-compliant,
	or $i \in \Omega(\log n / \log \log n)$.

Let $d$ be a sufficiently large constant
	and $k = \log^d n$.

\paragraph{Base Case:}

For every set $S \subseteq \mathcal{P}$,
	let $\sigmaFinal[0,S]$ be set to the empty schedule
	(so every entry of $\sigmaFinal[0,0..2^n-1]$ is the empty schedule).
Clearly, the array $\sigmaFinal[0,0..2^n-1]$ is $0$-compliant
	with $\SMax(\sigmaFinal[0,0..2^n-1]) = \mathcal{P}$
	and has $2^n$ non-$\bot$ entries.

We now iterate through $i = 1,2,\ldots$ as follows:

\paragraph{$i$-th Iteration (Termination Phase):}

If $\sigmaFinal[i-1,0..2^n-1]$ is not $(i-1)$-compliant
	or has less than $2^{(k^3)}$ non-$\bot$ entries,
	terminate.


\paragraph{$i$-th Iteration (Setup Phase):} 

For every set $S \subseteq \mathcal{P}$, 
	let $\sigmaOld[S] = \sigmaFinal[i-1][S]$.
So the array $\sigmaOld[0..2^n-1]$ is $(i-1)$-compliant.

Thus by Invariant~\ref{RME-invar:uniqueSMax},
	there is a unique set $\SOMax \subseteq \mathcal{P}$ such that
	$\sigmaOld[\SOMax] \neq \bot$ and
	for every set $S \subseteq \mathcal{P}$, $\sigmaOld[S] \neq \bot$ 
	if and only if $F(\sigmaOld[\SOMax]) \subseteq S \subseteq \SOMax$.
So by definition, $\SMax(\sigmaOld[0..2^n-1]) = \SOMax$.
Then for every process $p \in \SOMax \setminus F(\sigmaOld[\SOMax])$,
	let $S_p = \{p\} \cup F(\sigmaOld[\SOMax])$.
Note that $F(\sigmaOld[\SOMax]) \subseteq S_p \subseteq \SOMax$,
	so $\sigmaOld[S_p] \neq \bot$.

Now for every process $p \in \SOMax \setminus F(\sigmaOld[\SOMax])$,
	let $\sigma_p$ be a schedule consisting only 
	of the maximum non-negative number 
	of non-crash steps of $p$ such that any RMRs incurred by 
	$p$ in $E(\sigmaOld[S_p] \circ \sigma_p)$
	were also incurred in $E(\sigmaOld[S_p])$.
Then let $C_p$ be the configuration at the end of $E(\sigmaOld[S_p])$.
So by definition, $p$ does not incur any RMRs in $E(C_p,\sigma_p)$.
Furthermore, if $\sigma_p$ is finite, 
	then an RMR would be incurred by $p$ 
	at the end of $E(\sigmaOld[S_p] \circ \sigma_p \circ p)$.

\begin{lem}
	\label{thm:SetupNotInfinite}
	For every process $p \in \SOMax \setminus F(\sigmaOld[\SOMax])$,
		$\sigma_p$ is finite.
\end{lem}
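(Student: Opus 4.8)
The plan is a proof by contradiction: assume $\sigma_p$ is infinite and show that this forces $p$ to incur a fresh RMR, contradicting the defining property of $\sigma_p$. The key structural fact I would establish first is that in $E(\sigmaOld[S_p])$ process $p$ is the \emph{only} process outside the remainder section. Indeed, by Invariant~\ref{RME-invar:PsubsetS} every process taking a step in $\sigmaOld[S_p]$ belongs to $S_p = \{p\}\cup F(\sigmaOld[\SOMax])$, and every process in $F(\sigmaOld[\SOMax])$ has completed its super-passage and hence sits in the remainder section. By Invariant~\ref{RME-invar:crashes} process $p$ never crashes in this execution, and by Invariant~\ref{RME-invar:CS} it has not entered the critical section, so $p$ is either still in the remainder section or partway through an \textsc{Enter()} call.

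Assuming $\sigma_p$ is infinite, I would then consider the infinite execution $E(\sigmaOld[S_p]\circ\sigma_p)$, in which $p$ takes infinitely many non-crash steps while every other process remains idle in the remainder section. The number of crashes in this execution is finite --- only those already present in $\sigmaOld[S_p]$ (bounded by Invariant~\ref{RME-invar:crashes}), since $\sigma_p$ contains no crash steps --- so deadlock-freedom applies. Once $p$ has taken its first step it has a pending \textsc{Enter()} call and is the unique process outside the remainder section; as it keeps taking steps forever, deadlock-freedom guarantees that after finitely many steps $p$ completes its \textsc{Enter()} call and enters the critical section.

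Finally I would invoke Assumption~\ref{RME-assumption:RMR-in-CS}: while in the critical section $p$ performs at least one RMR-incurring operation, so running $p$ a few steps further yields an RMR. Hence at some finite point of $\sigma_p$ process $p$ incurs an RMR that it did not incur in $E(\sigmaOld[S_p])$ --- whether during \textsc{Enter()} or inside the critical section --- which contradicts the fact that, by construction, every RMR incurred in $E(\sigmaOld[S_p]\circ\sigma_p)$ was already incurred in $E(\sigmaOld[S_p])$. Therefore $\sigma_p$ is finite.

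The main point requiring care is the justification for applying deadlock-freedom, namely that $p$ is genuinely the sole active process and that the crash count stays finite; both reduce to Invariants~\ref{RME-invar:PsubsetS} and \ref{RME-invar:crashes} together with the fact that $\sigma_p$ adds only non-crash steps of $p$. A related subtlety worth flagging is that completing \textsc{Enter()} need not by itself cost an RMR, so the guaranteed RMR must be extracted from the critical section reached afterwards --- this is precisely the role of Assumption~\ref{RME-assumption:RMR-in-CS}.
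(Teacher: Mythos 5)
Your proposal is correct and is essentially the paper's own argument: both proofs use Invariants~\ref{RME-invar:PsubsetS}, \ref{RME-invar:FSame}, \ref{RME-invar:crashes}, and \ref{RME-invar:CS} to establish that $p$ is the sole process outside the remainder section in an execution with finitely many crashes, and then combine Assumption~\ref{RME-assumption:RMR-in-CS} with deadlock-freedom to obtain a contradiction. The only (cosmetic) difference is the direction of the final step --- the paper first uses Assumption~\ref{RME-assumption:RMR-in-CS} to conclude that $p$ never enters the critical section and then exhibits a violation of deadlock-freedom, whereas you use deadlock-freedom to force $p$ into the critical section and then contradict the no-RMR property defining $\sigma_p$; these are the same argument read contrapositively.
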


\begin{proof}
	Let $p$ be any process in $\SOMax \setminus F(\sigmaOld[\SOMax])$.
	Suppose, for contradiction, that $\sigma_p$ is infinite.
	
	Recall that $S_p = \{p\} \cup F(\sigmaOld[\SOMax])$, 
		so $F(\sigmaOld[\SOMax]) \subseteq S_p \subseteq \SOMax$.
	Since $\sigmaOld[0..2^n-1]$ is $i-1$-compliant
		with $\SMax(\sigmaOld[0..2^n-1]) = \SOMax$,
		by Invariants \ref{RME-invar:uniqueSMax} 
		and \ref{RME-invar:FSame}, 
		$\sigmaOld[S_p] \neq \bot$ and
		$F(\sigmaOld[S_p]) = F(\sigmaOld[\SOMax])$.
	So $p \in S_p \setminus F(\sigmaOld[S_p])$.
	Thus by Invariant \ref{RME-invar:CS}, 
		$p$ does not enter the critical section during $E(\sigmaOld[S_p])$. 

	By definition, $C_p$ is the configuration at the end of $E(\sigmaOld[S_p])$
		and $p$ does not incur any RMRs in $E(C_p,\sigma_p)$. 
	So by Assumption \ref{RME-assumption:RMR-in-CS},
		$p$ does not enter the critical section during the 
		infinite execution $E(\sigmaOld[S_p] \circ \sigma_p)$.
	
	Now recall that $S_p = \{p\} \cup F(\sigmaOld[\SOMax])$, 
		so every process $q \neq p$
		is either in $F(\sigmaOld[S_p])$ or not in $S_p$.
	By Invariant~\ref{RME-invar:PsubsetS},
		every process $q \not\in S_p$ 
		takes no steps in $E(\sigmaOld[S_p])$.
	Since $\sigma_p$ contains only steps of $p$
		and $p \in S_p$, every process $q \not\in S_p$
		also takes no steps in 
		$E(\sigmaOld[S_p] \circ \sigma_p)$.
	So in $E(\sigmaOld[S_p] \circ \sigma_p)$,
		$p$ takes infinitely many steps without entering the critical section
		while every process $q \neq p$ is in the remainder section
		--- contradicting that $E(\sigmaOld[S_p] \circ \sigma_p)$
		is an execution of an algorithm that solves the RME problem.
\end{proof}

Since $\sigmaOld[0..2^n-1]$ is $(i-1)$-compliant
	with $\SMax(\sigmaOld[0..2^n-1]) = \SOMax$,
	by Invariant~\ref{RME-invar:uniqueSMax},
	for every set $S \subseteq \mathcal{P}$, $\sigmaOld[S] \neq \bot$ 
	if and only if $F(\sigmaOld[\SOMax]) \subseteq S \subseteq \SOMax$. 
So for each set $S \subseteq \mathcal{P}$ such that $\sigmaOld[S] \neq \bot$,
	let $p_{1,S}$ be the process with the smallest ID in 
	$S \setminus F(\sigmaOld[\SOMax])$,
	$p_{2,S}$ be the process with the second smallest ID,
	and so on.
Then let $C_S$ be the configuration at the end of $E(\sigmaOld[S])$,
	and let $\sigma_S = \sigma_{p_{1,S}} \circ \sigma_{p_{2,S}} \circ \ldots$.
Note that by \autoref{thm:SetupNotInfinite} and 
	the fact that the system has only $n$ processes, 
	$\sigma_S$ is a finite schedule. 

\begin{lem}
	\label{thm:SetupPreInvars}
	For every set $S \subseteq \mathcal{P}$ such that 
		$\sigmaOld[S] \neq \bot$:
	\begin{enumerate}[label=(S\arabic*)]
		\item No RMRs are incurred during $E(C_S,\sigma_S)$.
			\label{SPI:noRMRs}
		
		
		\item For each process $p \in S$,
			$\textit{state}_p(C_p,\sigma_p) =
			\textit{state}_p(C_S,\sigma_S)$.
			\label{SPI:statesSame} 
	
		\item For each process $p \in S \setminus F(\sigmaOld[\SOMax])$,
			$p$ incurs an RMR at the end of $E(C_S, \sigma_S \circ p)$. 
			\label{SPI:aboutToRMR}
		
		\item For each process $p \in S \setminus F(\sigmaOld[\SOMax])$,
			$p$ has not left the critical section at the end of $E(C_S,\sigma_S)$.
			\label{SPI:notLeftCS} 
			
		\item $F(\sigmaOld[S]) = F(\sigmaOld[S] \circ \sigma_S)$.
			\label{SPI:FUnchanged}
			
		\item In the DSM model, 
			each register $R \in \mathcal{R}$ can only be 
			accessed by its owner during $E(C_S,\sigma_S)$.
			\label{SPI:DSMSelfAccessOnly}
			
		\item In the DSM model,
			for each process $p \in S \setminus F(\sigmaOld[\SOMax])$
			and each register $R \in \mathcal{R}$ owned by $p$,
			$\textit{val}_R(C_p,\sigma_p) = \textit{val}_R(C_S,\sigma_S)$.
			\label{SPI:DSMRegsSame}
		
		\item In the CC model, 
			each register $R \in \mathcal{R}$ can only be read during $E(C_S,\sigma_S)$.
			\label{SPI:CCReadOnly}
			
		\item In the CC model, during $E(C_S,\sigma_S)$,
			each process $p$ can only read registers that 
			it already has valid cache copies of. 
			\label{SPI:CCNoNewCaches}
	\end{enumerate}
\end{lem}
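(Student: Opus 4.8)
The plan is to establish all nine claims simultaneously by a single induction over the active processes of $S$, taken in increasing ID order. Write $S \setminus F(\sigmaOld[\SOMax]) = \{p_{1,S}, \dots, p_{m,S}\}$ and let $\sigma_S^{(j)} = \sigma_{p_{1,S}} \circ \cdots \circ \sigma_{p_{j,S}}$, so that $\sigma_S^{(0)}$ is empty and $\sigma_S^{(m)} = \sigma_S$ (which is finite by \autoref{thm:SetupNotInfinite} and $|\mathcal{P}| = n$). The heart of the argument is the invariant I maintain across the prefixes: \emph{when run from the configuration reached at the end of $E(C_S, \sigma_S^{(j-1)})$, process $p_{j,S}$ executes $\sigma_{p_{j,S}}$ exactly as in its isolated run $E(C_{p_{j,S}}, \sigma_{p_{j,S}})$} --- the same sequence of operations on the same objects, incurring no RMRs, and touching only its own registers (DSM) or only reading registers it already caches (CC). Granting this claim, S1, S2, S8, and S9 are immediate, and the remaining items follow by short arguments.

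The inductive step is where the two models diverge. By Invariant~\ref{RME-invar:subsetStatesSame2}, $p_{j,S}$ begins its turn in the same state as at the start of its isolated run, since $\textit{state}_{p_{j,S}}(\sigmaOld[S]) = \textit{state}_{p_{j,S}}(\sigmaOld[S_{p_{j,S}}])$; it therefore suffices to show every value it observes agrees with the isolated run. In the DSM model, a non-RMR step is by definition an access to the process's own segment, so $\sigma_{p_{j,S}}$ touches only registers owned by $p_{j,S}$; by Invariant~\ref{RME-invar:DSMSelfAccessOnly} no other process ever accesses these, and by the induction hypothesis the earlier $\sigma_{p_{\ell,S}}$ touched only their own (disjoint) registers, so the values $p_{j,S}$ reads match those in $C_{p_{j,S}}$ --- the starting values at $C_S$ and $C_{p_{j,S}}$ being reconciled by Invariant~\ref{RME-invar:subsetRegsSimilar}, as both sets contain $p_{j,S}$. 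In the CC model, a non-RMR step must be a read hitting a valid cache copy, since any non-read operation incurs an RMR; hence $\sigma_{p_{j,S}}$ consists solely of reads of registers in the cache set $\mathcal{R}_{p_{j,S}}$ fixed by Invariant~\ref{RME-invar:ValidCCSame}. Reads never invalidate caches, so the earlier $\sigma_{p_{\ell,S}}$ (also all reads, by the induction hypothesis) leave every cache copy of $p_{j,S}$ valid. The delicate point is that each cached value read must match the isolated run: a valid cache copy of $R$ means no operation modified $R$ since $p_{j,S}$ last read it, so the cached value equals the current $\textit{val}_R$, which Invariant~\ref{RME-invar:subsetRegsSimilar} then pins down identically across $S$ and $S_{p_{j,S}}$.

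With the main claim in hand the remaining items are short. S1, S2, S8, and S9 are the ``no RMRs'', ``identical state'', ``only reads'', and ``only cached reads'' parts of the claim. For S3, the operation $p_{j,S}$ is poised to take is determined by its (common) state, hence is the same operation on the same object as at the end of the isolated $E(C_{p_{j,S}}, \sigma_{p_{j,S}} \circ p_{j,S})$, which incurs an RMR by the definition of $\sigma_{p_{j,S}}$; the RMR-status transfers because ownership (DSM) and the cache set $\mathcal{R}_{p_{j,S}}$ (CC, via Invariant~\ref{RME-invar:ValidCCSame}) are unchanged. For S4, no process in $S \setminus F(\sigmaOld[\SOMax])$ enters the critical section during $E(\sigmaOld[S])$ by Invariant~\ref{RME-invar:CS}, and it incurs no RMRs during $E(C_S, \sigma_S)$, so by Assumption~\ref{RME-assumption:RMR-in-CS} it never occupies, and hence never leaves, the critical section. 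S5 follows because completing a super-passage requires first having been in the critical section, so $F$ is unchanged. Finally S6 restates the DSM ``own registers only'' property, and S7 is the DSM value-agreement for self-owned registers already established in the inductive step.

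The main obstacle is the CC cache-value consistency: reconciling the value a process reads from its cache with Invariant~\ref{RME-invar:subsetRegsSimilar} across schedules whose \emph{last accessor} of a register may be a different (read-only) process. This is precisely why the ``reads do not invalidate caches'' property and the observation that a cached register has not been modified since $p_{j,S}$'s read (forcing its value to be the one $p_{j,S}$ holds, independently of which active processes are present) are essential; establishing this carefully is the only nonroutine part of the argument, as the DSM case is settled cleanly by the disjointness guaranteed by Invariant~\ref{RME-invar:DSMSelfAccessOnly}.
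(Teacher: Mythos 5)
Your proposal is correct and takes essentially the same route as the paper's own proof: you show that each active process replays its isolated setup steps unchanged inside the concatenated schedule $\sigma_S$ (in the DSM model via Invariants~\ref{RME-invar:DSMSelfAccessOnly} and~\ref{RME-invar:subsetRegsSimilar}, in the CC model via Invariant~\ref{RME-invar:ValidCCSame} together with the fact that cached reads neither invalidate nor create cache copies), and then read off (S1)--(S9), exactly as the paper does. The only differences are cosmetic --- you make the per-process induction explicit where the paper treats all processes at once, and your intermediate claim in deriving~\ref{SPI:notLeftCS} that a process ``never occupies'' the critical section is slightly stronger than what Assumption~\ref{RME-assumption:RMR-in-CS} yields (a process may well enter the critical section with a non-RMR step; the assumption only rules out \emph{leaving} it, which is all that is needed and is what the paper claims).
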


\begin{proof}
	Let $S \subseteq \mathcal{P}$ be any set of processes
		such that $\sigmaOld[S] \neq \bot$.
	Since $\sigmaOld[0..2^n-1]$ is $(i-1)$-compliant
		with $\SMax(\sigmaOld[0..2^n-1]) = \SOMax$,
		by Invariant~\ref{RME-invar:uniqueSMax},
		for every set $S' \subseteq \mathcal{P}$, $\sigmaOld[S'] \neq \bot$ 
		if and only if $F(\sigmaOld[\SOMax]) \subseteq S' \subseteq \SOMax$. 
	Thus $F(\sigmaOld[\SOMax]) \subseteq S \subseteq \SOMax$.
	Then since $S \subseteq \SOMax$, by definition we have that
		for every process $p \in S \setminus F(\sigmaOld[\SOMax])$,
		$S_p = \{p\} \cup F(\sigmaOld[\SOMax])$.
	
	Since $\sigmaOld[0..2^n-1]$ is $(i-1)$-compliant,
		by Invariant~\ref{RME-invar:subsetStatesSame2},
		for every process $p \in S \setminus F(\sigmaOld[\SOMax])$
		and every process $q \in S_p$,
		$\textit{state}_q(\sigmaOld[S]) =
		\textit{state}_q(\sigmaOld[S_p])$.
	Furthermore, by Invariant~\ref{RME-invar:FSame},
		$F(\sigmaOld[\SOMax]) = F(\sigmaOld[S])$.
	So by Invariant~\ref{RME-invar:CS},
		every process $p \in S \setminus F(\sigmaOld[\SOMax])$
		has not entered the critical section during $E(\sigmaOld[S])$. 

	The proof now differs depending on the model:
	\begin{description}
		\item[\textbf{CC Model:}]
	
			In the CC model, any step that does not incur an RMR
				must be a read operation on a register that the invoking process
				already has a valid cache copy of.
			Thus, by definition,
				for every process $p \in S \setminus F(\sigmaOld[\SOMax])$,
				$p$ only performs read operations on registers that 
				it already has valid cache copies of
				during $E(C_p,\sigma_p)$.
			
			Since $\sigmaOld[0..2^n-1]$ is $(i-1)$-compliant,
				by Invariant~\ref{RME-invar:ValidCCSame},
				for every process $p \in S \setminus F(\sigmaOld[\SOMax])$,
				the set of registers that $p$ has valid cache copies of
				is in same in $C_p$ as in $C_S$.
			Furthermore, read operations clearly cannot invalidate
				any valid cache copies.
			Thus, by the definition of $\sigma_S$, observe that
				for every process $p \in S \setminus F(\sigmaOld[\SOMax])$,
				the operations performed by $p$ during $E(C_S,\sigma_S)$
				are the same as in during $E(C_p,\sigma_p)$, i.e.,
				$p$ only performs read operations on registers that 
				it already has valid cache copies of
				during $E(C_S,\sigma_S)$ (\ref{SPI:CCNoNewCaches}).
				
			This implies the following:
			\begin{itemize}
				\item Since RMRs are not incurred by any 
					read operation on a register that the invoking
					process already has a valid cache copy of,
					no RMRs are incurred during $E(C_S,\sigma_S)$
					(\ref{SPI:noRMRs}).
			
				\item Since only read operations are performed during
					$E(C_S,\sigma_S)$, each register can only be read
					during $E(C_S,\sigma_S)$ (\ref{SPI:CCReadOnly}).
				
				\item By definition, $\sigma_S$ contains only steps of
					processes in $S \setminus F(\sigmaOld[\SOMax])$.
				Thus for every process $p \in S$,
					observe that $\textit{state}_p(C_S,\sigma_S)
					= \textit{state}_p(C_p,\sigma_p)$ (\ref{SPI:statesSame}).
					
				Now recall that for every process $p \in S \setminus F(\sigmaOld[\SOMax])$,
					the set of registers that $p$ has valid cache copies of
					is in same in $C_p$ as in $C_S$.
				Then, since (i) $\textit{state}_p(C_S,\sigma_S)
					= \textit{state}_p(C_p,\sigma_p)$,
					(ii) the valid cache copies of $p$
					are the same in $C_p$ as in $C_S$,
					(iii) new cache copies cannot be created by reading registers 
					that valid cache copies already exist for,
					and (iv) $p$ incurs an RMR in $E(C_p \circ \sigma_p, p)$ 
					by the definition of $\sigma_p$,
					observe that $p$ also incurs an RMR
					at the end of $E(C_S, \sigma_S \circ p)$ (\ref{SPI:aboutToRMR}). 
				
				\item Since every process $p \in S \setminus F(\sigmaOld[\SOMax])$
					has not entered the critical section during $E(\sigmaOld[S])$
					and no RMRs are incurred during $E(C_S,\sigma_S)$,
					by Assumption~\ref{RME-assumption:RMR-in-CS},
					every process $p \in S \setminus F(\sigmaOld[\SOMax])$
					has not left the critical section at the end of $E(C_S,\sigma_S)$
					(\ref{SPI:notLeftCS}). 
					
				Then, since $\sigma_S$ contains only non-crash steps,
					no process completes during $E(C_S,\sigma_S)$.
				Thus $F(\sigmaOld[S]) = F(\sigmaOld[S] \circ \sigma_S)$ (\ref{SPI:FUnchanged}).
			\end{itemize} 
				
		\item[\textbf{DSM Model:}]
		
			In the DSM model, any step that does not incur an RMR
				must be an operation on a register owned by the invoking process.
			Thus, by definition,
				for every process $p \in S \setminus F(\sigmaOld[\SOMax])$,
				$p$ only performs operations on its own registers
				during $E(C_p,\sigma_p)$.
				
			Since $\sigmaOld[0..2^n-1]$ is $(i-1)$-compliant
				with $\SMax(\sigmaOld[0..2^n-1]) = \SOMax$,
				by Invariant~\ref{RME-invar:DSMSelfAccessOnly}, 
				for every process $p \in \SOMax \setminus F(\sigmaOld[\SOMax])$,
				every register $R \in \mathcal{R}$ owned by $p$,
				and every set $S' \subseteq \mathcal{P}$ with $\sigmaOld[S'] \neq \bot$, 
				$R$ can only be accessed by $p$ during $E(\sigmaOld[S'])$,
				so $\textit{last}_R(\sigmaOld[S'])$ is either $p$ or $\bot$.
			Thus by Invariant~\ref{RME-invar:subsetRegsSimilar},
				if $\textit{init}_R$ is the initial value of $R$, then:
			\begin{displaymath}
				\textit{val}_R\bparen{\sigmaOld[S']}=
				\begin{cases}
						\textit{val}_R(\sigmaOld[\SOMax]) & 
							\text{if $p \in S'$} \\
						\textit{init}_R & \text{otherwise}
				\end{cases}
			\end{displaymath}
			
			So for every process $p \in S \setminus F(\sigmaOld[\SOMax])$
				and every register $R \in \mathcal{R}$ owned by $p$,
				$\textit{val}_R(\sigmaOld[S]) =
				\textit{val}_R(\sigmaOld[S_p])$.
			Furthermore, operations on registers not owned by $p$
				clearly cannot change the value of registers owned by $p$.
			Consequently, by the definition of $\sigma_S$, observe that
				for every process $p \in S \setminus F(\sigmaOld[\SOMax])$,
				the operations performed by $p$ during $E(C_S,\sigma_S)$
				are the same as in during $E(C_p,\sigma_p)$, i.e.,
				$p$ only performs operations on its own registers 
				during $E(C_S,\sigma_S)$.	
			
			This implies the following:
			\begin{itemize}
				\item Since RMRs are not incurred by any 
					operation on a register owned by the invoking process,
					no RMRs are incurred during $E(C_S,\sigma_S)$
					(\ref{SPI:noRMRs}).
			
				\item Since each process only accesses 
					its own registers during $E(C_S,\sigma_S)$,
					for each register $R \in \mathcal{R}$,
					$R$ can only be accessed by its owner 
					during $E(C_S,\sigma_S)$ (\ref{SPI:DSMSelfAccessOnly}).
				
				Furthermore, since the operations performed by each process
					$p \in S \setminus F(\sigmaOld[\SOMax])$ during $E(C_S,\sigma_S)$
					are the same as in during $E(C_p,\sigma_p)$,
					$\textit{val}_R(C_p,\sigma_p) = \textit{val}_R(C_S,\sigma_S)$
					(\ref{SPI:DSMRegsSame}).
				
				\item By definition, $\sigma_S$ contains only steps of
					processes in $S \setminus F(\sigmaOld[\SOMax])$.
				Thus for every process $p \in S$,
					observe that $\textit{state}_p(C_S,\sigma_S)
					= \textit{state}_p(C_p,\sigma_p)$ (\ref{SPI:statesSame}).
					
				Furthermore, by the definition of $\sigma_p$,
					$p$ incurs an RMR at the end of $E(C_p, \sigma_p \circ p)$,
					i.e., $p$ is poised to access a register that it does not own
					at the end of $E(C_p, \sigma_p)$.
				Thus, since $\textit{state}_p(C_S,\sigma_S)
					= \textit{state}_p(C_p,\sigma_p)$, 
					$p$ is also poised to access a register that it does not own
					at the end of $E(C_S, \sigma_S)$, and so
					$p$ also incurs an RMR
					at the end of $E(C_S, \sigma_S \circ p)$ (\ref{SPI:aboutToRMR}). 
				
				\item Since every process $p \in S \setminus F(\sigmaOld[\SOMax])$
					has not entered the critical section during $E(\sigmaOld[S])$
					and no RMRs are incurred during $E(C_S,\sigma_S)$,
					by Assumption~\ref{RME-assumption:RMR-in-CS},
					every process $p \in S \setminus F(\sigmaOld[\SOMax])$
					has not left the critical section at the end of $E(C_S,\sigma_S)$
					(\ref{SPI:notLeftCS}). 
					
				Then, since every process $p \in S \setminus F(\sigmaOld[\SOMax])$
					has not left the critical section at the end of $E(C_S,\sigma_S)$,
					every process $p \in S \setminus F(\sigmaOld[\SOMax])$
					has not completed its super-passage at the end of $E(C_S,\sigma_S)$.
				Thus $F(\sigmaOld[S]) = F(\sigmaOld[S] \circ \sigma_S)$ (\ref{SPI:FUnchanged}).
			\end{itemize}   
	\end{description} 
\end{proof}

We now construct a new array $\sigmaSetupA[0..2^n-1]$ such that
	for every set $S \subseteq \mathcal{P}$,
	$\sigmaSetupA[S] = \bot$ if $\sigmaOld[S] = \bot$;
	otherwise $\sigmaSetupA[S] = \sigmaOld[S] \circ \sigma_S$.

\begin{lem}	
	\label{thm:SetupAInvars}
	Except for Invariant~\ref{RME-invar:CS},
		$\sigmaSetupA[0..2^n-1]$ is $(i-1)$-compliant
		with $\SMax(\sigmaSetupA[0..2^n-1]) = \SOMax$. 
\end{lem}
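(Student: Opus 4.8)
The plan is to verify each compliance invariant except \ref{RME-invar:CS} for the array $\sigmaSetupA[0..2^n-1]$ directly, leaning throughout on \autoref{thm:SetupPreInvars} and the $(i-1)$-compliance of $\sigmaOld[0..2^n-1]$. First I would record the structural facts that make most invariants immediate. By construction $\sigmaSetupA[S]\neq\bot$ exactly when $\sigmaOld[S]\neq\bot$, and \ref{SPI:FUnchanged} gives $F(\sigmaSetupA[S])=F(\sigmaOld[S])$; combined with Invariant~\ref{RME-invar:uniqueSMax} for $\sigmaOld$, the non-$\bot$ entries of $\sigmaSetupA$ are still exactly the sets $S$ with $F(\sigmaSetupA[\SOMax])\subseteq S\subseteq\SOMax$, so Invariant~\ref{RME-invar:uniqueSMax} holds with $\SMax=\SOMax$. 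Since each $\sigma_S$ consists only of steps of processes in $S\setminus F(\sigmaOld[\SOMax])\subseteq S$, appending it preserves Invariant~\ref{RME-invar:PsubsetS}; since each $\sigma_S$ contains only non-crash steps, no process gains a crash, so Invariant~\ref{RME-invar:crashes} is inherited; and since \ref{SPI:noRMRs} says no RMR is incurred during any $E(C_S,\sigma_S)$, every RMR count is unchanged and Invariant~\ref{RME-invar:iRMRs} is inherited at threshold $i-1$. Invariant~\ref{RME-invar:subsetStatesSame2} follows from \ref{SPI:statesSame} for the active processes (their states agree with those in the corresponding single-process schedules, hence with one another across all schedules containing them), while finished processes take no steps in $\sigma_S$ and inherit the invariant from $\sigmaOld$; Invariant~\ref{RME-invar:FSame} then follows as noted in its statement.

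Next I would handle the two model-specific bookkeeping invariants. For Invariant~\ref{RME-invar:DSMSelfAccessOnly} (DSM) I would invoke \ref{SPI:DSMSelfAccessOnly}: during every $E(C_S,\sigma_S)$ each register is touched only by its owner, so combined with the $\sigmaOld$ invariant, every register owned by an active process is still accessed only by that process over $E(\sigmaSetupA[S])$. For Invariant~\ref{RME-invar:ValidCCSame} (CC) I would argue that no process's valid-cache set changes during $\sigma_S$: by \ref{SPI:CCNoNewCaches} a process reads only registers it already caches, so it creates no new cache copy, and by \ref{SPI:CCReadOnly} only reads occur, which never invalidate anyone's cache. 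Hence each $p$'s cache set at the end of $E(\sigmaSetupA[S])$ equals its set at the end of $E(\sigmaOld[S])$, and Invariant~\ref{RME-invar:ValidCCSame} is inherited with the same $\mathcal{R}_p$.

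The delicate invariant, and the one I expect to be the main obstacle, is the register-value invariant \ref{RME-invar:subsetRegsSimilar}, because appending $\sigma_S$ can move $\textit{last}_R$. In the DSM model I would use \ref{SPI:DSMSelfAccessOnly} and \ref{SPI:DSMRegsSame}: a register $R$ owned by a finished or absent process is untouched during every $\sigma_S$, so it inherits Invariant~\ref{RME-invar:subsetRegsSimilar} verbatim; a register $R$ owned by an active process $o$ is touched only by $o$, so $\textit{last}_R(\sigmaSetupA[\SOMax])\in\{o,\bot\}$, and \ref{SPI:DSMRegsSame} forces $\textit{val}_R(\sigmaSetupA[S])$ to be a single value for every $S\ni o$ and to equal $\textit{val}_R(\sigmaOld[S])$ for every $S\not\ni o$. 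Combining this with Invariant~\ref{RME-invar:subsetRegsSimilar} for $\sigmaOld$ (whose $\textit{last}_R(\sigmaOld[\SOMax])$ is likewise in $\{o,\bot\}$) re-establishes the required case split, with $\textit{last}_R=o$ when $o$ touches $R$ and $y_R=y_R^{\mathrm{old}}$.

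The genuinely subtle case is Invariant~\ref{RME-invar:subsetRegsSimilar} in the CC model. Here \ref{SPI:CCReadOnly} guarantees that no register value changes during any $\sigma_S$, so $\textit{val}_R(\sigmaSetupA[S])=\textit{val}_R(\sigmaOld[S])$ for every non-$\bot$ entry; the only difficulty is that a read during $\sigma_{\SOMax}$ can make $\textit{last}_R(\sigmaSetupA[\SOMax])$ a reader $q$ different from $\textit{last}_R(\sigmaOld[\SOMax])$. If no process reads $R$ during $\sigma_{\SOMax}$ then $\textit{last}_R$ is unchanged and the invariant is inherited. Otherwise let $q$ be the last reader of $R$ in $\sigma_{\SOMax}$; the value $q$ reads is $\textit{val}_R(\sigmaOld[\SOMax])$, and since by \ref{SPI:statesSame} the state of $q$ (hence the value it reads from $R$) is identical across all schedules containing $q$, we get $\textit{val}_R(\sigmaOld[S])=\textit{val}_R(\sigmaOld[\SOMax])$ for every $S\ni q$. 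Feeding this into Invariant~\ref{RME-invar:subsetRegsSimilar} for $\sigmaOld$ applied to $S=\{q\}\cup F(\sigmaOld[\SOMax])$ shows that $\textit{last}_R(\sigmaOld[\SOMax])$ is either $q$ itself or a member of $F(\sigmaOld[\SOMax])$ (or $\textit{val}_R$ is already constant). In the first case the old and new partitions of the index sets coincide; in the second case $F(\sigmaOld[\SOMax])\subseteq S$ for every non-$\bot$ entry, so $\textit{val}_R$ is constantly $\textit{val}_R(\sigmaOld[\SOMax])$. Either way Invariant~\ref{RME-invar:subsetRegsSimilar} holds for $\sigmaSetupA$ with $\textit{last}_R=q$, completing the verification and hence the lemma. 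The crux throughout is precisely this bookkeeping of $\textit{last}_R$ under value-preserving reads, so I would develop the CC sub-argument most carefully.
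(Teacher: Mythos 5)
Your proof is correct and takes essentially the same route as the paper: each invariant is verified from \autoref{thm:SetupPreInvars} together with the $(i-1)$-compliance of $\sigmaOld[0..2^n-1]$, with the substantive work concentrated on Invariant~\ref{RME-invar:subsetRegsSimilar} and handled by the same owner-based case split in the DSM model. The only divergence is the CC case of that invariant, where you pin down the last reader $q$ and get value agreement across schedules containing $q$ from (\ref{SPI:statesSame}), whereas the paper gets the same agreement from (\ref{SPI:CCNoNewCaches}) combined with Invariant~\ref{RME-invar:ValidCCSame}; both rest on the same indistinguishability fact established inside \autoref{thm:SetupPreInvars}, so the difference is cosmetic.
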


\begin{proof}
	For every set $S \subseteq \mathcal{P}$,
		if $\sigmaSetupA[S] \neq \bot$,
		then by construction, $\sigmaSetupA[S] = \sigmaOld[S] \circ \sigma_S$.
	Since $\sigmaOld[0..2^n-1]$ is $(i-1)$-compliant, 
		by Invariant~\ref{RME-invar:PsubsetS},
		$P(\sigmaOld[S]) \subseteq S$.
	By the definition of $\sigma_S$, $\sigma_S$ contains only steps of processes in $S$.
	Thus $P(\sigmaSetupA[S]) \subseteq S$ (Invariant~\ref{RME-invar:PsubsetS}).
	
	Since $\sigmaOld[0..2^n-1]$ is $(i-1)$-compliant
		with $\SMax(\sigmaOld[0..2^n-1]) = \SOMax$,
		by Invariant~\ref{RME-invar:uniqueSMax},
		for every set $S \subseteq \mathcal{P}$,
		$\sigmaOld[S] \neq \bot$ if and only if
		$F(\sigmaOld[\SOMax]) \subseteq S \subseteq \SOMax$.
	By construction, for every set $S \subseteq \mathcal{P}$,
		$\sigmaSetupA[S] = \bot$ if and only if
		$\sigmaOld[S] = \bot$. 
	Furthermore, by \autoref{thm:SetupPreInvars} (\ref{SPI:FUnchanged}),
		$F(\sigmaOld[\SOMax]) = F(\sigmaSetupA[\SOMax])$.
	Thus for every set $S \subseteq \mathcal{P}$,
		$\sigmaSetupA[S] \neq \bot$ if and only if
		$F(\sigmaSetupA[\SOMax]) \subseteq S \subseteq \SOMax$
		(Invariant~\ref{RME-invar:uniqueSMax}).
	
	By \autoref{thm:SetupPreInvars} (\ref{SPI:statesSame}),
		for every set $S \subseteq \mathcal{P}$ such that 
		$\sigmaOld[S] \neq \bot$,
		and every process $p \in S$,
		$\textit{state}_p(\sigmaSetupA[S_p]) = \textit{state}_p(C_p,\sigma_p)
		\textit{state}_p(C_S,\sigma_S) = \textit{state}_p(\sigmaSetupA[S])$.
	By construction, for every set $S \subseteq \mathcal{P}$,
		$\sigmaSetupA[S] = \bot$ if and only if $\sigmaOld[S] = \bot$.
	Furthermore, we have already proven that for every set $S \subseteq \mathcal{P}$,
		$\sigmaSetupA[S] \neq \bot$ if and only if
		$F(\sigmaSetupA[\SOMax]) \subseteq S \subseteq \SOMax$.
	Thus observe that for every process $p \in \SOMax$
		and every set $S \subseteq \mathcal{P}$ that contains $p$,
		if $\sigmaSetupA[S] \neq \bot$, then 
		$\textit{state}_p(\sigmaSetupA[S]) = \textit{state}_p(\sigmaSetupA[\SOMax])$
		(Invariant~\ref{RME-invar:subsetStatesSame2}).
	
	Since we have already proven that Invariants~\ref{RME-invar:PsubsetS},
		\ref{RME-invar:uniqueSMax}, and \ref{RME-invar:subsetStatesSame2} hold
		for $\sigmaSetupA[0..2^n-1]$,
		it immediately follows that Invariant~\ref{RME-invar:FSame} also holds.
		
	Since $\sigmaOld[0..2^n-1]$ is $(i-1)$-compliant,
		Invariant~\ref{RME-invar:crashes} holds for $\sigmaOld[0..2^n-1]$.
	For every set $S \subseteq \mathcal{P}$ such that $\sigmaOld[S] \neq \bot$,
		$\sigma_S$ contains no crash steps.
	Thus Invariant~\ref{RME-invar:crashes} also holds for $\sigmaSetupA[0..2^n-1]$.
	
	Since $\sigmaOld[0..2^n-1]$ is $(i-1)$-compliant,
		Invariant~\ref{RME-invar:DSMSelfAccessOnly} holds for $\sigmaOld[0..2^n-1]$.
	By \autoref{thm:SetupPreInvars} (\ref{SPI:DSMSelfAccessOnly}),
		for every set $S \subseteq \mathcal{P}$ such that 
		$\sigmaOld[S] \neq \bot$,
		each register $R \in \mathcal{R}$ can only be 
		accessed by its owner during $E(C_S,\sigma_S)$.
	Thus Invariant~\ref{RME-invar:DSMSelfAccessOnly} also holds
		for $\sigmaSetupA[0..2^n-1]$.
	
	Since $\sigmaOld[0..2^n-1]$ is $(i-1)$-compliant,
		Invariant~\ref{RME-invar:ValidCCSame} holds for $\sigmaOld[0..2^n-1]$.
	In the CC model, by \autoref{thm:SetupPreInvars} (\ref{SPI:CCReadOnly}),
		for every set $S \subseteq \mathcal{P}$ such that 
		$\sigmaOld[S] \neq \bot$,
		each register $R \in \mathcal{R}$ can only be 
		read during $E(C_S,\sigma_S)$.
	Thus no valid cache copy can be invalidated during $E(C_S,\sigma_S)$.
	Furthermore, by \autoref{thm:SetupPreInvars} (\ref{SPI:CCReadOnly}),
		during $E(C_S,\sigma_S)$, each process $p$
		can only read registers that it already has valid cache copies of.
	Thus no new cache copies can be created during $E(C_S,\sigma_S)$.
	Consequently, Invariant~\ref{RME-invar:ValidCCSame} 
		also holds for $\sigmaSetupA[0..2^n-1]$.
	
	Since $\sigmaOld[0..2^n-1]$ is $(i-1)$-compliant,
		Invariant~\ref{RME-invar:iRMRs} holds for $\sigmaOld[0..2^n-1]$.
	Then, since no steps are removed in the construction of the schedules
		for $\sigmaSetupA[0..2^n-1]$, clearly
		Invariant~\ref{RME-invar:iRMRs} also holds for $\sigmaSetupA[0..2^n-1]$.
		
	We will now prove that Invariant~\ref{RME-invar:subsetRegsSimilar}
		holds for $\sigmaSetupA[0..2^n-1]$ as follows.
	Let $R \in \mathcal{R}$ be any register.
	Our goal is to show that there exists a value $y_R$ such that
		for every set $S \subseteq \mathcal{P}$,
		if $\sigmaSetupA[S] \neq \bot$, then:
	\begin{displaymath}
		\textit{val}_R\bparen{\sigmaSetupA[S]}=
		\begin{cases}
				\textit{val}_R(\sigmaSetupA[\SOMax]) & 
					\text{if $\textit{last}_R(\sigmaSetupA[\SOMax]) \in S$} \\
				y_R & \text{otherwise}
		\end{cases}
	\end{displaymath}
	
	Note that since $\sigmaOld[0..2^n-1]$ is $(i-1)$-compliant
		with $\SMax(\sigmaOld[0..2^n-1]) = \SOMax$,
		by Invariant~\ref{RME-invar:subsetRegsSimilar},
		there is a value $y_R$ such that
		for every set $S \subseteq \mathcal{P}$,
		if $\sigmaOld[S] \neq \bot$, then:
	\begin{displaymath}
		\textit{val}_R\bparen{\sigmaOld[S]}=
		\begin{cases}
				\textit{val}_R(\sigmaOld[\SOMax]) & 
					\text{if $\textit{last}_R(\sigmaOld[\SOMax]) \in S$} \\
				y_R & \text{otherwise}
		\end{cases}
	\end{displaymath}
	
	First, suppose that $R$ is not accessed during $E(C_S,\sigma_S)$ 
		for every set $S \subseteq \mathcal{P}$ such that $\sigmaOld[S] \neq \bot$. 
	Then since $\sigmaOld[S] = \bot$ if and only if $\sigmaSetupA[S] = \bot$,
		and $R$ is not accessed during $E(C_{\SOMax},\sigma_{\SOMax})$,
		$\textit{last}_R(\sigmaOld[\SOMax]) = \textit{last}_R(\sigmaSetupA[\SOMax])$.
	Thus as we wanted, for every set $S \subseteq \mathcal{P}$,
		if $\sigmaSetupA[S] \neq \bot$, then:
	\begin{displaymath}
		\textit{val}_R\bparen{\sigmaSetupA[S]}=
		\begin{cases}
				\textit{val}_R(\sigmaSetupA[\SOMax]) & 
					\text{if $\textit{last}_R(\sigmaSetupA[\SOMax]) \in S$} \\
				y_R & \text{otherwise}
		\end{cases}
	\end{displaymath}
			
	So suppose instead that there exists a set $S' \subseteq \mathcal{P}$ 
		such that $\sigmaOld[S'] \neq \bot$,
		and a process $p \in S' \setminus F(\sigmaOld[\SOMax])$ 
		that accesses $R$ during $E(C_{S'},\sigma_{S'})$.
	The proof now differs depending on the model.
	
	In the DSM model, by \autoref{thm:SetupPreInvars} (\ref{SPI:DSMSelfAccessOnly}),
		$p$ must be the owner of $R$.
	Then since $\sigmaOld[0..2^n-1]$ is $(i-1)$-compliant,
		by Invariant~\ref{RME-invar:DSMSelfAccessOnly},
		either $\textit{last}_R(\sigmaOld[\SOMax]) = p$
		or $\textit{val}_R(\sigmaOld[\SOMax]) = y_R$.
	Furthermore, by \autoref{thm:SetupPreInvars} 
		(\ref{SPI:DSMSelfAccessOnly} and \ref{SPI:DSMRegsSame}),
		for every set $S \subseteq \mathcal{P}$ such that
		$\sigmaOld[S] \neq \bot$,
		if $p \in S$, then
		$\textit{val}_R(C_S,\sigma_S) = \textit{val}_R(C_p,\sigma_p)$;
		otherwise $\textit{val}_R(C_S,\sigma_S) = y_R$.
	Thus as we wanted, for every set $S \subseteq \mathcal{P}$,
		if $\sigmaSetupA[S] \neq \bot$, then:
	\begin{displaymath}
		\textit{val}_R\bparen{\sigmaSetupA[S]} =
		\begin{cases}
				\textit{val}_R(\sigmaSetupA[\SOMax]) & 
					\text{if $\textit{last}_R(\sigmaSetupA[\SOMax]) \in S$} \\
				y_R & \text{otherwise}
		\end{cases}
	\end{displaymath}
	
	Finally, in the CC model, by \autoref{thm:SetupPreInvars} (\ref{SPI:CCNoNewCaches}),
		$p$ already has a valid cache copy of $R$ in $C_{S'}$.
	So since $\sigmaOld[0..2^n-1]$ is $(i-1)$-compliant,
		by Invariant~\ref{RME-invar:ValidCCSame},
		for every set $S \subseteq \mathcal{P}$
		such that $\sigmaOld[S] \neq \bot$ and $p \in S$,
		$\textit{val}_R(\sigmaOld[S]) = \textit{val}_R(\sigmaOld[S'])$.
	Thus either $\textit{last}_R(\sigmaOld[\SOMax]) = p$
		or $\textit{val}_R(\sigmaOld[\SOMax]) = y_R$.
	Note that if any process other than $p$ also accesses $R$ during $E(C_S,\sigma_S)$,
		then $\textit{val}_R(\sigmaOld[\SOMax]) = y_R$.
	Furthermore, by \autoref{thm:SetupPreInvars} (\ref{SPI:CCReadOnly}),
		for every set $S \subseteq \mathcal{P}$
		such that $\sigmaOld[S] \neq \bot$,
		$\textit{val}_R(\sigmaOld[S]) = \textit{val}_R(\sigmaSetupA[S])$.
	Thus observe that as we wanted, for every set $S \subseteq \mathcal{P}$,
		if $\sigmaSetupA[S] \neq \bot$, then:
	\begin{displaymath}
		\textit{val}_R\bparen{\sigmaSetupA[S]} =
		\begin{cases}
				\textit{val}_R(\sigmaSetupA[\SOMax]) & 
					\text{if $\textit{last}_R(\sigmaSetupA[\SOMax]) \in S$} \\
				y_R & \text{otherwise}
		\end{cases}
	\end{displaymath}
	Consequently we have proven that Invariant~\ref{RME-invar:subsetRegsSimilar}
		holds for $\sigmaSetupA[0..2^n-1]$.
\end{proof}

We now construct another array $\sigmaSetupB[0..2^n-1]$ 
	with the goal of satisfying Invariant~\ref{RME-invar:CS} as follows.
If no process is within the critical section at the end of $E(\sigmaSetupA[\SOMax])$,
	we simply construct $\sigmaSetupB[0..2^n-1]$ such that
	$\sigmaSetupB[0..2^n-1] = \sigmaSetupA[0..2^n-1]$.
Furthermore, we define $\SSMax = \SOMax$.
  
Otherwise, to avoid violating mutual exclusion,
	there must be exactly one process 
	$p \in \SOMax \setminus F(\sigmaSetupA[\SOMax])$ such that 
	at the end of $E(\sigmaSetupA[\SOMax])$,
	$p$ is within the critical section.
Then note that by \autoref{thm:SetupAInvars} and Invariant~\ref{RME-invar:subsetStatesSame2},
	for every set $S \subseteq \mathcal{P}$ such that $\sigmaSetupA[S] \neq \bot$,
	if $p \in S$ then $p$ is also within the critical section at the end of $E(\sigmaSetupA[S])$;
	otherwise no process is within the critical section at the end of $E(\sigmaSetupA[S])$.
Thus we construct $\sigmaSetupB[0..2^n-1]$ such that
	for every set $S \subseteq \mathcal{P}$,
	if $p \in S$, then $\sigmaSetupB[S] = \bot$;
	otherwise $\sigmaSetupB[S] = \sigmaSetupA[S]$.
Furthermore, we define $\SSMax = \SOMax \setminus \{p\}$.
	
\begin{lem}
	\label{thm:SetupBInvars}
	This new array $\sigmaSetupB[0..2^n-1]$ is $(i-1)$-compliant
		with $\SMax(\sigmaSetupB[0..2^n-1]) = \SSMax$.
\end{lem}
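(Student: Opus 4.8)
The plan is to verify all ten invariants together with the identity $\SMax(\sigmaSetupB[0..2^n-1]) = \SSMax$ by leaning almost entirely on \autoref{thm:SetupAInvars}. The structural observation that makes this cheap is that in both construction cases every surviving entry of $\sigmaSetupB$ is literally the corresponding entry of $\sigmaSetupA$ (i.e.\ $\sigmaSetupB[S] = \sigmaSetupA[S]$ whenever $\sigmaSetupB[S] \neq \bot$), and the surviving columns are exactly $\{S : F(\sigmaSetupA[\SOMax]) \subseteq S \subseteq \SSMax\}$. Consequently the invariants that quantify only over surviving schedules and their own processes---namely \ref{RME-invar:PsubsetS}, \ref{RME-invar:crashes}, and \ref{RME-invar:iRMRs}---carry over verbatim, since $F(\sigmaSetupB[S]) = F(\sigmaSetupA[S])$ for surviving $S$. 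The real work then splits into three parts: (a) establishing \ref{RME-invar:uniqueSMax} with the new maximal set $\SSMax$, which simultaneously yields $\SMax(\sigmaSetupB[0..2^n-1]) = \SSMax$ and (by \autoref{thm:SetupAInvars}) \ref{RME-invar:FSame}; (b) re-anchoring the $\SMax$-indexed invariants \ref{RME-invar:subsetStatesSame2}, \ref{RME-invar:DSMSelfAccessOnly}, \ref{RME-invar:ValidCCSame}, and especially \ref{RME-invar:subsetRegsSimilar} from $\SOMax$ to $\SSMax$; and (c) proving the genuinely new invariant \ref{RME-invar:CS}.

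In Case~1 ($\sigmaSetupB = \sigmaSetupA$, $\SSMax = \SOMax$) everything except \ref{RME-invar:CS} is immediate from \autoref{thm:SetupAInvars}, so I would only argue \ref{RME-invar:CS}: by \autoref{thm:SetupPreInvars}~(\ref{SPI:noRMRs}) no RMR is incurred during any extension $\sigma_S$, so by Assumption~\ref{RME-assumption:RMR-in-CS} no process can \emph{occupy} the critical section during $\sigma_S$; combined with \ref{RME-invar:CS} for $\sigmaOld$ and the Case~1 hypothesis that no process sits in the critical section at the end of $E(\sigmaSetupA[\SOMax])$ (transferred to every surviving column via \ref{RME-invar:subsetStatesSame2}), no process outside $F$ ever enters. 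In Case~2, since $p \in \SOMax \setminus F(\sigmaSetupA[\SOMax])$, deleting the $p$-columns leaves precisely $\{S : F(\sigmaSetupA[\SOMax]) \subseteq S \subseteq \SSMax\}$ non-$\bot$, and $\sigmaSetupB[\SSMax] = \sigmaSetupA[\SSMax]$ with $F(\sigmaSetupB[\SSMax]) = F(\sigmaSetupA[\SOMax])$ by \ref{RME-invar:FSame}; this is exactly \ref{RME-invar:uniqueSMax} for $\SSMax$. Invariants \ref{RME-invar:subsetStatesSame2}, \ref{RME-invar:DSMSelfAccessOnly}, and \ref{RME-invar:ValidCCSame} then transfer because $\SSMax \setminus F(\sigmaSetupB[\SSMax]) \subseteq \SOMax \setminus F(\sigmaSetupA[\SOMax])$ and, for $q \in \SSMax$, $\textit{state}_q(\sigmaSetupA[\SSMax]) = \textit{state}_q(\sigmaSetupA[\SOMax])$; thus the witnesses furnished by \autoref{thm:SetupAInvars} restrict to valid witnesses for $\sigmaSetupB$.

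For \ref{RME-invar:CS} in Case~2 I would use the construction's observation that $p$ is the \emph{unique} process in the critical section at the end of $E(\sigmaSetupA[\SOMax])$ and that, by \ref{RME-invar:subsetStatesSame2}, no process is in the critical section at the end of $E(\sigmaSetupA[S])$ whenever $p \notin S$. For each surviving column $S$ (where $p \notin S$) I then repeat the Case~1 reasoning: no process outside $F(\sigmaSetupA[\SOMax])$ enters the critical section during $E(\sigmaOld[S])$ by \ref{RME-invar:CS} for $\sigmaOld$; none can occupy it during the RMR-free extension $\sigma_S$ by Assumption~\ref{RME-assumption:RMR-in-CS} and \autoref{thm:SetupPreInvars}~(\ref{SPI:noRMRs}); and none sits in it at the end. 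Hence no process of $S \setminus F(\sigmaSetupB[S])$ ever enters the critical section during $E(\sigmaSetupB[S])$.

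The hard part will be \ref{RME-invar:subsetRegsSimilar}, because moving the anchor from $\SOMax$ to $\SSMax$ changes the reference last-accessor. Writing $\ell = \textit{last}_R(\sigmaSetupA[\SOMax])$ and letting $y_R$ be the value supplied by \autoref{thm:SetupAInvars}, I would set $y_R' = y_R$ and split on $\ell$. When $\ell = p$ (the deleted process), every surviving column has $\ell \notin S$, so $\textit{val}_R(\sigmaSetupB[S]) = y_R$ for all surviving $S$; since also $\textit{val}_R(\sigmaSetupB[\SSMax]) = y_R$, both branches of the required formula collapse to $y_R$ and the identity of $\textit{last}_R(\sigmaSetupB[\SSMax])$ is irrelevant, and the same collapse is immediate when $\ell = \bot$ or $\textit{val}_R(\sigmaSetupA[\SOMax]) = y_R$. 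The delicate remaining situation is $\ell \neq p$ with $\textit{val}_R(\sigmaSetupA[\SOMax]) \neq y_R$, where I must show $\textit{last}_R(\sigmaSetupB[\SSMax]) = \ell$---that is, that dropping $p$ does not change the last accessor of $R$ when $p$ is not that accessor. Since $p$ takes no step in $E(\sigmaSetupB[\SSMax])$ by \ref{RME-invar:PsubsetS}, I expect to prove this through the model-specific access locality already isolated in \autoref{thm:SetupPreInvars}: in the DSM model every access to $R$ during the extension is by its owner (\ref{SPI:DSMSelfAccessOnly}), and in the CC model the extension performs only reads of already-cached registers (\ref{SPI:CCReadOnly}, \ref{SPI:CCNoNewCaches}), so removing $p$ cannot insert or delete an access to $R$ by any surviving process relative to $\ell$'s last access; the $\sigmaSetupA$ value dichotomy evaluated at $\SSMax$ and at $\SSMax \setminus \{\ell\}$ then confirms that $\ell$'s presence is exactly what gives $R$ the value $\textit{val}_R(\sigmaSetupA[\SOMax])$ rather than $y_R$, forcing $\ell = \textit{last}_R(\sigmaSetupB[\SSMax])$. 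Once $\textit{last}_R(\sigmaSetupB[\SSMax]) = \ell$ and $\textit{val}_R(\sigmaSetupB[\SSMax]) = \textit{val}_R(\sigmaSetupA[\SOMax])$ are in hand, the $\sigmaSetupA$ formula restricted to the $p$-free columns is, verbatim, the $\sigmaSetupB$ formula anchored at $\SSMax$, completing the invariant.
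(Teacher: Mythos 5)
Your proposal takes essentially the same route as the paper's own proof: the same two-case split on whether some process is in the critical section at the end of $E(\sigmaSetupA[\SOMax])$, the same column-restriction argument in Case~2, and the same derivation of Invariant~\ref{RME-invar:CS} from the fact that a process entering the critical section during the RMR-free extension cannot leave it before incurring an RMR (Assumption~\ref{RME-assumption:RMR-in-CS} combined with \ref{SPI:noRMRs}; the paper packages exactly this reasoning as \ref{SPI:notLeftCS}). Your transfer arguments for all invariants other than \ref{RME-invar:subsetRegsSimilar} either match what the paper does or spell out what the paper compresses into a single ``observe.''

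The one place you go beyond the paper is Invariant~\ref{RME-invar:subsetRegsSimilar}, and your instinct there is right but your closing step overreaches. You correctly identify that re-anchoring from $\SOMax$ to $\SSMax$ is not a purely formal consequence of \autoref{thm:SetupAInvars}: in the non-degenerate case one must show $\textit{last}_R(\sigmaSetupB[\SSMax]) = \textit{last}_R(\sigmaSetupA[\SOMax])$ (more precisely, only when that last accessor is active, i.e.\ outside $F(\sigmaSetupA[\SOMax])$; if it lies in $F$, every surviving column contains it and the formula again collapses, so your case split is slightly coarser than needed). But the justification you sketch cannot close this: \ref{SPI:DSMSelfAccessOnly}, \ref{SPI:CCReadOnly}, and \ref{SPI:CCNoNewCaches} constrain only the steps taken during the setup extensions $\sigma_S$, whereas the last access to $R$ may have occurred inside $E(\sigmaOld[\cdot])$, and the invariants for $\sigmaOld$ say nothing about $\textit{last}_R$ in non-maximal columns---they constrain values only, and value reasoning cannot pin down last accessors, because a read (in the CC model, on a cached register) changes $\textit{last}_R$ without changing $\textit{val}_R$. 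What actually underwrites the identity is a coherence property of the whole inductive construction---each process performs the same operations, in the same relative order, in every column that contains it---which the paper neither states as an invariant nor proves; its own proof of this lemma hides the entire issue behind ``observe that \dots must all also hold.'' So your proposal is, if anything, more explicit than the paper about where the difficulty sits, but the delicate case as you argue it is not closable from \autoref{thm:SetupAInvars} and \autoref{thm:SetupPreInvars} alone.
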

 
\toOmit{

	\begin{proof}
		By \autoref{thm:SetupAInvars},
			except for Invariant~\ref{RME-invar:CS},
			$\sigmaSetupA[0..2^n-1]$ is $(i-1)$-compliant
			with $\SMax(\sigmaSetupB[0..2^n-1]) = \SOMax$. 
			
		If no process is within the critical section at the end of $E(\sigmaSetupA[\SOMax])$,
			then $\sigmaSetupB[0..2^n-1] = \sigmaSetupA[0..2^n-1]$.
		Thus by \autoref{thm:SetupAInvars}
			and \autoref{thm:SetupPreInvars} (\ref{SPI:notLeftCS}),
			Invariant~\ref{RME-invar:CS} also holds for $\sigmaSetupA[0..2^n-1]$,
			and so it follows that $\sigmaSetupB[0..2^n-1] = \sigmaSetupA[0..2^n-1]$ 
			is $(i-1)$-compliant with $\SMax(\sigmaSetupB[0..2^n-1]) = \SSMax = \SOMax$.
			
		Otherwise, there is exactly one process 
			$p \in \SOMax \setminus F(\sigmaSetupA[\SOMax])$ such that  
			$p$ is within the critical section at the end of $E(\sigmaSetupA[\SOMax])$,
			and for every set $S \subseteq \mathcal{P}$,
			if $p \in S$, then $\sigmaSetupB[S] = \bot$;
			otherwise $\sigmaSetupB[S] = \sigmaSetupA[S]$.
		By \autoref{thm:SetupAInvars},
			except for Invariant~\ref{RME-invar:CS},
			$\sigmaSetupA[0..2^n-1]$ is $(i-1)$-compliant
			with $\SMax(\sigmaSetupA[0..2^n-1]) = \SOMax$. 
		Thus observe that by the construction of $\sigmaSetupB[0..2^n-1]$,
			Invariants~\ref{RME-invar:PsubsetS}, \ref{RME-invar:uniqueSMax},
			\ref{RME-invar:subsetStatesSame2}, \ref{RME-invar:FSame},
			\ref{RME-invar:subsetRegsSimilar}, \ref{RME-invar:crashes},
			\ref{RME-invar:DSMSelfAccessOnly}, \ref{RME-invar:ValidCCSame},
			\ref{RME-invar:iRMRs} must all also hold for $\sigmaSetupB[0..2^n-1]$
			with $\SMax(\sigmaSetupB[0..2^n-1]) = \SSMax
			= \SOMax \setminus \{p\}$.
		
		By the construction of $\sigmaSetupB[0..2^n-1]$,
			for every set $S \subseteq \mathcal{P}$,
			if $p \in S$, then $\sigmaSetupB[S] = \bot$.
		Thus for every set $S \subseteq \mathcal{P}$,
			if $\sigmaSetupB[S] \neq \bot$,
			then $p \not\in S$.
		Consequently, for every set $S \subseteq \mathcal{P}$ 
			such that $\sigmaSetupB[S] \neq \bot$,
			no process is within the critical section at the end of 
			$E(\sigmaSetupA[S]) = E(\sigmaSetupB[S])$.
		Therefore by \autoref{thm:SetupPreInvars} (\ref{SPI:notLeftCS}), 
			Invariant~\ref{RME-invar:CS} holds for $\sigmaSetupB[0..2^n-1]$,
			and thus $\sigmaSetupB[0..2^n-1]$ is $(i-1)$-compliant
			with $\SMax(\sigmaSetupB[0..2^n-1]) = \SSMax$.
	\end{proof}
}

\paragraph{$i$-th Iteration (Decision Phase):}

Then for each register $R \in \mathcal{R}$,
	let $B_R$ be the set of processes poised to access $R$
	at the end of $E(\sigmaSetupB[\SSMax])$.
(So $B_R \cap F(\sigmaSetupB[\SSMax]) = \emptyset$.)
Further, let
\begin{displaymath}
	H=\bigcup_{R\in\mathcal{R}\atop |B_{R}|\geq k} B_R.
\end{displaymath}
and
\begin{displaymath}
	L=\SSMax \setminus \bparen{H \cup F(\sigmaSetupB[\SSMax])}.
\end{displaymath}


\paragraph{$i$-th Iteration (Low Contention Phase if $|L| \geq |H|$):} 

We begin by constructing an undirected graph 
	where the processes in $L$ are the nodes,
	and for every pair of nodes $p$ and $q$,
	we connect an edge between $p$ and $q$
	if and only if at least one of the following is true 
	at the end of $E(\sigmaSetupB[\SSMax])$:
\begin{itemize}
	\item $p$ and $q$ are poised to access the same register.
	
	\item $p$ is poised to access a register owned by $q$, 
		or vice versa. 
		
	\item $p$ is poised to access a register
		that $q$ has previously performed an operation on, 
		or vice versa.
\end{itemize} 

Thus in this graph:
\begin{itemize}
	\item Since $|B_R| < k$ for every register $R \in \mathcal{R}$
		that processes in $L$ are poised to access,
		there are at most $k|L|$ edges representing 
		processes that are poised to access the same register.
		
	\item Since every register $R \in \mathcal{R}$ 
		is owned by at most one process,
		there are at most $|L|$ edges representing
		processes that are poised to access a register 
		owned by some process in $L$
		(one edge for each process, 
		connecting it to the owner of the register 
		it is poised to access).	
		
	\item By Assumption~\ref{RME-assumption:logNRMRs},
		every passage incurs at most $\log n$ RMRs.
	By \autoref{thm:SetupBInvars},
		$\sigmaSetupB[0..2^n-1]$ is $(i-1)$-compliant.
	Recall that $L \cap F(\sigmaSetupB[\SSMax]) = \varnothing$.
	So by Invariant~\ref{RME-invar:crashes},
		each process in $L$ has never crashed,
		and so has started at most one passage.
	Thus each process in $L$ has performed operations 
		on at most $\log n$ registers 
		that are not owned by itself.
	(Note that registers owned by itself are excluded because 
		their associated edges would have already been added in the previous step.)
	Then since $|B_R| < k$ for every register $R \in \mathcal{R}$
		that processes in $L$ are poised to access,
		there are at most $k |L| \log n$ edges representing 
		processes that are poised to access a register 
		that some process in $L$ 
		has previously performed an operation on
		($k$ edges from each of the $\log n$ registers
		previously accessed by each process in $L$).
\end{itemize} 
Hence, the total number of edges is at most 
	$k|L| + |L| + k |L| \log n
	< 3 k |L| \log n$.
	
Let $I$ be the maximum independent set of the graph.	
	
\begin{lem}
	\label{thm:Isize}
	$|I| \geq |L|/(7 k \log n)$.
\end{lem}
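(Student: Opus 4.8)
The statement to prove is that the maximum independent set $I$ of the low-contention graph satisfies $|I| \geq |L|/(7k\log n)$. The plan is to invoke the standard greedy bound for independent sets: any graph on $N$ nodes with $M$ edges has an independent set of size at least $N/(d+1)$, where $d = 2M/N$ is the average degree. I have already established in the excerpt that the total number of edges is at most $3k|L|\log n$, and the number of nodes is $|L|$, so the average degree is at most $6k\log n$.

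First I would state the classical fact (Tur\'an-type / greedy argument) that a graph with $N$ nodes and $M$ edges has an independent set of size at least $N^2/(N+2M)$, or equivalently at least $N/(d+1)$ for average degree $d$. This follows from the greedy procedure that repeatedly removes a minimum-degree vertex together with its neighbors, or from a direct probabilistic deletion argument; either way it is a textbook result I would simply cite or restate.

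Next I would plug in the bounds from the edge count. With $N = |L|$ nodes and $M < 3k|L|\log n$ edges, I get
\begin{displaymath}
	|I| \geq \frac{|L|^2}{|L| + 2 \cdot 3k|L|\log n}
		= \frac{|L|}{1 + 6k\log n}.
\end{displaymath}
The final step is the routine simplification $1 + 6k\log n \leq 7k\log n$, which holds whenever $k\log n \geq 1$; since $k = \log^d n$ and $n$ is assumed sufficiently large by Assumption~\ref{RME-assumption:logNRMRs}, this inequality is satisfied. Hence $|I| \geq |L|/(7k\log n)$, as claimed.

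I do not anticipate any real obstacle here: the substantive work—bounding the three categories of edges using $|B_R| < k$, the single-owner property of registers, and the $\log n$-RMR bound per passage from Assumption~\ref{RME-assumption:logNRMRs} together with Invariant~\ref{RME-invar:crashes}—has already been completed in the paragraphs preceding the lemma. The only care needed is to make sure the independent-set bound is stated in a form compatible with the ``average degree'' versus ``total edges'' phrasing, and to confirm the constant $7$ absorbs the additive $1$; both are immediate for large $n$.
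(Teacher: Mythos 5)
Your proposal is correct and matches the paper's own proof essentially verbatim: the paper likewise notes that the edge bound $3k|L|\log n$ gives average degree at most $6k\log n$ and then invokes Tur\'an's Theorem, with the constant $7$ absorbing the $+1$ exactly as you argue. No gaps.
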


\begin{proof}
	Since there are at most $3 k |L| \log n$ edges,
		the average degree of the graph is at most 
		$6 k \log n$.
	The lemma immediately follows by Turan's Theorem.
\end{proof}	

Let $S_I = F(\sigmaSetupB[\SSMax]) \cup I$.
Note that since $I \subseteq L$,
	$F(\sigmaSetupB[\SSMax]) \subseteq S_I \subseteq \SSMax$.
We now construct a new array $\sigmaLowA[0..2^n-1]$ such that
	for every set $S \subseteq \mathcal{P}$,
	if $S \not\subseteq S_I$, then $\sigmaLowA[S] = \bot$;
	otherwise $\sigmaLowA[S] = \sigmaSetupB[S]$.

\begin{lem}
	\label{thm:LowAInvars}
	This new array $\sigmaLowA[0..2^n-1]$ is $(i-1)$-compliant
		with $\SMax(\sigmaLowA[0..2^n-1]) = S_I$.
\end{lem}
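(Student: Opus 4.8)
The plan is to verify each of the ten invariants for $\sigmaLowA[0..2^n-1]$, exploiting the fact that the array is obtained from $\sigmaSetupB[0..2^n-1]$ purely by \emph{restriction}: every surviving entry equals the corresponding $\sigmaSetupB$ entry, and we only set additional entries to $\bot$ (namely those indexed by sets $S\not\subseteq S_I$). Because no schedule is altered and no new steps are appended, every invariant that is a ``pointwise'' statement about an individual surviving schedule --- Invariants~\ref{RME-invar:crashes}, \ref{RME-invar:CS}, and \ref{RME-invar:iRMRs} --- transfers immediately from \autoref{thm:SetupBInvars}, since the execution $E(\sigmaLowA[S])=E(\sigmaSetupB[S])$ is unchanged whenever $\sigmaLowA[S]\neq\bot$.

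The first key step is to pin down the new maximal set and the support of the array. I would argue that for every $S\subseteq\mathcal{P}$, $\sigmaLowA[S]\neq\bot$ if and only if $F(\sigmaSetupB[\SSMax])\subseteq S\subseteq S_I$. For the forward direction, $\sigmaLowA[S]\neq\bot$ forces both $S\subseteq S_I$ (by construction) and $\sigmaSetupB[S]\neq\bot$, the latter giving $F(\sigmaSetupB[\SSMax])\subseteq S\subseteq\SSMax$ via Invariant~\ref{RME-invar:uniqueSMax} for $\sigmaSetupB$; conversely, any such $S$ satisfies $S\subseteq S_I\subseteq\SSMax$, so $\sigmaSetupB[S]\neq\bot$ and hence $\sigmaLowA[S]=\sigmaSetupB[S]\neq\bot$. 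Here I use that $F(\sigmaSetupB[\SSMax])\subseteq S_I$, which holds because $S_I=F(\sigmaSetupB[\SSMax])\cup I$. This establishes Invariant~\ref{RME-invar:uniqueSMax} with witness set $S_I$, so $\SMax(\sigmaLowA[0..2^n-1])=S_I$; and since every surviving $S$ has $P(\sigmaSetupB[S])\subseteq S$, Invariant~\ref{RME-invar:PsubsetS} follows. I also need $F(\sigmaLowA[S])=F(\sigmaSetupB[S])=F(\sigmaSetupB[\SSMax])$, giving Invariant~\ref{RME-invar:FSame}; note $F(\sigmaLowA[S_I])=F(\sigmaSetupB[\SSMax])$ because $S_I$ contains no newly finished processes beyond those in $F(\sigmaSetupB[\SSMax])$.

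The remaining invariants compare surviving schedules against the \emph{new} reference schedule $\sigmaLowA[S_I]$, so the subtlety is that the reference set has changed from $\SSMax$ to $S_I$. For Invariant~\ref{RME-invar:subsetStatesSame2}, I would observe that for $p\in S_I$ and any surviving $S\ni p$, both $\textit{state}_p(\sigmaSetupB[S])$ and $\textit{state}_p(\sigmaSetupB[S_I])$ equal $\textit{state}_p(\sigmaSetupB[\SSMax])$ by Invariant~\ref{RME-invar:subsetStatesSame2} for $\sigmaSetupB$ (valid since $S,S_I\subseteq\SSMax$ both contain $p$), hence they agree. The analogous restriction argument handles Invariant~\ref{RME-invar:DSMSelfAccessOnly} (which is a condition on processes $p\in S_I\setminus F$, a subset of $\SSMax\setminus F$) and Invariant~\ref{RME-invar:ValidCCSame} (the cache sets $\mathcal{R}_p$ from $\sigmaSetupB$ serve unchanged for every $p\in S_I\setminus F$). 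The step I expect to require the most care is Invariant~\ref{RME-invar:subsetRegsSimilar}: the reference value $\textit{val}_R(\sigmaSetupB[\SSMax])$ and the identity $\textit{last}_R(\sigmaSetupB[\SSMax])$ may refer to a process in $\SSMax\setminus S_I$ that no longer appears in any surviving schedule. The resolution is that if $\textit{last}_R(\sigmaSetupB[\SSMax])\notin S_I$, then no surviving $S\subseteq S_I$ contains it, so by Invariant~\ref{RME-invar:subsetRegsSimilar} for $\sigmaSetupB$ every surviving schedule already has $\textit{val}_R=y_R$; I can then take $\textit{last}_R(\sigmaLowA[S_I])$ to be the last process in $S_I$ that accessed $R$ (or $\bot$) and keep the same $y_R$, verifying the case split against the new reference. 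When $\textit{last}_R(\sigmaSetupB[\SSMax])\in S_I$ the old reference and $\textit{last}_R$ carry over verbatim. This case analysis, rather than any computation, is the heart of the argument.
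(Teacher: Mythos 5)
Your proposal is correct and takes essentially the same approach as the paper: the paper's own proof of \autoref{thm:LowAInvars} is just a terse observation that $\sigmaLowA[0..2^n-1]$ is obtained from $\sigmaSetupB[0..2^n-1]$ by setting to $\bot$ every entry whose index set meets $\SSMax \setminus S_I$, where $F(\sigmaSetupB[\SSMax]) \subseteq S_I \subseteq \SSMax$, and that ``every invariant still holds'' with $\SMax(\sigmaLowA[0..2^n-1]) = S_I$. Your invariant-by-invariant verification (the support characterization via Invariant~\ref{RME-invar:uniqueSMax}, the pointwise transfer of Invariants~\ref{RME-invar:crashes}, \ref{RME-invar:CS}, \ref{RME-invar:iRMRs}, the restriction arguments for Invariants~\ref{RME-invar:subsetStatesSame2}, \ref{RME-invar:DSMSelfAccessOnly}, \ref{RME-invar:ValidCCSame}, and the case split for Invariant~\ref{RME-invar:subsetRegsSimilar}) supplies exactly the detail the paper leaves implicit; the one point where you are no more rigorous than the paper is the tacit identification of $\textit{last}_R(\sigmaLowA[S_I])$ with $\textit{last}_R(\sigmaSetupB[\SSMax])$ when the latter lies in $S_I$ --- a property of the construction that the paper's proof likewise takes for granted rather than deriving from the stated invariants.
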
 

\toOmit{

	\begin{proof}
		By \autoref{thm:SetupBInvars},
			$\sigmaSetupB[0..2^n-1]$ is $(i-1)$-compliant
			with $\SMax(\sigmaSetupB[0..2^n-1]) = \SSMax$.
		By construction, $\sigmaLowA[0..2^n-1]$ is simply 
			a modification of $\sigmaSetupB[0..2^n-1]$
			where every set $S \subseteq \mathcal{P}$ 
			that contains any process in $\SSMax \setminus S_I$
			has had $\sigmaLowA[S]$ set to $\bot$,
			where $F(\sigmaSetupB[\SSMax]) \subseteq S_I \subseteq \SSMax$.
		It suffices to observe that every invariant still holds
			with $\SMax(\sigmaLowA[0..2^n-1]) = S_I$,
			and thus $\sigmaLowA[0..2^n-1]$ is $(i-1)$-compliant
			with $\SMax(\sigmaLowA[0..2^n-1]) = S_I$.
	\end{proof}
}

Now for each set $S \subseteq \mathcal{P}$ such that
	$\sigmaLowA[S] \neq \bot$,
	let $C'_S$ be the configuration at the end of $E(\sigmaLowA[S])$, and
	let $\sigma'_S$ be the schedule consisting of exactly one non-crash step
	by each process in $S \setminus F(\sigmaLowA[S_I])$
	in order from the process with the smallest ID
	to the process with the largest ID.
Note that $\sigma'_S$ is a finite schedule 
	since there are only $n$ processes in the system.
Also note that by \autoref{thm:LowAInvars},
	$\sigmaLowA[0..2^n-1]$ is $(i-1)$-compliant
	with $\SMax(\sigmaLowA[0..2^n-1]) = S_I$,
	so $\sigmaLowA[S] \neq \bot$ if and only if
	$F(\sigmaLowA[S_I]) \subseteq S \subseteq S_I$.
Thus $\sigma'_S$ contains exactly one non-crash step of each process in $I \cap S$
	and no other steps.

\begin{lem}
	\label{thm:LowPreInvars}
	For every set $S \subseteq \mathcal{P}$ such that 
		$\sigmaLowA[S] \neq \bot$:
	\begin{enumerate}[label=(L\arabic*)]
		\item For each register $R \in \mathcal{R}$,
			$R$ is accessed by at most one process $p \in I \cap S$ 
			during $E(C'_S,\sigma'_S)$ and no other processes.
			\label{LPI:IOneAccess}
			
		\item For each process $p \in I$,
			if $p$ accesses a register $R$ during $E(C'_S,\sigma'_S)$,
			then the owner of $R$ is not in $I \setminus \{p\}$.
			\label{LPI:DSMOwnerIsGone}
		
		\item For each process $p \in I$,
			if $p$ accesses a register $R$ during $E(C'_S,\sigma'_S)$,
			then $R$ has never been accessed by any process in $I \setminus \{p\}$ 
			during $E(\sigmaLowA[S])$.
			\label{LPI:CCPrevNotInIminusP}  
		
		\item For each process $p \in I$,
			during $E(C'_S,\sigma'_S)$,
			$p$ cannot invalidate any cache copy of any process in $I \setminus \{p\}$.
			\label{LPI:CCNoInvalidateIminusP}  
			
		\item For every process $p \in I$,
			if $p$ accesses a register $R$ during $E(C'_{S_I},\sigma'_{S_I})$,
			then there is a value $y_R$ such that for every set $S' \subseteq \mathcal{P}$,
			if $\sigmaLowA[S'] \neq \bot$, then:
		\begin{displaymath}
			\textit{val}_R\bparen{\sigmaLowA[S']}=
			\begin{cases}
					\textit{val}_R(\sigmaLowA[S_I]) & 
						\text{if $p \in S'$} \\
					y_R & \text{otherwise}
			\end{cases}
		\end{displaymath}
		\label{LPI:IRegsWereSimilar}
		Note that this implies that for each register $R \in \mathcal{R}$,
			if $R$ is accessed during $E(C'_S,\sigma'_S)$ then
			$\textit{val}_R(\sigmaLowA[S_I]) = 
			\textit{val}_R(\sigmaLowA[S])$. 
			
		\item For each register $R \in \mathcal{R}$,
			if $R$ is accessed during $E(C'_S,\sigma'_S)$ then
			$\textit{val}_R(C'_{S_I},\sigma'_{S_I}) = 
			\textit{val}_R(C'_S,\sigma'_S)$.
			\label{LPI:IRegsSimilar}
			
		\item For each process $p \in S$,
			$\textit{state}_p(C'_{S_I},\sigma'_{S_I}) =
			\textit{state}_p(C'_S,\sigma'_S)$.
			\label{LPI:statesSame}  
		
		\item Each process in $I \cap S$
			incurs exactly one RMR
			during $E(C'_S,\sigma'_S)$.
			\label{LPI:OneRMREach} 
		
		\item For each process $p \in S \setminus F(\sigmaLowA[S_I])$,
			$p$ has not left the critical section during $E(\sigmaLowA[S] \circ \sigma'_S)$.
			\label{LPI:notLeftCS} 
			
		\item $F(\sigmaLowA[S]) = F(\sigmaLowA[S] \circ \sigma'_S)$.
			\label{LPI:FUnchanged}   
	\end{enumerate}
\end{lem}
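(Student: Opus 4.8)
The plan is to verify (L1)--(L10) one at a time, in each case leaning on the single structural fact that defines $I$: since no edge of the graph joins two nodes of $I$, the one step that each process of $I \cap S$ takes in $\sigma'_S$ is mutually non-interfering, so the extension behaves like $|I \cap S|$ isolated single-process extensions. I would first dispatch (L1)--(L4), which are restatements of the three edge-absence conditions combined with Invariant~(I3) (a process has the same state in every schedule containing it). For (L1), $\sigma'_S$ contributes one step per process of $I \cap S$, and two such processes cannot be poised to access a common register without being adjacent, so each register is touched by at most one of them. (L2) is immediate from the ownership edge-condition. For (L3), if some $q \in (I \cap S) \setminus \{p\}$ accessed $R$ in $E(\sigmaLowA[S])$, then since $q$ is indistinguishable between column $S$ and the maximal column $\SSMax$ (Invariant~(I3) for $\sigmaSetupB$), it accessed $R$ in $E(\sigmaSetupB[\SSMax])$ too; with $p$ poised to access $R$ there, this is a ``previous-access'' edge, contradicting independence. (L4) then follows because a valid cache copy of $R$ presupposes having read $R$, which by (L3) no process of $I \setminus \{p\}$ has done.

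Next I would establish the value-agreement claims. For (L5) I would invoke Invariant~(I5) for the $(i-1)$-compliant array $\sigmaLowA$ (\autoref{thm:LowAInvars}), which already writes $\textit{val}_R(\sigmaLowA[S'])$ as a two-case function keyed on whether $\textit{last}_R(\sigmaLowA[S_I]) \in S'$; it then suffices to show the last accessor of $R$ is controlled by $p$. By (L3) no node of $I \setminus \{p\}$ has accessed $R$, so $\textit{last}_R(\sigmaLowA[S_I])$ is $p$, a process of $F(\sigmaLowA[S_I])$, or $\bot$. When it is $p$ the two formulas coincide; when it is an $F$-process or $\bot$, that accessor lies in every surviving column (or nobody wrote $R$), so both branches collapse to a single value and the per-$p$ dichotomy holds by choosing $y_R$ to be that value. (L6) then combines (L5)---so the pre-$\sigma'$ value of $R$ agrees between columns $S$ and $S_I$ whenever $p \in S$---with (L1)---so only $p$ accesses $R$ during $\sigma'_S$---and Invariant~(I3)---so $p$ applies the identical operation; because by (L1) the steps of $\sigma'_S$ hit pairwise-distinct registers, their order is irrelevant and $R$'s final value is a function of $p$'s operation alone.

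Finally I would handle the state and RMR bookkeeping. (L7) uses Invariant~(I3) for the pre-$\sigma'$ state of each $p \in S$ and, for the one step an $I \cap S$ process performs, uses (L5)/(L6) to argue the returned value---hence the resulting state---is the same in columns $S$ and $S_I$, with (L1) again ruling out interference. (L8) upgrades the setup-phase guarantee that $p$ is poised to incur an RMR (\autoref{thm:SetupPreInvars}, property~(S3)) to ``still an RMR in the presence of the other steps'': in the CC model (L4) ensures nobody invalidates $p$'s caches, and in the DSM model the poised register is non-local, a property of $p$'s state that the other steps do not alter; since $p$ takes a single step this is exactly one RMR. (L9) follows from Invariant~(I7) (a process outside $F$ has not entered the critical section during $E(\sigmaLowA[S])$) together with the fact that one further step cannot carry a process through the critical section and back out---by Assumption~(A1) the critical section costs at least one more RMR-incurring step. (L10) is then immediate: a process that never leaves the critical section cannot complete its super-passage, and $F$-processes take no steps in $\sigma'_S$, so $F$ is unchanged.

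The step I expect to require the most care is (L5): independence of $I$ is defined relative to the maximal schedule $\sigmaSetupB[\SSMax]$, whereas (L5) is a statement about $\textit{last}_R$ and register values in the restricted array $\sigmaLowA$, so the argument must first transport ``$q$ did not access $R$'' from the maximal column to column $S$ (via the indistinguishability underlying Invariant~(I3)), and then reconcile the two-case formula of Invariant~(I5), keyed on $\textit{last}_R$, with the two-case formula of (L5), keyed on membership of the unique accessor $p$. The delicate point is the bookkeeping when $R$ was last touched by an already-finished process or never touched at all, where both branches must be shown to collapse so that the desired per-$p$ dichotomy still holds. Everything else reduces to the recurring theme that independence turns the one-step-per-process extension into a collection of non-interfering single-process extensions.
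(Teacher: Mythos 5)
Your proposal is correct and follows essentially the same route as the paper's proof: independence of $I$ plus the state-agreement invariant (I3) transported from $\sigmaSetupB[\SSMax]$ to each column yields (L1)--(L4); constraining $\textit{last}_R(\sigmaLowA[S_I])$ and collapsing the two-case formula of Invariant (I5) yields (L5); and (L6)--(L10) follow from (S3), Assumption (A1), and the single-step structure of $\sigma'_S$ exactly as in the paper. The only cosmetic difference is your appeal to (L4) in the CC case of (L8), which is unnecessary (a process's own poised operation incurs an RMR regardless, since no other process can create cache copies on its behalf), but this does not affect correctness.
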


\begin{proof}
	Let $S \subseteq \mathcal{P}$ be any set of processes
		such that $\sigmaLowA[S] \neq \bot$.
	By \autoref{thm:LowAInvars},
		$\sigmaLowA[0..2^n-1]$ is $(i-1)$-compliant
		with $\SMax(\sigmaLowA[0..2^n-1]) = S_I$.
	So by Invariant~\ref{RME-invar:uniqueSMax},
		for every set $S' \subseteq \mathcal{P}$, $\sigmaLowA[S'] \neq \bot$ 
		if and only if $F(\sigmaLowA[S_I]) \subseteq S' \subseteq S_I$. 
	Thus $F(\sigmaLowA[S_I]) \subseteq S \subseteq S_I$.
	
	Furthermore, by Invariants~\ref{RME-invar:subsetStatesSame2}
		and \ref{RME-invar:FSame},
		for every process $p \in S$,
		$\textit{state}_p(\sigmaLowA[S]) =
		\textit{state}_p(\sigmaLowA[S_I])$
		and $F(\sigmaLowA[S]) = F(\sigmaLowA[S_I])$.
	So by Invariant~\ref{RME-invar:CS},
		every process $p \in S \setminus F(\sigmaLowA[S_I])$
		has not entered the critical section during $E(\sigmaLowA[S])$.
		
	Now recall that $\sigma'_S$ contains exactly one non-crash step of 
		each process in $I \cap S$ and no other steps.
	Also recall that by construction,
		for every process $p \in S$,
		$\textit{state}_p(\sigmaLowA[S]) =
		\textit{state}_p(\sigmaLowA[S_I]) =
		\textit{state}_p(\sigmaSetupB[S_I]) = 
		\textit{state}_p(\sigmaSetupB[\SSMax])$.
	Then since $I$ is an independent set of the graph we constructed earlier:
	\begin{itemize}
		\item No pair of processes in $I$
			are poised to access the same register
			at the end of $E(\sigmaSetupB[\SSMax])$. 
		Since $\textit{state}_p(\sigmaLowA[S]) =
			\textit{state}_p(\sigmaSetupB[\SSMax])$ for every process $p \in S$,
			no pair of processes in $I$
			are poised to access the same register
			at the end of $E(\sigmaLowA[S])$. 
		Thus every process in $I \cap S$
			accesses a different register during $E(C'_S,\sigma'_S)$
			(\ref{LPI:IOneAccess}).
			
		\item No process $p \in I$ is poised
			to access a register owned by a different process $q \in I$
			at the end of $E(\sigmaSetupB[\SSMax])$.
		Since $\textit{state}_p(\sigmaLowA[S]) =
			\textit{state}_p(\sigmaSetupB[\SSMax])$ for every process $p \in S$,
			no process $p \in I$ is poised
			to access a register owned by a different process $q \in I$
			at the end of $E(\sigmaLowA[S])$.
		Thus for each process $p \in I$,
			if $p$ accesses a register $R$ during $E(C'_S,\sigma'_S)$,
			then the owner of $R$ is not in $I \setminus \{p\}$.
			(\ref{LPI:DSMOwnerIsGone}).
		
		\item No process $p \in I$ is poised
			to access a register that has previously 
			been accessed by a different process $q \in I$
			at the end of $E(\sigmaSetupB[\SSMax])$. 
		Since $\textit{state}_p(\sigmaLowA[S]) =
			\textit{state}_p(\sigmaSetupB[\SSMax])$ for every process $p \in S$,
			no process $p \in I$ is poised
			to access a register that has previously 
			been accessed by a different process $q \in I$
			at the end of $E(\sigmaLowA[S])$. 
		Thus for each process $p \in I$,
			if $p$ accesses a register $R$ during $E(C'_S,\sigma'_S)$,
			then $R$ has never been accessed by any process in $I \setminus \{p\}$ 
			during $E(\sigmaLowA[S])$.
			(\ref{LPI:CCPrevNotInIminusP}). 
		Therefore, for each process $p \in I$,
			if $p$ accesses a register $R$ during $E(C'_S,\sigma'_S)$,
			no process in $I \setminus \{p\}$
			makes a cache copy of $R$ during $E(\sigmaLowA[S])$.
		Consequently, for each process $p \in I$,
			during $E(C'_S,\sigma'_S)$,
			$p$ cannot invalidate any cache copy of any process in $I \setminus \{p\}$
			(\ref{LPI:CCNoInvalidateIminusP}).
			
		Furthermore, since $\textit{state}_p(\sigmaLowA[S_I]) =
			\textit{state}_p(\sigmaSetupB[\SSMax])$ for every process $p \in S$,
			no process $p \in I$ is poised
			to access a register that has previously 
			been accessed by a different process $q \in I$
			at the end of $E(\sigmaLowA[S_I])$.
		Thus for every process $p \in I$,
			if $p$ accesses a register $R$ during $E(C'_{S_I},\sigma'_{S_I})$,
			then $\textit{last}_R(\sigmaLowA[S_I]) \not\in I \setminus \{p\}$,
			so either $\textit{last}_R(\sigmaLowA[S_I]) = p$
			or $\textit{last}_R(\sigmaLowA[S_I]) \in F(\sigmaLowA[S_I]) 
			\cup (\mathcal{P} \setminus S_I)$. 
			
		By \autoref{thm:LowAInvars}, $\sigmaLowA[0..2^n-1]$ is $(i-1)$-compliant 
			with $\SMax(\sigmaLowA[0..2^n-1]) = S_I$.
		Thus by Invariant~\ref{RME-invar:subsetRegsSimilar},
			for every register $R \in \mathcal{R}$,
			there is a value $y_R$ such that for every set $S' \subseteq \mathcal{P}$,
			if $\sigmaLowA[S'] \neq \bot$, then:
		\begin{displaymath}
			\textit{val}_R\bparen{\sigmaLowA[S']}=
			\begin{cases}
					\textit{val}_R(\sigmaLowA[S_I]) & 
						\text{if $\textit{last}_R(\sigmaLowA[S_I]) \in S'$} \\
					y_R & \text{otherwise}
			\end{cases}
		\end{displaymath}
		
		We just showed that for every process $p \in I$,
			if $p$ accesses a register $R$ during $E(C'_{S_I},\sigma'_{S_I})$,
			then either $\textit{last}_R(\sigmaLowA[S_I]) = p$
			or $\textit{last}_R(\sigmaLowA[S_I]) \in F(\sigmaLowA[S_I]) 
			\cup (\mathcal{P} \setminus S_I)$. 
		Thus for every process $p \in I$,
			if $p$ accesses a register $R$ during $E(C'_{S_I},\sigma'_{S_I})$,
			then there is a value $y_R$ such that for every set $S' \subseteq \mathcal{P}$,
			if $\sigmaLowA[S'] \neq \bot$, then:
		\begin{displaymath}
			\textit{val}_R\bparen{\sigmaLowA[S']}=
			\begin{cases}
					\textit{val}_R(\sigmaLowA[S_I]) & 
						\text{if $p \in S'$} \\
					y_R & \text{otherwise}
			\end{cases}
		\end{displaymath}
		
		Consequently, observe that for each register $R \in \mathcal{R}$,
			if $R$ is accessed during $E(C'_S,\sigma'_S)$ then
			$\textit{val}_R(\sigmaLowA[S_I]) = 
			\textit{val}_R(\sigmaLowA[S])$
			(\ref{LPI:IRegsWereSimilar}).
	\end{itemize} 
	
	By \autoref{thm:LowAInvars},
		$\sigmaLowA[0..2^n-1]$ is $(i-1)$-compliant
		with $\SMax(\sigmaLowA[0..2^n-1]) = S_I$.
	So by Invariant~\ref{RME-invar:subsetStatesSame2},
		for each process $p \in S$,
		$\textit{state}_p(\sigmaLowA[S]) = \textit{state}_p(\sigmaLowA[S_I])$.
	
	Then, since we have proven (\ref{LPI:IOneAccess}) and (\ref{LPI:IRegsWereSimilar}),
		during both $E(C'_S,\sigma'_S)$ and $E(C'_{S_I},\sigma'_{S_I})$,
		for each process $p \in I \cap S$,
		$p$ performs the same operation on the same register,
		causing the same resulting state and response.
	This implies that:
	\begin{itemize}
		\item For each register $R \in \mathcal{R}$,
			if $R$ is accessed during $E(C'_S,\sigma'_S)$ then
			$\textit{val}_R(C'_{S_I},\sigma'_{S_I}) = 
			\textit{val}_R(C'_S,\sigma'_S)$
			(\ref{LPI:IRegsSimilar}).
			
		\item For each process $p \in S$,
			$\textit{state}_p(C'_{S_I},\sigma'_{S_I}) =
			\textit{state}_p(C'_S,\sigma'_S)$
			(\ref{LPI:statesSame}).
		(Recall that $\sigma'_S$ consists of exactly one non-crash step
			of each process in $I \cap S$ and no other steps,
			so the states of other processes do not change.)
	\end{itemize} 
	
	Since $\sigmaLowA[S] \neq \bot$, by construction,
		$\sigmaOld[S] \neq \bot$.
	By \autoref{thm:SetupPreInvars} (\ref{SPI:aboutToRMR}),
		for each process $p \in S \setminus F(\sigmaOld[\SOMax])$,
		$p$ incurs an RMR at the end of $E(\sigmaSetupA[S] \circ p)$.
	By construction and \autoref{thm:SetupAInvars} 
		(Invariant~\ref{RME-invar:FSame}),
		$\sigmaLowA[S] = \sigmaSetupB[S] = \sigmaSetupA[S]$,
		and $F(\sigmaLowA[S_I]) = F(\sigmaSetupA[S_I]) = F(\sigmaSetupA[\SOMax])
		= F(\sigmaOld[\SOMax])$.
	Thus for each process $p \in S \setminus F(\sigmaLowA[S_I])$,
		$p$ incurs an RMR at the end of $E(\sigmaLowA[S] \circ p)$.
	By definition, $S_I = I \cup F(\sigmaLowA[S_I])$,
		so since $S \subseteq S_I$, $S \setminus F(\sigmaLowA[S_I]) = I \cap S$.
	Finally, recall that $\sigma'_S$ contains exactly one 
		non-crash step of each process in $I \cap S$ and no other steps.
	Consequently, since we have already proven that 
		each process $I \cap S$ accesses a different register during $E(C'_S,\sigma'_S)$,
		observe that each process in $I \cap S$
		must incur exactly one RMR during $E(C'_S,\sigma'_S)$ (\ref{LPI:OneRMREach}).
	
	By \autoref{thm:LowAInvars},
		$\sigmaLowA[0..2^n-1]$ is $(i-1)$-compliant,
		so by Invariant~\ref{RME-invar:CS},
		each process that is not in $F(\sigmaLowA[S])$
		does not enter the critical section during $E(\sigmaLowA[S])$.
	Recall that $\sigma'_S$ consists of exactly one non-crash step
		of each process in $I \cap S = 
		S \setminus F(\sigmaLowA[S_I])$ and no other steps.
	Thus, with only one step, although a process could enter the critical section
		during $E(C'_S,\sigma'_S)$,
		it cannot have taken any steps within the critical section.
	Thus by Assumption~\ref{RME-assumption:RMR-in-CS}, 
		no process can leave the critical section during $E(C'_S,\sigma'_S)$.
	So for each process $p \in S \setminus F(\sigmaLowA[S_I])$,
		$p$ has not left the critical section during $E(\sigmaLowA[S] \circ \sigma'_S)$
		(\ref{LPI:notLeftCS}).
	Therefore, since no process in $S \setminus F(\sigmaLowA[S_I])$
		has left the critical section during $E(\sigmaLowA[S] \circ \sigma'_S)$,
		no process in $S \setminus F(\sigmaLowA[S_I])$
		has completed its super-passage during $E(\sigmaLowA[S] \circ \sigma'_S)$. 
	Thus $F(\sigmaLowA[S]) = F(\sigmaLowA[S] \circ \sigma'_S)$ (\ref{LPI:FUnchanged}). 
\end{proof}

We now construct a new array $\sigmaLowB[0..2^n-1]$ such that
	for every set $S \subseteq \mathcal{P}$,
	$\sigmaLowB[S] = \bot$ if $\sigmaLowA[S] = \bot$;
	otherwise $\sigmaLowB[S] = \sigmaLowA[S] \circ \sigma'_S$. 
	
\begin{lem}	
	\label{thm:LowBInvars}
	Except for Invariant~\ref{RME-invar:CS},
		$\sigmaLowB[0..2^n-1]$ is $i$-compliant
		with $\SMax(\sigmaLowB[0..2^n-1]) = S_I$. 
\end{lem}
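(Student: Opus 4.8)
The plan is to verify Invariants~\ref{RME-invar:PsubsetS}--\ref{RME-invar:iRMRs}, excluding Invariant~\ref{RME-invar:CS}, one at a time for $\sigmaLowB[0..2^n-1]$, relying on the fact that $\sigmaLowA[0..2^n-1]$ is $(i-1)$-compliant with $\SMax(\sigmaLowA[0..2^n-1]) = S_I$ (\autoref{thm:LowAInvars}), together with the facts \ref{LPI:IOneAccess}--\ref{LPI:FUnchanged} of \autoref{thm:LowPreInvars}. Two structural observations drive everything: (i) by construction $\sigmaLowB[S] = \bot$ exactly when $\sigmaLowA[S] = \bot$, and by \ref{LPI:FUnchanged} the map $F$ is unchanged, so the non-$\bot$ pattern is preserved and I will conclude $\SMax(\sigmaLowB[0..2^n-1]) = S_I$; and (ii) $\sigma'_S$ consists of exactly one non-crash step of each process in $I \cap S = S \setminus F(\sigmaLowA[S_I])$, so no crash step is appended and only processes of $I$ move during the new steps.

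With these in hand the routine invariants follow quickly. Invariant~\ref{RME-invar:PsubsetS} holds since $\sigma'_S$ uses only processes of $I \cap S \subseteq S$ and $\sigmaLowA$ satisfies it. Invariant~\ref{RME-invar:uniqueSMax} follows from observation (i) and $\sigmaLowA$'s version, giving $\sigmaLowB[S] \neq \bot$ iff $F(\sigmaLowB[S_I]) \subseteq S \subseteq S_I$, i.e.\ $\SMax(\sigmaLowB[0..2^n-1]) = S_I$. Invariant~\ref{RME-invar:subsetStatesSame2} is precisely \ref{LPI:statesSame} rewritten with $\SMax = S_I$, and Invariant~\ref{RME-invar:FSame} then follows automatically as noted in the invariant list. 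Invariant~\ref{RME-invar:crashes} holds because $\sigma'_S$ adds no crash steps and $F$ is unchanged. The key bookkeeping step is Invariant~\ref{RME-invar:iRMRs}: since $S \setminus F(\sigmaLowB[S]) = I \cap S$, and each such process already incurred at least $i-1$ RMRs in $E(\sigmaLowA[S])$ while incurring exactly one more in $E(C'_S,\sigma'_S)$ by \ref{LPI:OneRMREach}, every active process now has at least $i$ RMRs---this is exactly what upgrades compliance from $i-1$ to $i$.

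The two model-specific invariants are handled separately. In the DSM model (Invariant~\ref{RME-invar:DSMSelfAccessOnly}), for $p \in I$ and a register $R$ owned by $p$, no other process of $I \cap S$ can touch $R$ during $\sigma'_S$ by \ref{LPI:DSMOwnerIsGone}; combined with $\sigmaLowA$'s version this keeps $R$ accessed only by $p$ throughout $E(\sigmaLowB[S])$. In the CC model (Invariant~\ref{RME-invar:ValidCCSame}), $p$'s valid-cache set at the end of $E(\sigmaLowA[S])$ is a fixed set independent of the choice of $S \ni p$; its single step in $\sigma'_S$ alters that set in a way determined solely by $p$'s (common) state, and by \ref{LPI:CCNoInvalidateIminusP} no other process of $I$ invalidates a cache copy of $p$, so the resulting valid-cache set is identical across all $S \ni p$.

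The main obstacle is Invariant~\ref{RME-invar:subsetRegsSimilar}. Here I would fix a register $R$ and split on whether $R$ is accessed during $E(C'_{S_I},\sigma'_{S_I})$. Since all active processes share states across sub-schedules, a process of $I$ accesses $R$ during $E(C'_S,\sigma'_S)$ iff it does during $E(C'_{S_I},\sigma'_{S_I})$; hence if $R$ is untouched in the maximal sub-schedule it is untouched in every sub-schedule, and both $\textit{val}_R$ and $\textit{last}_R$ are inherited unchanged from $\sigmaLowA$, yielding Invariant~\ref{RME-invar:subsetRegsSimilar} with the same $y_R$. Otherwise, by \ref{LPI:IOneAccess} there is a unique $p \in I$ accessing $R$ in $E(C'_{S_I},\sigma'_{S_I})$, so $\textit{last}_R(\sigmaLowB[S_I]) = p$; for $S \ni p$, \ref{LPI:IRegsSimilar} gives $\textit{val}_R(\sigmaLowB[S]) = \textit{val}_R(\sigmaLowB[S_I])$, while for $p \notin S$ the same state-sharing argument shows $R$ is not touched during $\sigma'_S$, so $\textit{val}_R(\sigmaLowB[S]) = \textit{val}_R(\sigmaLowA[S])$, which equals the value $y_R$ furnished by \ref{LPI:IRegsWereSimilar}. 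This matches exactly the required case split on whether $\textit{last}_R(\sigmaLowB[S_I]) = p \in S$, completing Invariant~\ref{RME-invar:subsetRegsSimilar} and hence the $i$-compliance of $\sigmaLowB[0..2^n-1]$ with $\SMax(\sigmaLowB[0..2^n-1]) = S_I$.
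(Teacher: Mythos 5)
Your proposal is correct and follows essentially the same route as the paper's own proof: the routine invariants are transferred from $\sigmaLowA[0..2^n-1]$ using \ref{LPI:statesSame}, \ref{LPI:FUnchanged}, and \ref{LPI:OneRMREach}; the DSM and CC invariants via \ref{LPI:DSMOwnerIsGone} and \ref{LPI:CCNoInvalidateIminusP}; and Invariant~\ref{RME-invar:subsetRegsSimilar} by the same case split on whether $R$ is accessed in $E(C'_{S_I},\sigma'_{S_I})$, using \ref{LPI:IOneAccess}, \ref{LPI:IRegsWereSimilar}, and \ref{LPI:IRegsSimilar} exactly as the paper does. No gaps; your explicit state-sharing justification that a register untouched in the maximal schedule is untouched in every sub-schedule is, if anything, slightly more careful than the paper's corresponding remark.
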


\begin{proof}
	For every set $S \subseteq \mathcal{P}$,
		if $\sigmaLowB[S] \neq \bot$,
		then by construction, $\sigmaLowB[S] = \sigmaLowA[S] \circ \sigma'_S$.
	By \autoref{thm:LowAInvars},
		$\sigmaLowA[0..2^n-1]$ is $(i-1)$-compliant, 
		so by Invariant~\ref{RME-invar:PsubsetS},
		$P(\sigmaLowA[S]) \subseteq S$.
	By the definition of $\sigma'_S$, $\sigma'_S$ contains only steps of processes in $S \cap I$.
	Thus $P(\sigmaLowB[S]) \subseteq S$ (Invariant~\ref{RME-invar:PsubsetS}).
	
	By \autoref{thm:LowAInvars},
		$\sigmaLowA[0..2^n-1]$ is $(i-1)$-compliant
		with $\SMax(\sigmaLowA[0..2^n-1]) = S_I$.
	So by Invariant~\ref{RME-invar:uniqueSMax},
		for every set $S \subseteq \mathcal{P}$,
		$\sigmaLowA[S] \neq \bot$ if and only if
		$F(\sigmaLowA[S_I]) \subseteq S \subseteq S_I$.
	By construction, for every set $S \subseteq \mathcal{P}$,
		$\sigmaLowB[S] = \bot$ if and only if
		$\sigmaLowA[S] = \bot$. 
	Furthermore, by \autoref{thm:LowPreInvars} (\ref{LPI:FUnchanged}),
		$F(\sigmaLowB[S_I]) = F(\sigmaLowA[S_I])$.
	Thus for every set $S \subseteq \mathcal{P}$,
		$\sigmaLowB[S] \neq \bot$ if and only if
		$F(\sigmaLowB[S_I]) \subseteq S \subseteq S_I$
		(Invariant~\ref{RME-invar:uniqueSMax}).
	
	By \autoref{thm:LowPreInvars} (\ref{LPI:statesSame}),
		for every set $S \subseteq \mathcal{P}$ such that 
		$\sigmaLowA[S] \neq \bot$,
		and every process $p \in S$,
		$\textit{state}_p(\sigmaLowB[S_I]) = \textit{state}_p(C'_{S_I},\sigma_{S_I})
		\textit{state}_p(C'_S,\sigma'_S) = \textit{state}_p(\sigmaLowB[S])$.
	By construction, for every set $S \subseteq \mathcal{P}$,
		$\sigmaLowB[S] = \bot$ if and only if $\sigmaLowA[S] = \bot$.
	Furthermore, we have already proven that for every set $S \subseteq \mathcal{P}$,
		$\sigmaLowB[S] \neq \bot$ if and only if
		$F(\sigmaLowB[S_I]) \subseteq S \subseteq S_I$.
	Thus for every process $p \in S_I$
		and every set $S \subseteq \mathcal{P}$ that contains $p$,
		if $\sigmaLowB[S] \neq \bot$, then 
		$\textit{state}_p(\sigmaLowB[S]) = \textit{state}_p(\sigmaLowB[S_I])$
		(Invariant~\ref{RME-invar:subsetStatesSame2}).
	
	Since we have already proven that Invariants~\ref{RME-invar:PsubsetS},
		\ref{RME-invar:uniqueSMax}, and \ref{RME-invar:subsetStatesSame2} hold
		for $\sigmaLowB[0..2^n-1]$,
		it immediately follows that Invariant~\ref{RME-invar:FSame} also holds.
		
	By \autoref{thm:LowPreInvars} (\ref{LPI:IRegsWereSimilar}),  
		for every process $p \in I$,
		if $p$ accesses a register $R$ during $E(C'_{S_I},\sigma'_{S_I})$,
		then there is a value $y_R$ such that for every set $S' \subseteq \mathcal{P}$,
		if $\sigmaLowA[S'] \neq \bot$, then:
	\begin{displaymath}
		\textit{val}_R\bparen{\sigmaLowA[S']}=
		\begin{cases}
				\textit{val}_R(\sigmaLowA[S_I]) & 
					\text{if $p \in S'$} \\
				y_R & \text{otherwise}
		\end{cases}
	\end{displaymath}
	Note that since $p$ accesses $R$ during $E(C'_{S_I},\sigma'_{S_I})$,
		by \autoref{thm:LowPreInvars} (\ref{LPI:IOneAccess}),
		$\textit{last}_R(\sigmaLowB[S_I]) = p$.
	By \autoref{thm:LowAInvars}, $\sigmaLowA[0..2^n-1]$ is $(i-1)$-compliant,
		so if $p \in S'$, then $p$ also accesses $R$ during $E(C'_{S'},\sigma'_{S'})$.
	Thus by \autoref{thm:LowPreInvars} (\ref{LPI:IRegsSimilar}),
		if $p \in S'$, then $\textit{val}_R(\sigmaLowB[S']) = \textit{val}_R(\sigmaLowB[S_I])$.
	Furthermore, by \autoref{thm:LowPreInvars} (\ref{LPI:IOneAccess}),
		for every set $S \subseteq \mathcal{P}$ such that $\sigmaLowA[S] \neq \bot$,
		$R$ cannot be accessed by any process other than $p$
		during $E(C'_S,\sigma'_S)$.
	Therefore if $p \not\in S'$, then $R$ cannot be accessed
		during $E(C'_{S'},\sigma'_{S'})$, and so
		$\textit{val}_R(\sigmaLowB[S']) = y_R$.
	Thus we have that for every register $R \in \mathcal{R}$
		such that some process $p$ accesses $R$ during $E(C'_{S_I},\sigma'_{S_I})$, 
		for every set $S' \subseteq \mathcal{P}$,
		if $\sigmaLowB[S'] \neq \bot$, then:
	\begin{displaymath}
		\textit{val}_R\bparen{\sigmaLowB[S']}=
		\begin{cases}
				\textit{val}_R(\sigmaLowB[S_I]) & 
					\text{if $p = \textit{last}_R(\sigmaLowB[S_I]) \in S'$} \\
				y_R & \text{otherwise}
		\end{cases}
	\end{displaymath}
	Then, since any register that is not accessed by any process
		during $E(C'_{S_I},\sigma'_{S_I})$ clearly does not change its state,
		observe that Invariant~\ref{RME-invar:subsetRegsSimilar} 
		holds for $\sigmaLowB[0..2^n-1]$.
		
	By \autoref{thm:LowAInvars}, $\sigmaLowA[0..2^n-1]$ is $(i-1)$-compliant,
		so Invariant~\ref{RME-invar:crashes} holds for $\sigmaLowA[0..2^n-1]$.
	For every set $S \subseteq \mathcal{P}$ such that $\sigmaLowA[S] \neq \bot$,
		$\sigma'_S$ contains no crash steps.
	Thus Invariant~\ref{RME-invar:crashes} also holds for $\sigmaLowB[0..2^n-1]$.
	
	By \autoref{thm:LowAInvars}, $\sigmaLowA[0..2^n-1]$ is $(i-1)$-compliant
		with $\SMax(\sigmaLowA[0..2^n-1]) = S_I$.
	So by Invariant~\ref{RME-invar:DSMSelfAccessOnly},
		for every process $p \in S_I \setminus F(\sigmaLowA[S_I])$,
		every register $R \in \mathcal{R}$ owned by $p$,
		and every set $S \subseteq \mathcal{P}$ with $\sigmaLowA[S] \neq \bot$, 
		$R$ can only be accessed by $p$ during $E(\sigmaLowA[S])$. 
	By \autoref{thm:LowPreInvars} (\ref{LPI:DSMOwnerIsGone}),
		for every set $S \subseteq \mathcal{P}$ such that $\sigmaLowA[S] \neq \bot$,
		for each process $p \in I$,
		if $p$ accesses a register $R$ during $E(C'_S,\sigma'_S)$,
		then the owner of $R$ is not in $I \setminus \{p\}
		= (S_I \setminus F(\sigmaLowA[S_I]) ) \setminus \{p\}$.
	In other words, for every set $S \subseteq \mathcal{P}$ 
		such that $\sigmaLowA[S] \neq \bot$,
		for each process $p \in I$,
		if $p$ owns a register $R$,
		then no other process in $I = S_I \setminus F(\sigmaLowA[S_I])$
		accesses $R$ during $E(C'_S,\sigma'_S)$.
	By \autoref{thm:LowPreInvars} (\ref{LPI:FUnchanged}),
		$F(\sigmaLowA[S_I]) = F(\sigmaLowB[S_I])$.
	Consequently, for every process $p \in S_I \setminus F(\sigmaLowB[S_I])$,
		every register $R \in \mathcal{R}$ owned by $p$,
		and every set $S \subseteq \mathcal{P}$ with $\sigmaLowB[S] \neq \bot$, 
		$R$ can only be accessed by $p$ during $E(\sigmaLowB[S])$
		(Invariant~\ref{RME-invar:DSMSelfAccessOnly}). 
	
	By \autoref{thm:LowAInvars}, $\sigmaLowA[0..2^n-1]$ is $(i-1)$-compliant
		with $\SMax(\sigmaLowA[0..2^n-1]) = S_I$.
	So by Invariant~\ref{RME-invar:subsetStatesSame2},
		for every process $p \in S_I$ 
		and every set $S \subseteq \mathcal{P}$ that contains $p$,
		if $\sigmaLowA[S] \neq \bot$, then 
		$\textit{state}_p(\sigmaLowA[S]) = \textit{state}_p(\sigmaLowA[S_I])$.
	Furthermore, by Invariant~\ref{RME-invar:ValidCCSame}, 
		in the CC model, for every process
		$p \in S_I \setminus F(\sigmaLowA[S_I])$,
		there is a set $\mathcal{R}_p$ of registers such that
		for every set $S \subseteq \mathcal{P}$ that contains $p$,
		if $\sigmaLowA[S] \neq \bot$, then 
		the set of registers that $p$ has valid cache copies of
		at the end of $E(\sigmaLowA[S])$ is exactly $\mathcal{R}_p$.	
	Recall that for every set $S \subseteq \mathcal{P}$ 
		such that $\sigmaLowA[S] \neq \bot$, 
		$\sigma'_S$ contains exactly one non-crash step 
		of each process in $S_I$ and no other steps.
	By \autoref{thm:LowPreInvars} (\ref{LPI:IOneAccess}),
		for every set $S \subseteq \mathcal{P}$ such that $\sigmaLowA[S] \neq \bot$, 
		each register is accessed by at most one process $p \in I \cap S$
		during $E(C'_S,\sigma'_S)$ and no other processes.
	Furthermore, by \autoref{thm:LowPreInvars} (\ref{LPI:IRegsWereSimilar}),
		for every set $S \subseteq \mathcal{P}$ such that $\sigmaLowA[S] \neq \bot$, 
		for every register $R \in \mathcal{R}$,
		if $R$ is accessed during $E(C'_S,\sigma'_S)$,
		then $\textit{val}_R(\sigmaLowA[S_I]) =
		\textit{val}_R(\sigmaLowA[S])$.
	Thus for every process $p \in S_I$ and 
		every set $S \subseteq \mathcal{P}$ such that $\sigmaLowA[S] \neq \bot$, 
		during both $E(C'_S,\sigma'_S)$ and $E(C'_{S_I},\sigma'_{S_I})$,
		$p$ performs the same operation on the same register 
		which begins with the same value,
		causing the same resulting state and response.
	There are two cases: either this operation that $p$ performs on $R$ 
		is a read operation, or it is not.
		
	If it is a read operation, then no cache copies are invalidated by the read,
		and $p$ creates a new valid cache copy of $R$
		during both $E(C'_S,\sigma'_S)$ and $E(C'_{S_I},\sigma'_{S_I})$,
		thus observe that for every set $S' \subseteq \mathcal{P}$ that contains $p$,
		if $\sigmaLowB[S'] \neq \bot$, then 
		the set of registers that $p$ has valid cache copies of
		is exactly $\mathcal{R}_p \cup \{R\}$.
	If it is not a read operation, then cache copies can be invalidated,
		but by \autoref{thm:LowPreInvars} (\ref{LPI:CCNoInvalidateIminusP}),
		the invalidated cache copies cannot belong to 
		any process in $I \setminus \{p\} = 
		(S_I \setminus F(\sigmaLowA[S_I]) ) \setminus \{p\}$.
	Thus the cache copies of every process in 
		$(S_I \setminus F(\sigmaLowA[S_I]) ) \setminus \{p\}$
		are unaffected, whereas observe that 
		for every set $S' \subseteq \mathcal{P}$ that contains $p$,
		if $\sigmaLowB[S'] \neq \bot$, then 
		the set of registers that $p$ has valid cache copies of
		is exactly $\mathcal{R}_p \setminus \{R\}$.
	By \autoref{thm:LowPreInvars} (\ref{LPI:FUnchanged}),
		$F(\sigmaLowA[S_I]) = F(\sigmaLowB[S_I])$.
	Consequently, in both cases, for every process
		$p \in S_I \setminus F(\sigmaLowB[S_I])$,
		there is a set $\mathcal{R}'_p$ of registers 
		(namely either $\mathcal{R}_p \cup R$ or $\mathcal{R}_p \setminus \{R\}$
		where $R$ is the one register that $p$ is poised to access at the end of $E(\sigmaLowA[S_I])$)
		such that for every set $S \subseteq \mathcal{P}$ that contains $p$,
		if $\sigmaLowB[S] \neq \bot$, then 
		the set of registers that $p$ has valid cache copies of
		is exactly $\mathcal{R}'_p$
		(Invariant~\ref{RME-invar:ValidCCSame}). 
	
	By \autoref{thm:LowAInvars}, $\sigmaLowA[0..2^n-1]$ is $(i-1)$-compliant,
		so by Invariant~\ref{RME-invar:iRMRs},
		for every set $S \subseteq \mathcal{P}$
		and every process $p \in S \setminus F(\sigmaLowA[S])$,
		if $\sigmaLowA[S] \neq \bot$, then 
		$p$ incurs at least $i-1$ RMRs
		during $E(\sigmaLowA[S])$.
	By \autoref{thm:LowPreInvars} (\ref{LPI:OneRMREach}),
		for every set $S \subseteq \mathcal{P}$ such that $\sigmaLowA[S] \neq \bot$, 
		each process in $I \cap S$ incurs exactly one RMR during $E(C'_S,\sigma'_S)$.
	By construction, for every set $S \subseteq \mathcal{P}$,
		if $\sigmaLowB[S] \neq \bot$, then
		$\sigmaLowB[S] = \sigmaLowA[S] \circ \sigma'_S$,
		i.e., every process in $I \cap S = S \setminus F(\sigmaLowA[S_I])$ 
		incurs exactly one more RMR during $E(\sigmaLowB[S])$
		than during $E(\sigmaLowA[S])$.
	By \autoref{thm:LowPreInvars} (\ref{LPI:FUnchanged}),
		$F(\sigmaLowA[S_I]) = F(\sigmaLowB[S_I])$. 
	Since we have already proven that Invariant~\ref{RME-invar:FSame}
		holds for $\sigmaLowB[0..2^n-1]$,
		$F(\sigmaLowB[S_I]) = F(\sigmaLowB[S])$
		for every set $S \subseteq \mathcal{P}$
		such that $\sigmaLowB[S] \neq \bot$.
	Thus for every set $S \subseteq \mathcal{P}$
		and every process $p \in S \setminus F(\sigmaLowB[S])$,
		if $\sigmaLowB[S] \neq \bot$, then 
		$p$ incurs at least $i$ RMRs
		during $E(\sigmaLowA[S])$
		(Invariant~\ref{RME-invar:iRMRs}). 
\end{proof} 
	
We now construct another array $\sigmaLowC[0..2^n-1]$ 
	with the goal of satisfying Invariant~\ref{RME-invar:CS} as follows.
If no process is within the critical section at the end of $E(\sigmaLowB[S_I])$,
	we simply construct $\sigmaLowC[0..2^n-1]$ such that
	$\sigmaLowC[0..2^n-1] = \sigmaLowB[0..2^n-1]$.
Furthermore, we define $\SLMax = S_I$.
  
Otherwise, to avoid violating mutual exclusion,
	there must be exactly one process 
	$p \in S_I \setminus F(\sigmaLowB[S_I])$ such that 
	at the end of $E(\sigmaLowB[S_I])$,
	$p$ is within the critical section.
Then note that by \autoref{thm:LowBInvars} and Invariant~\ref{RME-invar:subsetStatesSame2},
	for every set $S \subseteq \mathcal{P}$ such that $\sigmaLowB[S] \neq \bot$,
	if $p \in S$ then $p$ is also within the critical section at the end of $E(\sigmaLowB[S])$;
	otherwise no process is within the critical section at the end of $E(\sigmaLowB[S])$.
Thus we construct $\sigmaLowC[0..2^n-1]$ such that
	for every set $S \subseteq \mathcal{P}$,
	if $p \in S$, then $\sigmaLowC[S] = \bot$;
	otherwise $\sigmaLowC[S] = \sigmaLowB[S]$.
Furthermore, we define $\SLMax = S_I \setminus \{p\}$.
	
\begin{lem}
	\label{thm:LowCInvars}
	This new array $\sigmaLowC[0..2^n-1]$ is $i$-compliant
		with $\SMax(\sigmaLowC[0..2^n-1]) = \SLMax$.
\end{lem}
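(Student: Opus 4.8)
The plan is to follow the proof of \autoref{thm:SetupBInvars} almost verbatim, since the passage from $\sigmaLowB$ to $\sigmaLowC$ is structurally identical to the passage from $\sigmaSetupA$ to $\sigmaSetupB$: in both we start from an array that satisfies every invariant except~\ref{RME-invar:CS} and delete exactly the schedules containing the unique process that is inside the critical section. The starting point is \autoref{thm:LowBInvars}, which says that $\sigmaLowB$ is $i$-compliant except for Invariant~\ref{RME-invar:CS}, with $\SMax(\sigmaLowB[0..2^n-1]) = S_I$. I would then split along the two branches of the construction.

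In the first branch no process is within the critical section at the end of $E(\sigmaLowB[S_I])$, so $\sigmaLowC = \sigmaLowB$ and $\SLMax = S_I$, and every invariant other than~\ref{RME-invar:CS} is inherited directly from \autoref{thm:LowBInvars}. To obtain~\ref{RME-invar:CS}, I would first push ``no process in the critical section at the end'' from $E(\sigmaLowB[S_I])$ to every $E(\sigmaLowB[S])$ with $\sigmaLowB[S]\neq\bot$ using Invariant~\ref{RME-invar:subsetStatesSame2}, and then combine this with \autoref{thm:LowPreInvars}~(\ref{LPI:notLeftCS}): a process not in $F(\sigmaLowC[S])$ that entered the critical section during $E(\sigmaLowC[S])$ must, by Invariant~\ref{RME-invar:CS} for the fully $(i-1)$-compliant array $\sigmaLowA$, have done so inside the appended block $\sigma'_S$, and would then either still be inside the critical section at the end (contradicting the first fact) or have left it (contradicting~(\ref{LPI:notLeftCS})); hence no such process exists.

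In the second branch, mutual exclusion together with Invariant~\ref{RME-invar:subsetStatesSame2} singles out a unique process $p \in S_I\setminus F(\sigmaLowB[S_I])$ that is in the critical section at the end of $E(\sigmaLowB[S_I])$, and $p$ occupies the critical section at the end of $E(\sigmaLowB[S])$ exactly when $p\in S$; the construction deletes precisely those columns and sets $\SLMax = S_I\setminus\{p\}$. Since $p\notin F(\sigmaLowB[S_I])$ we have $F(\sigmaLowB[S_I])\subseteq \SLMax\subseteq S_I$, so $\sigmaLowC[\SLMax]=\sigmaLowB[\SLMax]\neq\bot$ is a legitimate new maximal schedule and the surviving entries are exactly the $S$ with $F(\sigmaLowB[S_I])\subseteq S\subseteq\SLMax$, which re-establishes Invariant~\ref{RME-invar:uniqueSMax} (using Invariant~\ref{RME-invar:FSame} to identify $F(\sigmaLowC[\SLMax])$ with $F(\sigmaLowB[S_I])$). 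The ``for every surviving $S$'' invariants~\ref{RME-invar:PsubsetS}, \ref{RME-invar:subsetStatesSame2}, \ref{RME-invar:FSame}, \ref{RME-invar:crashes}, \ref{RME-invar:DSMSelfAccessOnly}, \ref{RME-invar:ValidCCSame}, and~\ref{RME-invar:iRMRs} transfer immediately because we only removed entries, and Invariant~\ref{RME-invar:CS} follows exactly as in the first branch, now using that no surviving schedule contains $p$ and hence none has a process in the critical section at its end.

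The step I expect to demand the most care is re-establishing Invariant~\ref{RME-invar:subsetRegsSimilar} after the reference set shrinks from $S_I$ to $\SLMax$, since deleting $p$'s column can in principle move the per-register last accessor $\textit{last}_R(\sigmaLowC[\SLMax])$ away from $\textit{last}_R(\sigmaLowB[S_I])$. I would argue register by register. If $\textit{last}_R(\sigmaLowB[S_I]) = p$, then every surviving schedule omits $p$ and therefore already carries the common value $y_R$ on $R$, and $\textit{val}_R(\sigmaLowC[\SLMax]) = y_R$ as well, so both branches of the invariant collapse to $y_R$ and any choice of last accessor is consistent. Otherwise $\textit{last}_R(\sigmaLowB[S_I])$ is some $q\neq p$ (or $\bot$); for the registers where the two candidate values genuinely differ I would verify that the last accessor is unchanged---when the distinguishing access occurs in $\sigma'$ this is pinned to a single $I$-process acting identically across the two schedules by \autoref{thm:LowPreInvars}~(\ref{LPI:IOneAccess}) and~(\ref{LPI:statesSame}), and when it occurs earlier it follows from the analogous property of the $(i-1)$-compliant array $\sigmaLowA$---so that $q\in\SLMax$ remains the last accessor under $\SLMax$ and the old witness $y_R$ keeps governing the two-valued pattern. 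This bookkeeping is precisely what the analogous \autoref{thm:SetupBInvars} defers, and it is the one place where the shrinking of $\SMax$ has real content.
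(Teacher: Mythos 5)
Your proposal is correct and follows essentially the same route as the paper's proof: the paper splits into the same two branches, transfers every invariant except~\ref{RME-invar:CS} from \autoref{thm:LowBInvars} to the restricted array (treating that transfer, including Invariant~\ref{RME-invar:subsetRegsSimilar} under the shrunken maximal set, as a direct observation), and obtains Invariant~\ref{RME-invar:CS} by noting that no surviving schedule has a process inside the critical section at its end and then invoking \autoref{thm:LowPreInvars}~(\ref{LPI:notLeftCS}). Your closing case analysis of Invariant~\ref{RME-invar:subsetRegsSimilar} (last accessor equal to $p$ versus not) supplies detail that the paper compresses into its ``observe that'' step, and it is consistent with the paper's argument.
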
 

\toOmit{ 

	\begin{proof}
		By \autoref{thm:LowBInvars},
			except for Invariant~\ref{RME-invar:CS},
			$\sigmaLowB[0..2^n-1]$ is $i$-compliant
			with $\SMax(\sigmaLowB[0..2^n-1]) = S_I$. 
			
		If no process is within the critical section at the end of $E(\sigmaLowB[S_I])$,
			then $\sigmaLowC[0..2^n-1] = \sigmaLowB[0..2^n-1]$.
		Thus by \autoref{thm:LowBInvars}
			and \autoref{thm:LowPreInvars} (\ref{LPI:notLeftCS}),
			Invariant~\ref{RME-invar:CS} also holds for $\sigmaLowC[0..2^n-1]$,
			and so it follows that $\sigmaLowC[0..2^n-1] = \sigmaLowB[0..2^n-1]$ 
			is $i$-compliant with $\SMax(\sigmaLowC[0..2^n-1]) = \SLMax = S_I$.
			
		Otherwise, there is exactly one process 
			$p \in S_I \setminus F(\sigmaLowB[S_I])$ such that  
			$p$ is within the critical section at the end of $E(\sigmaLowB[S_I])$,
			and for every set $S \subseteq \mathcal{P}$,
			if $p \in S$, then $\sigmaLowC[S] = \bot$;
			otherwise $\sigmaLowC[S] = \sigmaLowB[S]$.
		By \autoref{thm:LowBInvars},
			except for Invariant~\ref{RME-invar:CS},
			$\sigmaLowB[0..2^n-1]$ is $i$-compliant
			with $\SMax(\sigmaLowB[0..2^n-1]) = S_I$. 
		Thus observe that by the construction of $\sigmaLowC[0..2^n-1]$,
			Invariants~\ref{RME-invar:PsubsetS}, \ref{RME-invar:uniqueSMax},
			\ref{RME-invar:subsetStatesSame2}, \ref{RME-invar:FSame},
			\ref{RME-invar:subsetRegsSimilar}, \ref{RME-invar:crashes},
			\ref{RME-invar:DSMSelfAccessOnly}, \ref{RME-invar:ValidCCSame},
			\ref{RME-invar:iRMRs} must all also hold for $\sigmaLowC[0..2^n-1]$
			with $\SMax(\sigmaLowC[0..2^n-1]) = \SLMax
			= S_I \setminus \{p\}$.
		
		By the construction of $\sigmaLowC[0..2^n-1]$,
			for every set $S \subseteq \mathcal{P}$,
			if $p \in S$, then $\sigmaLowC[S] = \bot$.
		Thus for every set $S \subseteq \mathcal{P}$,
			if $\sigmaLowC[S] \neq \bot$,
			then $p \not\in S$.
		Consequently, for every set $S \subseteq \mathcal{P}$ 
			such that $\sigmaLowC[S] \neq \bot$,
			no process is within the critical section at the end of 
			$E(\sigmaLowC[S]) = E(\sigmaLowB[S])$.
		Therefore by \autoref{thm:LowPreInvars} (\ref{LPI:notLeftCS}), 
			Invariant~\ref{RME-invar:CS} holds for $\sigmaLowC[0..2^n-1]$,
			and thus $\sigmaLowC[0..2^n-1]$ is $i$-compliant
			with $\SMax(\sigmaLowC[0..2^n-1]) = \SLMax$.
	\end{proof}

}

Finally, we terminate this $i$-th iteration by setting 
	$\sigmaFinal[i,0..2^n-1] = \sigmaLowC[0..2^n-1]$.


\paragraph{$i$-th Iteration (High Contention Phase if $|L| < |H|$):} 

By \autoref{thm:SetupBInvars}, $\sigmaSetupB[0..2^n-1]$ is $(i-1)$-compliant
	with $\SMax(\sigmaSetupB[0..2^n-1]) = \SSMax$.
Recall that $H = \bigcup_{R\in\mathcal{R} \atop |B_{R}|\geq k} B_R$,	
	where for every register $R \in \mathcal{R}$,
	$B_R$ is the set of processes poised to access $R$
	at the end of $E(\sigmaSetupB[\SSMax])$.
	
We first divide the processes in $H$ into	groups of exactly $k$ processes
	such that within each group, all processes are poised to access the same register
	at the end of $E(\sigmaSetupB[\SSMax])$.
We make as many such groups as possible. 
Then let $H_1$ be the set of processes in the resulting groups, i.e.,
	$H_1$ is a modification of $H$ where all processes that are not in any group
	are removed.
Note that by this construction, 
	since $H = \bigcup_{R\in\mathcal{R} \atop |B_{R}|\geq k} B_R$,	
	$|H_1| > |H| / 2$.

Next, let $H_2$ be a modification of $H_1$ such that
	for each process $p \in H_1$,
	$p$ is in $H_2$ if and only if both of the following are true:
\begin{itemize}
	\item No process in $H_1$ is poised to access
		a register owned by $p$
		at the end of $E(\sigmaSetupB[\SSMax])$.
		
	\item No process in $H_1$ is poised to access
		a register $R$ at the end of $E(\sigmaSetupB[\SSMax])$
		such that $\textit{last}_R(\sigmaSetupB[\SSMax]) = p$.
\end{itemize} 
Note that since $H_1$ is composed of groups of exactly $k$ processes
	that are all poised to access the same register
	at the end of $E(\sigmaSetupB[\SSMax])$,
	there are at most $|H_1| / k$ registers that are poised to be accessed
	by processes in $H_1$,
	and thus there are at most $2 |H_1| / k$ processes 
	removed in the construction of $H_2$ from $H_1$.
	
Then let $H_3$ be a modification of $H_2$ such that
	each remaining group of $H_2$ with at least $k/4$ processes 
	is shrunk to contain only $k/4$ processes,
	and all other groups are removed.
Since at most $2 |H_1| / k$ processes were removed in 
	the construction of $H_2$ from $H_1$,
	and $k > \log n$, at least half of the processes remain,
	and so it is easy to see that at least a quarter of the groups
	in $H_2$ remain with at least $k / 4$ processes.
So $|H_3| \geq |H_1| / 16 > |H| / 32$.

Next, recall that all registers support only read,
	fetch-and-store (FAS), fetch-and-increment (FAI), and compare-and-swap (CAS) operations.
Let	$\POP$ denote one of these 4 operation types such that 
	the plurality of processes in $H$
	are poised to perform an operation of type $\POP$
	at the end of $E(\sigmaSetupB[\SSMax])$. 
Then let $H_4$ be a modification of $H_3$ such that
	for each process $p \in H_3$,
	$p$ is in $H_4$ if and only if
	$p$ is poised to perform an operation of type $\POP$
	at the end of $E(\sigmaSetupB[\SSMax])$. 
Since there are only 4 operation types,
	$|H_4| \geq |H_3| / 4$. 
	
Now let $H_5$ be a modification of $H_4$ such that
	each group of $H_4$ with at least $k / 32$ processes 
	is shrunk to contain only $k / 32$ processes,
	and all other groups are removed.
Since each group of $H_3$ originally had exactly $k / 4$ processes
	and $|H_4| \geq |H_3| / 4$ where $k > \log n$,
	it is easy to see that at least an eighth of the groups
	in $H_4$ remain with at least $k / 32$ of the processes.
So $|H_5| \geq |H_3| / 64 \geq |H_1| / 1024 > |H| / 2048$.

Let $h$ be the number of remaining groups in $H_5$.
Then since $H_5$ only contains groups 
	with exactly $k / 32$ processes,
	$h = 32 |H_5| / k$.
We arbitrarily order these groups, 
	and construct an array $G[0..h-1]$ such that
	for every integer $j \in \{0, 1, \ldots, h-1\}$,
	$G[j]$ is the $j$-th group in the ordering.
Then for every integer $j \in \{0, 1, \ldots, h-1\}$,
	let $R[j]$ be the register that every process in $G[j]$
	is poised to access at the end of $E(\sigmaSetupB[\SSMax])$.

Finally, let $S_H = F(\sigmaSetupB[\SSMax]) \cup H_5$.
We now construct a new array $\sigmaHighA[0..2^n-1]$ such that
	for every set $S \subseteq \mathcal{P}$,
	if $S \not\subseteq S_H$, 
	then $\sigmaHighA[S] = \bot$;
	otherwise $\sigmaHighA[S] = \sigmaSetupB[S]$.
	 
\begin{lem}
	\label{thm:HighAInvars}
	This new array $\sigmaHighA[0..2^n-1]$ is $(i-1)$-compliant
		with $\SMax(\sigmaHighA[0..2^n-1]) = S_H$.
	Furthermore, for every integer $j \in \{0, 1, \ldots, h-1\}$: 
	\begin{itemize}
		\item For every set $S \subseteq \mathcal{P}$ 
			such that $\sigmaHighA[S] \neq \bot$,
			every process in $G[j] \cap S$ is poised to access $R[j]$
			at the end of $E(\sigmaHighA[S])$.
			
		\item The owner of $R[j]$ is not in $S_H \setminus F(\sigmaHighA[S_H])$.
		
		\item For every set $S \subseteq \mathcal{P}$ 
			such that $\sigmaHighA[S] \neq \bot$,
			$\textit{val}_{R[j]}(\sigmaHighA[S]) = 
			\textit{val}_{R[j]}(\sigmaHighA[S_H])$.
	\end{itemize}
\end{lem}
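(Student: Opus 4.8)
The plan is to exploit the fact that $\sigmaHighA[0..2^n-1]$ is merely a restriction of $\sigmaSetupB[0..2^n-1]$: every non-$\bot$ entry of $\sigmaHighA$ equals the corresponding entry of $\sigmaSetupB$, and entries are nullified exactly for those $S$ with $S \not\subseteq S_H$. Since $H_5 \subseteq H \subseteq \SSMax \setminus F(\sigmaSetupB[\SSMax])$ (recall $B_R \cap F(\sigmaSetupB[\SSMax]) = \emptyset$), we have $F(\sigmaSetupB[\SSMax]) \subseteq S_H \subseteq \SSMax$, so this is the same ``shrink the maximal set'' operation already performed in \autoref{thm:LowAInvars}. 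Thus I would first argue compliance exactly as there: by \autoref{thm:SetupBInvars} and Invariant~\ref{RME-invar:uniqueSMax}, $\sigmaSetupB[S] \neq \bot$ iff $F(\sigmaSetupB[\SSMax]) \subseteq S \subseteq \SSMax$, and after the restriction $\sigmaHighA[S] \neq \bot$ iff $F(\sigmaSetupB[\SSMax]) \subseteq S \subseteq S_H$. Since no steps are altered and we only delete entries, each invariant that held over all non-$\bot$ entries of $\sigmaSetupB$ still holds over the smaller family of non-$\bot$ entries of $\sigmaHighA$, now with $\SMax(\sigmaHighA[0..2^n-1]) = S_H$ and $F(\sigmaHighA[S_H]) = F(\sigmaSetupB[\SSMax])$; the invariants referring to $\SMax \setminus F(\SMax)$ (Invariants~\ref{RME-invar:DSMSelfAccessOnly} and~\ref{RME-invar:ValidCCSame}) transfer because $S_H \setminus F(\sigmaHighA[S_H]) \subseteq \SSMax \setminus F(\sigmaSetupB[\SSMax])$. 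In particular $S_H \setminus F(\sigmaHighA[S_H]) = H_5$, a fact I will use repeatedly below.

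For the first bulleted claim, fix $j$ and a set $S$ with $\sigmaHighA[S] \neq \bot$. By construction every process in $G[j]$ is poised to access $R[j]$ at the end of $E(\sigmaSetupB[\SSMax])$. Any $p \in G[j] \cap S$ lies in $H_5 \subseteq \SSMax$, so Invariant~\ref{RME-invar:subsetStatesSame2} gives $\textit{state}_p(\sigmaHighA[S]) = \textit{state}_p(\sigmaHighA[S_H]) = \textit{state}_p(\sigmaSetupB[\SSMax])$. Since being poised to access $R[j]$ is a property of the local state, $p$ is poised to access $R[j]$ at the end of $E(\sigmaHighA[S])$.

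The remaining two bullets are where the $H_2$ step of the construction does its work, and this is the part requiring the most care. Recall that $H_2$ was obtained from $H_1$ by discarding every process $p$ for which some process in $H_1$ is poised to access a register owned by $p$, or is poised to access a register $R$ with $\textit{last}_R(\sigmaSetupB[\SSMax]) = p$. Since the processes of $G[j] \subseteq H_5 \subseteq H_2 \subseteq H_1$ are all poised to access $R[j]$, both the owner of $R[j]$ and $\textit{last}_{R[j]}(\sigmaSetupB[\SSMax])$, if either is a process in $H_1$, must have been removed when forming $H_2$; hence neither lies in $H_2 \supseteq H_5 = S_H \setminus F(\sigmaHighA[S_H])$, which is exactly the second bullet. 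For the third bullet, I apply Invariant~\ref{RME-invar:subsetRegsSimilar} to $\sigmaSetupB$ to obtain a value $y_{R[j]}$ with $\textit{val}_{R[j]}(\sigmaSetupB[S])$ equal to $\textit{val}_{R[j]}(\sigmaSetupB[\SSMax])$ when $\textit{last}_{R[j]}(\sigmaSetupB[\SSMax]) \in S$ and to $y_{R[j]}$ otherwise. Since $\textit{last}_{R[j]}(\sigmaSetupB[\SSMax]) \notin H_5$, it is either $\bot$, a process outside $S_H$, or a process in $F(\sigmaSetupB[\SSMax])$. In the first two cases $\textit{last}_{R[j]}(\sigmaSetupB[\SSMax]) \notin S$ for every $S \subseteq S_H$, so the value is uniformly $y_{R[j]}$; in the last case $\textit{last}_{R[j]}(\sigmaSetupB[\SSMax]) \in F(\sigmaSetupB[\SSMax]) \subseteq S$ for every non-$\bot$ $S$, so the value is uniformly $\textit{val}_{R[j]}(\sigmaSetupB[\SSMax])$. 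Either way $\textit{val}_{R[j]}(\sigmaHighA[S])$ is constant over all non-$\bot$ $S \subseteq S_H$, and hence equals $\textit{val}_{R[j]}(\sigmaHighA[S_H])$.

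I expect the main obstacle to be bookkeeping rather than deep insight: one must keep straight the nested chain $H \supseteq H_1 \supseteq \cdots \supseteq H_5$ and check that the two defining conditions of $H_2$ survive the later shrinking steps, so that they still apply to the groups $G[j]$ drawn from $H_5$. The only genuinely content-bearing step is the case analysis on the location of $\textit{last}_{R[j]}(\sigmaSetupB[\SSMax])$ in the last bullet, which couples Invariant~\ref{RME-invar:subsetRegsSimilar} with the second $H_2$ condition.
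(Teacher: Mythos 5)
Your proposal is correct and matches the paper's own (largely omitted) proof in all essentials: compliance is argued by the same ``shrink the maximal set'' restriction as in \autoref{thm:LowAInvars}, the first bullet follows from Invariant~\ref{RME-invar:subsetStatesSame2}, the second from the $H_2$ filtering conditions applied to the owner and to $\textit{last}_{R[j]}(\sigmaSetupB[\SSMax])$, and the third from the same case analysis (in $F(\sigmaSetupB[\SSMax])$ versus outside $S_H$ or $\bot$) combined with Invariant~\ref{RME-invar:subsetRegsSimilar}. No gaps.
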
 

\toOmit{

	\begin{proof}
		By \autoref{thm:SetupBInvars},
			$\sigmaSetupB[0..2^n-1]$ is $(i-1)$-compliant
			with $\SMax(\sigmaSetupB[0..2^n-1]) = \SSMax$.
		By construction, $\sigmaHighA[0..2^n-1]$ is simply 
			a modification of $\sigmaSetupB[0..2^n-1]$
			where every set $S \subseteq \mathcal{P}$ 
			that contains any process in $\SSMax \setminus S_H$
			has had $\sigmaHighA[S]$ set to $\bot$,
			where $F(\sigmaSetupB[\SSMax]) \subseteq S_H \subseteq \SSMax$.
		It suffices to observe that since every invariant still holds
			with $\SMax(\sigmaHighA[0..2^n-1]) = S_H$,
			$\sigmaHighA[0..2^n-1]$ is $(i-1)$-compliant
			with $\SMax(\sigmaHighA[0..2^n-1]) = S_H$.
			
		Now let $j$ be an integer in $\{0, 1, \ldots, h-1\}$.
		By definition, $R[j]$ is the register that every process in $G[j]$
			is poised to access at the end of $E(\sigmaSetupB[\SSMax])$.
		By \autoref{thm:SetupBInvars},
			$\sigmaSetupB[0..2^n-1]$ is $(i-1)$-compliant
			with $\SMax(\sigmaSetupB[0..2^n-1]) = \SSMax$.
		So by Invariant~\ref{RME-invar:subsetStatesSame2}, 
			for every process $p \in \SSMax$ 
			and every set $S \subseteq \mathcal{P}$ that contains $p$,
			if $\sigmaSetupB[S] \neq \bot$, then 
			$\textit{state}_p(\sigmaSetupB[S]) = 
			\textit{state}_p(\sigmaSetupB[\SSMax])$.
		Thus for every process $p \in G[j]$ 
			and every set $S \subseteq \mathcal{P}$ that contains $p$,
			if $\sigmaSetupB[S] \neq \bot$, then
			$p$ is also poised to access $R[j]$ 
			at the end of $E(\sigmaSetupB[S])$.
		
		By the construction of $\sigmaHighA[0..2^n-1]$,
			for every set $S \subseteq \mathcal{P}$ 
			such that $\sigmaHighA[S] \neq \bot$,
			$\sigmaHighA[S] = \sigmaSetupB[S]$.
		So for every process $p \in G[j]$ 
			and every set $S \subseteq \mathcal{P}$ that contains $p$,
			if $\sigmaHighA[S] \neq \bot$, then
			$p$ is also poised to access $R[j]$ 
			at the end of $E(\sigmaHighA[S])$.
		Thus for every set $S \subseteq \mathcal{P}$ 
			such that $\sigmaHighA[S] \neq \bot$,
			every process in $G[j] \cap S$ is poised to access $R[j]$
			at the end of $E(\sigmaHighA[S])$.
				
		Now recall that $S_H = F(\sigmaSetupB[\SSMax]) \cup H_5$,
			where $H_5 \subseteq H_2$.
		By construction, for each process $p \in H_1$,
			$p$ is in $H_2$ if and only if both of the following are true:
		\begin{itemize}
			\item No process in $H_1$ is poised to access
				a register owned by $p$
				at the end of $E(\sigmaSetupB[\SSMax])$.
				
			\item No process in $H_1$ is poised to access
				a register $R$ at the end of $E(\sigmaSetupB[\SSMax])$
				such that $\textit{last}_R(\sigmaSetupB[\SSMax]) = p$.
		\end{itemize}
		
		We have already shown that $\sigmaHighA[0..2^n-1]$ is $(i-1)$-compliant
			with $\SMax(\sigmaHighA[0..2^n-1]) = S_H$,
			and that for every set $S \subseteq \mathcal{P}$ 
			such that $\sigmaHighA[S] \neq \bot$,
			every process in $G[j] \cap S$ is poised to access $R[j]$
			at the end of $E(\sigmaHighA[S])$.
		So by Invariant~\ref{RME-invar:uniqueSMax},
			every process in $G[j] \cap S_H = G[j]$
			is poised to access $R[j]$ at the end of $E(\sigmaHighA[S_H])$.
		Thus:
		\begin{itemize}
			\item The owner of $R[j]$ is not in $S_H \setminus F(\sigmaHighA[S_H])$.
			\item $\textit{last}_{R[j]}(\sigmaSetupB[\SSMax])$
				is not in $S_H \setminus F(\sigmaHighA[S_H])$.
		\end{itemize}
		
		By \autoref{thm:SetupBInvars},
			$\sigmaSetupB[0..2^n-1]$ is $(i-1)$-compliant
			with $\SMax(\sigmaSetupB[0..2^n-1]) = \SSMax$.
		So by Invariant~\ref{RME-invar:subsetRegsSimilar},
			for every register $R \in \mathcal{R}$,
			there is a value $y_R$ such that
			for every set $S \subseteq \mathcal{P}$,
			if $\sigmaSetupB[S] \neq \bot$, then:
		\begin{displaymath}
			\textit{val}_R\bparen{\sigmaSetupB[S]}=
			\begin{cases}
					\textit{val}_R(\sigmaSetupB[\SSMax]) & 
						\text{if $\textit{last}_R(\sigmaSetupB[\SSMax]) \in S$} \\
					y_R & \text{otherwise}
			\end{cases}
		\end{displaymath}
		
		We have just shown that $\textit{last}_{R[j]}(\sigmaSetupB[\SSMax])$
			is not in $S_H \setminus F(\sigmaHighA[S_H])$.
		So either $\textit{last}_{R[j]}(\sigmaSetupB[\SSMax]) \in F(\sigmaHighA[S_H])$
			or $\textit{last}_R(\sigmaSetupB[\SSMax]) \not\in S_H$.
		 
		By the construction of $\sigmaHighA[0..2^n-1]$, 
			for every set $S \subseteq \mathcal{P}$,
			if $\sigmaHighA[S] \neq \bot$, then
			$\sigmaHighA[S] = \sigmaSetupB[S]$.
		Thus if $\textit{last}_{R[j]}(\sigmaSetupB[\SSMax])$ 
			is in $F(\sigmaHighA[S_H])$, 
			then for every set $S \subseteq \mathcal{P}$ 
			such that $\sigmaHighA[S] \neq \bot$,
			$\textit{val}_{R[j]}(\sigmaHighA[S]) = 
			\textit{val}_{R[j]}(\sigmaSetupB[\SSMax])$.
			
		Otherwise $\textit{last}_R(\sigmaSetupB[\SSMax])$ is not in $S_H$.
		Since we have already proven that
			$\sigmaHighA[0..2^n-1]$ is $(i-1)$-compliant
			with $\SMax(\sigmaHighA[0..2^n-1]) = S_H$,
			by Invariant~\ref{RME-invar:uniqueSMax},
			for every set $S \subseteq \mathcal{P}$ 
			such that $\sigmaHighA[S] \neq \bot$,
			$S \subseteq S_H$, and so 
			$\textit{last}_R(\sigmaSetupB[\SSMax])$ is not in $S$.
		Thus for every set $S \subseteq \mathcal{P}$ 
			such that $\sigmaHighA[S] \neq \bot$,
			$\textit{val}_R(\sigmaHighA[S]) = y_R$. 
			
		So in both cases, for every set $S \subseteq \mathcal{P}$ 
			such that $\sigmaHighA[S] \neq \bot$,
			$\textit{val}_{R[j]}(\sigmaHighA[S]) = 
			\textit{val}_{R[j]}(\sigmaHighA[S_H])$.
	\end{proof}
	
}
 
We now iterate over $j \in \{0, 1, \ldots, h-1\}$
	to construct two arrays $\alpha_1[0..h-1]$ and $\alpha_2[0..h-1]$
	of processes and an array $\alpha[0..h-1]$ of schedules as follows.
If $\POP$ is not CAS, then let $\alpha_1[j]$ and $\alpha_2[j]$ 
	be two arbitrary but distinct processes in $G[j]$,
	and let $\alpha[j] = \alpha_1[j] \circ \alpha_2[j]$.
Otherwise, consider the register $R[j]$ that every process in $G[j]$
	is poised to access at the end of $E(\sigmaHighA[S_H])$.
Let $v_j = \textit{val}_{R[j]}(\sigmaHighA[S_H] \circ \alpha[0] 
	\circ \alpha[1] \circ \ldots \circ \alpha[j-1])$.
Then let $\alpha_1[j]$ be any process in $G[j]$
	such that $\alpha_1[j]$ is about to perform
	a $\textsc{CAS}(v_j,v')$ operation where $v' \neq v_j$;
	if no such process exists, then let $\alpha_1[j]$ be any process in $G[j]$.
Next, let $\alpha_2[j]$ be any process in $G[j] \setminus \{\alpha_1[j]\}$.
Finally, let $\alpha[j] = \alpha_1[j] \circ \alpha_2[j]$.

For every integer $j \in \{0, 1, \ldots, h-1\}$,
	let $\sigma^j_\alpha$ be the concatenation of all schedules in $\alpha[0..j]$,
	i.e., $\sigma^j_\alpha = \alpha[0] \circ \alpha[1] \circ \ldots \circ \alpha[j]$.
Then let $\sigma_\alpha = \sigma^{h-1}_\alpha$.
Furthermore, let $S_\alpha$ be the set of all processes
	with steps in $\sigma_\alpha$.
Note that since $\alpha[j] = \alpha_1[j] \circ \alpha_2[j]$ for $0 \leq j \leq h-1$,
\begin{displaymath}
	S_\alpha=\bigcup^{h-1}_{j=0} \{\alpha_1[j], \alpha_2[j]\}
\end{displaymath}
In addition, note that $S_\alpha \subseteq H_5$
	and $|S_\alpha| = 2h = 64 |H_5| / k$. 
	
Next, let $S_F = S_\alpha \cup F(\sigmaHighA[S_H])$. 
Since $S_\alpha \subseteq H_5$ and $S_H = F(\sigmaHighA[S_H]) \cup H_5$,
	$F(\sigmaHighA[S_H]) \subseteq S_F \subseteq S_H$.
By \autoref{thm:HighAInvars},
	$\sigmaHighA[0..2^n-1]$ is $(i-1)$-compliant, so:
\begin{itemize}
	\item By Invariant~\ref{RME-invar:uniqueSMax},
		$\sigmaHighA[S_F] \neq \bot$.
		
	\item By Invariant~\ref{RME-invar:PsubsetS},
		$P(\sigmaHighA[S_F]) \subseteq S_F$.
		
	\item By Invariant~\ref{RME-invar:FSame}
		$F(\sigmaHighA[S_F]) = F(\sigmaHighA[S_H])$.
\end{itemize}
Thus observe that to avoid violating deadlock freedom,
	there must exist a schedule $\sigma_F$ such that:
\begin{itemize}
	\item $\sigma_F$ begins with exactly one crash step 
		of every process in $S_\alpha$,
		and contains no other crash steps.
		
	\item $\sigma_F$ contains only steps of processes in
		$S_\alpha = S_F \setminus F(\sigmaHighA[S_F])$, i.e.,
		$P(\sigma_F) = S_\alpha$.
		
	\item During $E(\sigmaHighA[S_F] \circ \sigma_\alpha \circ \sigma_F)$,
		every process in $S_F$ begins and then completes a super-passage, i.e.,
		$F(\sigmaHighA[S_F] \circ \sigma_\alpha \circ \sigma_F) = S_F$.
\end{itemize}

Let $C_F$ be the configuration at the end of $E(\sigmaHighA[S_F] \circ \sigma_\alpha)$.
Then let $\mathcal{R}_F$ be the set of every register that is accessed
	during $E(C_F,\sigma_F)$ (after the crash steps of 
	every process in $S_\alpha$ at the beginning of $\sigma_F$).
Next, let $\mathcal{D} \subseteq \mathcal{P}$
	be the set of every process $p \in H_5 \setminus S_\alpha$
	such that there exists a register $R \in \mathcal{R}_F$ such that
	either $p$ owns $R$, or $\textit{last}_R(\sigmaHighA[S_H]) = p$.
	
\begin{lem}
	\label{thm:DSize}
	$|\mathcal{D}| \leq 2 |S_\alpha| \log n = \frac{128}{k} |H_5| \log n$. 
\end{lem}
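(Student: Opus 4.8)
The plan is to bound $|\mathcal{D}|$ by relating it to the number of RMR-incurring steps taken in $E(C_F,\sigma_F)$. First I would observe that $|\mathcal{D}| \leq 2|\mathcal{R}'_F|$, where $\mathcal{R}'_F$ denotes the set of registers $R \in \mathcal{R}_F$ such that either $R$ is owned by a process in $H_5 \setminus S_\alpha$ or $\textit{last}_R(\sigmaHighA[S_H]) \in H_5 \setminus S_\alpha$. Indeed, every register has at most one owner and at most one last-accessor, so each register in $\mathcal{R}'_F$ can be responsible for at most two processes entering $\mathcal{D}$, and every process in $\mathcal{D}$ witnesses at least one register of $\mathcal{R}'_F$. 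Thus it suffices to show $|\mathcal{R}'_F| \leq |S_\alpha|\log n$.

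The key step is to argue that every register in $\mathcal{R}'_F$ is accessed by an RMR-incurring step during $E(C_F,\sigma_F)$ (after the initial crash steps of the processes in $S_\alpha$). Recall that $\sigma_F$ contains only steps of processes in $S_\alpha$, each of which crashes exactly once, at the very beginning of $\sigma_F$. In the DSM model, if $R$ is owned by a process $p \in H_5 \setminus S_\alpha$, then any access to $R$ during $E(C_F,\sigma_F)$ is by a process in $S_\alpha$, which is not the owner and hence incurs an RMR; and if instead $\textit{last}_R(\sigmaHighA[S_H]) = p \in H_5 \setminus S_\alpha$, then since $p$ is active, Invariant~\ref{RME-invar:DSMSelfAccessOnly} (applied to $\sigmaHighA[0..2^n-1]$, which is $(i-1)$-compliant by \autoref{thm:HighAInvars}) forces $p$ to own $R$, reducing to the previous case. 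In the CC model, every process in $S_\alpha$ loses all of its valid cache copies when it crashes at the start of $\sigma_F$, so the first access to any register in $\mathcal{R}_F$ during $E(C_F,\sigma_F)$ is made by a process with no valid cache copy and therefore incurs an RMR. In either model, distinct registers of $\mathcal{R}'_F$ are witnessed by distinct RMR-incurring steps, since a single step accesses only one register.

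Next I would bound the total number of RMR-incurring steps in $E(C_F,\sigma_F)$. Each process in $S_\alpha$ is active in $\sigmaHighA[S_H]$, so by Invariant~\ref{RME-invar:crashes} it has not crashed before $\sigma_F$; the single crash step at the start of $\sigma_F$ therefore terminates the passage it was executing, and since $\sigma_F$ contains no further crash steps and the process completes its super-passage, the rest of its execution in $\sigma_F$ constitutes a single passage. By Assumption~\ref{RME-assumption:logNRMRs}, that passage incurs at most $\log n$ RMRs. Summing over the $|S_\alpha|$ processes, at most $|S_\alpha|\log n$ RMR-incurring steps occur during $E(C_F,\sigma_F)$. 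Combining this with the injection from $\mathcal{R}'_F$ into these steps gives $|\mathcal{R}'_F| \leq |S_\alpha|\log n$, and hence $|\mathcal{D}| \leq 2|\mathcal{R}'_F| \leq 2|S_\alpha|\log n = \frac{128}{k}|H_5|\log n$, using $|S_\alpha| = 2h = 64|H_5|/k$.

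The main obstacle I expect is the per-model justification that every contributing register is touched by an RMR: cleanly arguing the DSM case through Invariant~\ref{RME-invar:DSMSelfAccessOnly}, and ensuring the CC case correctly relies on a crash clearing a process's cache, together with the ``single passage'' accounting that lets Assumption~\ref{RME-assumption:logNRMRs} apply to the post-crash portion of each $\alpha$-process's execution.
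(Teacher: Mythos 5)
Your proof follows essentially the same route as the paper's: charge the registers that can place a process into $\mathcal{D}$ to accesses performed by $S_\alpha$ processes during $\sigma_F$, note that each $S_\alpha$ process executes a single post-crash passage there and so, by Assumption~\ref{RME-assumption:logNRMRs}, contributes at most $\log n$ such registers, and multiply by two because a register has at most one owner and one last accessor. The paper organizes this slightly differently---it first disposes of the registers of $\mathcal{R}_F$ that \emph{are} owned by $S_\alpha$ processes, showing via Invariant~\ref{RME-invar:DSMSelfAccessOnly} that these contribute nothing to $\mathcal{D}$, and then bounds the remaining registers of $\mathcal{R}_F$ by $|S_\alpha|\log n$ directly---but the substance is the same.

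One step of your argument is wrong as written. In the DSM case you claim that if $\textit{last}_R(\sigmaHighA[S_H]) = p$ for an active $p \in H_5 \setminus S_\alpha$, then Invariant~\ref{RME-invar:DSMSelfAccessOnly} ``forces $p$ to own $R$.'' That is the converse of the invariant: it says a register \emph{owned} by an active process is accessed only by its owner, not that a register last \emph{accessed} by an active process is owned by it. Active processes access registers they do not own all the time---those are precisely their RMR-incurring steps---so $R$ may be owned by a finished process or by a process outside the schedule, and the reduction to your first case fails. The conclusion you need is still salvageable by applying the invariant in the correct direction: if $R$ were owned by some $q \in S_\alpha$ (which is active), then only $q$ could ever access $R$ in these executions, contradicting $\textit{last}_R(\sigmaHighA[S_H]) = p \neq q$; hence $R$ is owned by no $S_\alpha$ process, and every access to it in $\sigma_F$ incurs an RMR in the DSM model. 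A secondary caveat: your CC case assumes a crash invalidates the crashing process's cache copies, which the paper's model never states (only non-read operations are said to invalidate copies). This is the standard convention for RME (caches are volatile), and the paper's own terse proof is no more careful here, but if one rejects that assumption you need a different accounting---e.g., each $q \in S_\alpha$ holds at most $\log n$ valid cache copies at its crash (each was created by an RMR in its single prior passage), so it touches at most $2\log n$ distinct registers in $\sigma_F$, which still suffices because in the CC model ownership is vacuous and each register then charges only its last accessor to $\mathcal{D}$.
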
 

\toOmit{

	\begin{proof}
		First, consider each register $R \in \mathcal{R}_F$ such that
			$R$ is owned by a process in $S_\alpha$.
		Since $\mathcal{D} \subseteq H_5 \setminus S_\alpha$, 
			the owner of $R$ is not in $\mathcal{D}$.
		Furthermore, by \autoref{thm:HighAInvars},
			$\sigmaHighA[0..2^n-1]$ is $(i-1)$-compliant
			with $\SMax(\sigmaHighA[0..2^n-1]) = S_H$,
			so by Invariant~\ref{RME-invar:DSMSelfAccessOnly},
			since the owner of $R$ is in 
			$S_\alpha \subseteq H_5 \subseteq S_H \setminus F(\sigmaHighA[S_H])$,
			$R$ cannot be accessed by any process in $H_5 \setminus S_\alpha$
			during $E(\sigmaHighA[S_H])$.
		So $\textit{last}_R(\sigmaHighA[S_H]) \not\in \mathcal{D}$.
		Thus intuitively, each register $R \in \mathcal{R}_F$ that
			is owned by a process in $S_\alpha$ does not 
			contribute any processes to $\mathcal{D}$.
		
		So it suffices to consider the registers in $\mathcal{R}_F$
			that are not owned by any process in $S_\alpha$. 
		By Assumption~\ref{RME-assumption:logNRMRs},
			each process accesses at most $\log n$ registers 
			that it does not own during a passage.
		Thus there are at most $|S_\alpha| \log n$ 
			registers in $\mathcal{R}_F$
			that are not owned by any process in $S_\alpha$.
		Consequently, $|\mathcal{D}| \leq 2 |S_\alpha| \log n
			= \frac{160}{k} |H_5| \log n$. 
	\end{proof}
	
}

Now let $H_6$ be a modification of $H_5$ where
	every process $p \in H_5$ is in $H_6$
	if and only if $p \not\in \mathcal{D}$.
Note that since $\mathcal{D} \subseteq H_5 \setminus S_\alpha$
	and $S_\alpha \subseteq H_5$,
	$S_\alpha \subseteq H_6$.
By \autoref{thm:DSize}, $|\mathcal{D}| \leq \frac{128}{k} |H_5| \log n$,
	so $|H_6| \geq |H_5| - \frac{128}{k} |H_5| \log n$.
For sufficiently large $k$ ($k \geq 256 \log n$),
	$\frac{128}{k} |H_5| \log n \leq 0.5|H_5|$.
Thus, at least half of the processes in $H_5$ remain in $H_6$,
	and so it is easy to see that at least a quarter of the groups
	in $H_6$ remain with at least $k / 128$ processes
	(out of the $k/32$ originally in $H_5$).
Furthermore, since $S_\alpha \subseteq H_6$,
	for $0 \leq j \leq h-1$,
	since $S_\alpha$ contains $\alpha_1[j]$ and $\alpha_2[j]$,
	$\{\alpha_1[j], \alpha_2[j]\} \subseteq G[j] \cap H_6$.
So let $G'[0..h-1]$ be a new array such that 
	for all $j \in \{0, 1, \ldots, h-1\}$,
	$G'[j] = G[j] \cap H_6$, and so 
	$\{\alpha_1[j], \alpha_2[j]\} \subseteq G'[j]$.
	
Now for every integer $j \in \{0, 1, \ldots, h-1\}$, 
	let $\beta_1[j]$ be $\varnothing$ if $|G'[j]| < k / 160$;
	otherwise, let $\beta_1[j]$ be an arbitrary process in
	$G'[j] \setminus \{\alpha_1[j], \alpha_2[j]\}$.
Then let:
\begin{displaymath}
	S_\beta=\bigcup^{h-1}_{j=0} \{\alpha_1[j], \alpha_2[j], \beta_1[j]\}
\end{displaymath}
So: 
\begin{displaymath}
	S_\beta= S_\alpha \cup \bigcup^{h-1}_{j=0} \{\beta_1[j]\}
\end{displaymath} 
Note that by construction, 
	$S_\alpha \subseteq S_\beta \subseteq H_6 \subseteq H_5$.

\begin{lem}
	\label{thm:BetaSize}
	$|S_\beta \setminus S_\alpha| > \frac{|H|}{204.8 k}$. 
\end{lem}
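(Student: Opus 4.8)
The plan is to recognize $S_\beta \setminus S_\alpha$ as exactly the collection of defined beta processes and then count, over the $h$ groups, how many are large enough to contribute one. First I would use the identity $S_\beta = S_\alpha \cup \bigcup_{j=0}^{h-1}\{\beta_1[j]\}$ to write $S_\beta \setminus S_\alpha = \{\beta_1[j] : \beta_1[j] \neq \varnothing\}$. I would then argue these beta processes are distinct and avoid $S_\alpha$: the groups $G[0],\dots,G[h-1]$ are pairwise disjoint (each process is poised to access a single register, and distinct groups are associated with distinct registers $R[j]$), while $\beta_1[j] \in G'[j]\setminus\{\alpha_1[j],\alpha_2[j]\}\subseteq G[j]$ and each $\alpha_1[j'],\alpha_2[j'] \in G[j']$. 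Hence $|S_\beta \setminus S_\alpha|$ equals the number of ``good'' indices $j$ with $|G'[j]| \ge k/160$, for each of which $G'[j]\setminus\{\alpha_1[j],\alpha_2[j]\}$ is nonempty because $k/160 > 2$, so that $\beta_1[j]$ is genuinely defined.

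Next I would bound the number of good groups by controlling how many groups can shrink below the $k/160$ threshold when passing from $H_5$ to $H_6 = H_5 \setminus \mathcal{D}$. Every group in $H_5$ has exactly $k/32$ processes, so $|H_5| = h\cdot k/32$; and by \autoref{thm:DSize} together with $k \ge 256\log n$ we have $|\mathcal{D}| \le \tfrac{128}{k}|H_5|\log n \le \tfrac12|H_5|$. Since the groups partition $H_5$ and $G'[j] = G[j]\setminus\mathcal{D}$, a group is ``bad'' (i.e.\ $|G'[j]| < k/160$) only if it contributes more than $k/32 - k/160 = k/40$ processes to $\mathcal{D}$. Summing this over the $b$ bad groups gives $b\cdot(k/40) < |\mathcal{D}| \le \tfrac12|H_5| = hk/64$, hence $b < 5h/8$, so the number of good groups is $g = h - b > 3h/8$.

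Finally I would substitute the sizes established earlier, namely $h = 32|H_5|/k$ and $|H_5| > |H|/2048$:
\[
  |S_\beta \setminus S_\alpha| = g > \frac{3h}{8} = \frac{12\,|H_5|}{k} > \frac{12}{2048}\cdot\frac{|H|}{k} = \frac{3}{512}\cdot\frac{|H|}{k} > \frac{|H|}{204.8\,k},
\]
where the last inequality holds because $512/3 < 204.8$.

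I expect the main obstacle to be bookkeeping the constants rather than any conceptual step: one must fix the per-bad-group loss threshold exactly ($k/32 - k/160 = k/40$) and verify the resulting chain of inequalities retains enough slack to beat $1/204.8$. The one point genuinely needing care is the distinctness argument for the beta processes, which rests on the pairwise disjointness of the groups $G[j]$; this disjointness is what lets $|S_\beta \setminus S_\alpha|$ count good groups exactly instead of merely bounding it.
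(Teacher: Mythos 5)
Your proof is correct and follows essentially the same route as the paper's: count the groups whose intersection with $H_6$ stays above the threshold for defining $\beta_1[j]$, then substitute $h = 32|H_5|/k$ and $|H_5| > |H|/2048$. In fact your accounting is tighter than the paper's own proof: the paper asserts that at least a quarter of the groups retain at least $k/128$ processes, which yields only $0.25h = 8|H_5|/k > |H|/(256k)$ --- not enough for the stated constant --- and then silently switches to $10|H_5|/k$ in the final step; your per-bad-group loss threshold of $k/32 - k/160 = k/40$ gives $b < 5h/8$, hence $g > 3h/8 = 12|H_5|/k$, which cleanly beats $|H|/(204.8\,k)$. One correction, though, on the point you yourself flag as delicate: your justification for the pairwise disjointness of the groups $G[0..h-1]$ is wrong. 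Distinct groups are \emph{not} necessarily associated with distinct registers: a register $R$ with $|B_R| \geq 2k$ produces two or more groups, all poised on the same register, so the map $j \mapsto R[j]$ need not be injective. Disjointness holds for a simpler reason --- the groups are created by dividing (i.e., partitioning) the processes of $H$ into groups, so each process lies in at most one group; that is all you need for the $\beta_1[j]$'s to be distinct from one another and from every $\alpha_1[j'],\alpha_2[j'] \in G[j']$, and the rest of your argument goes through unchanged.
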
 
	
\toOmit{
	\begin{proof}
		Since at least a quarter of the $h$ groups in $H_6$
			have at least $k/128$ processes,
			$\beta_1[j] \neq \varnothing$ for at least
			a quarter of the integers $j \in \{0,1, \ldots, h-1\}$.
		Thus $|S_\beta \setminus S_\alpha| \geq 0.25h = 8 |H_5| / k$.
		
		Now recall that $|H_5| \geq |H_3| / 64 \geq |H_1| / 1024 > |H| / 2048$.
		So $|S_\beta \setminus S_\alpha| \geq 10 |H_5| / k > \frac{|H|}{204.8 k}$.
	\end{proof} 
}	
	
Next, let $S_B = S_\beta \cup F(\sigmaHighA[S_H])$.   
We now construct a new array $\sigmaHighB[0..2^n-1]$ such that
	for every set $S \subseteq \mathcal{P}$,
	if $S \not\subseteq S_B$, 
	then $\sigmaHighB[S] = \bot$;
	otherwise $\sigmaHighB[S] = \sigmaHighA[S]$.
Note that since $S_\beta \subseteq H_6 \subseteq H_5$ and 
	$S_H = F(\sigmaHighA[S_H]) \cup H_5$,
	$F(\sigmaHighA[S_H]) \subseteq S_B \subseteq S_H$.

\begin{lem}
	\label{thm:HighBInvars}
	This new array $\sigmaHighB[0..2^n-1]$ is $(i-1)$-compliant
		with $\SMax(\sigmaHighB[0..2^n-1]) = S_B$.
	Furthermore, for every integer $j \in \{0, 1, \ldots, h-1\}$: 
	\begin{itemize}
		\item For every set $S \subseteq \mathcal{P}$ 
			such that $\sigmaHighB[S] \neq \bot$,
			every process in $G[j] \cap S$ is poised to access $R[j]$
			at the end of $E(\sigmaHighB[S])$.
			
		\item The owner of $R[j]$ is not in $S_B \setminus F(\sigmaHighB[S_B])$.
		
		\item For every set $S \subseteq \mathcal{P}$ 
			such that $\sigmaHighB[S] \neq \bot$,
			$\textit{val}_{R[j]}(\sigmaHighB[S]) = 
			\textit{val}_{R[j]}(\sigmaHighB[S_B])$.
	\end{itemize}
	 
	In addition, for every register $R \in \mathcal{R}_F$,
	\begin{itemize}
		\item The owner of $R$ is not in $S_B \setminus F(\sigmaHighB[S_B])$.
		
		\item For every set $S \subseteq \mathcal{P}$ 
			such that $\sigmaHighB[S] \neq \bot$,
			$\textit{val}_R(\sigmaHighB[S]) = 
			\textit{val}_R(\sigmaHighB[S_B])$.
	\end{itemize} 
\end{lem}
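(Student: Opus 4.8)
The plan is to treat \autoref{thm:HighBInvars} exactly as the earlier ``restriction'' lemmas \autoref{thm:LowAInvars} and \autoref{thm:HighAInvars}: the array $\sigmaHighB[0..2^n-1]$ is obtained from $\sigmaHighA[0..2^n-1]$ by keeping only the columns $S\subseteq S_B$, where $F(\sigmaHighA[S_H])\subseteq S_B\subseteq S_H$. First I would record the two facts that drive everything else: for each surviving (non-$\bot$) entry we have $\sigmaHighB[S]=\sigmaHighA[S]$ by construction, and $F(\sigmaHighB[S_B])=F(\sigmaHighA[S_H])$ by \autoref{thm:HighAInvars} together with Invariant~\ref{RME-invar:FSame}, so that $S_B\setminus F(\sigmaHighB[S_B])=S_\beta$ and the non-$\bot$ entries of $\sigmaHighB$ are exactly the sets $S$ with $F(\sigmaHighA[S_H])\subseteq S\subseteq S_B$. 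Since restricting to the columns contained in $S_B$ can only shrink the maximal set and never disturbs any of Invariants~\ref{RME-invar:PsubsetS}--\ref{RME-invar:iRMRs} (each invariant is quantified over surviving entries, and every surviving entry is literally unchanged; the only point needing the usual care is the re-derivation of $\textit{last}_R$ and $y_R$ in Invariant~\ref{RME-invar:subsetRegsSimilar} for registers whose old last writer leaves $S_B$), the array $\sigmaHighB[0..2^n-1]$ is $(i-1)$-compliant with $\SMax(\sigmaHighB[0..2^n-1])=S_B$. This is the routine part and can be dispatched in one paragraph citing the corresponding argument of \autoref{thm:HighAInvars}.

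Next I would inherit the three $G[j]$/$R[j]$ bullets directly from \autoref{thm:HighAInvars}. Because every non-$\bot$ entry of $\sigmaHighB$ is a non-$\bot$ entry of $\sigmaHighA$ on which the two arrays agree, the ``poised to access $R[j]$'' statement is immediate from Invariant~\ref{RME-invar:subsetStatesSame2} (states are unchanged); the ``owner of $R[j]$ is not active'' statement is inherited because $S_B\setminus F(\sigmaHighB[S_B])\subseteq S_H\setminus F(\sigmaHighA[S_H])$; and the value statement follows from the chain $\textit{val}_{R[j]}(\sigmaHighB[S])=\textit{val}_{R[j]}(\sigmaHighA[S])=\textit{val}_{R[j]}(\sigmaHighA[S_H])=\textit{val}_{R[j]}(\sigmaHighB[S_B])$, where the middle equality is the third bullet of \autoref{thm:HighAInvars}.

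The real content is the block of bullets about $R\in\mathcal{R}_F$, and here the engine is the definition of $\mathcal{D}$ and of $H_6=H_5\setminus\mathcal{D}$. Fix $R\in\mathcal{R}_F$. By construction of $\mathcal{D}$, no process of $H_5\setminus S_\alpha$ owns $R$ or equals $\textit{last}_R(\sigmaHighA[S_H])$; since the only members of $S_\beta$ outside $S_\alpha$ are the $\beta_1[j]$, all of which lie in $H_6\subseteq H_5\setminus\mathcal{D}$, no $\beta$-process owns $R$ or is the last writer of $R$ in $\sigmaHighA[S_H]$. I would then combine this with Invariant~\ref{RME-invar:subsetRegsSimilar} for $\sigmaHighB$: the value of $R$ in $\sigmaHighB[S]$ equals $\textit{val}_R(\sigmaHighB[S_B])$ precisely when $\textit{last}_R(\sigmaHighB[S_B])$ survives in $S$ and $y_R$ otherwise, so the value-constancy bullet reduces to showing $\textit{last}_R(\sigmaHighB[S_B])\notin S_\beta$ (so that it lies in $F(\sigmaHighB[S_B])$ or outside $S_B$). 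In the DSM model Invariant~\ref{RME-invar:DSMSelfAccessOnly} pins the last writer of an owned register to its owner, and in the CC model Invariant~\ref{RME-invar:ValidCCSame} controls which active processes can have touched $R$; these feed the owner bullet and the value bullet in parallel, and together with the $\mathcal{D}$-exclusion of $\beta$-processes they close the $\beta$ case cleanly.

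The step I expect to be the main obstacle is the treatment of those registers of $\mathcal{R}_F$ that the $\alpha$-processes themselves touch while recovering, since $S_\alpha\subseteq S_B\setminus F(\sigmaHighB[S_B])$ yet $\mathcal{D}$ deliberately excludes $S_\alpha$. For such a register the clean ``$\textit{last}_R$ is finished'' reduction does not apply directly, and one must instead argue that its value is nonetheless identical across the schedules that matter, leaning on Invariant~\ref{RME-invar:DSMSelfAccessOnly} (in DSM such a register is accessed only by its $\alpha$-owner) and on the observation, already made in the first half of the proof of \autoref{thm:DSize}, that $\alpha$-owned registers of $\mathcal{R}_F$ contribute nothing to $\mathcal{D}$. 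Making this $\alpha$-owned case line up with the two bullets exactly as stated --- rather than only for the schedules that retain all of $S_\alpha$ --- is the delicate point, and it is where I would concentrate the effort, reusing the register-by-register accounting of \autoref{thm:DSize} and the value formula of Invariant~\ref{RME-invar:subsetRegsSimilar}.
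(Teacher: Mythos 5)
Your plan follows the same route as the paper's proof: $(i-1)$-compliance and the three $R[j]$ bullets by restriction of $\sigmaHighA[0..2^n-1]$ to the columns inside $S_B$ (using $\sigmaHighB[S]=\sigmaHighA[S]$ on surviving entries, $F(\sigmaHighB[S_B])=F(\sigmaHighA[S_H])$, Invariant~\ref{RME-invar:subsetStatesSame2}, and the value chain through \autoref{thm:HighAInvars}), and the $\mathcal{R}_F$ bullets via the exclusion $S_\beta\subseteq H_6$, $H_6\cap\mathcal{D}=\varnothing$. However, your proposal is not yet a proof, because the step you defer---the registers of $\mathcal{R}_F$ that $\alpha$-processes own or previously accessed---is not a verification that can be filled in later: it is exactly where the bullets, read literally over all non-$\bot$ columns, break down. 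In the DSM model an $\alpha$-process $p$ recovering in $\sigma_F$ can read a register $R$ it owns and wrote during $E(\sigmaHighB[S_B])$ (the setup phases consist precisely of own-register steps in DSM, so such writes exist). Then $R\in\mathcal{R}_F$ while its owner $p$ lies in $S_\alpha\subseteq S_B\setminus F(\sigmaHighB[S_B])$, so the owner bullet fails; and taking a surviving column $S$ with $p\notin S$ (e.g.\ $S=F(\sigmaHighB[S_B])$), Invariant~\ref{RME-invar:subsetRegsSimilar} gives $\textit{val}_R(\sigmaHighB[S])=y_R$, which need not equal $\textit{val}_R(\sigmaHighB[S_B])$, so the value bullet fails as well. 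The $\mathcal{D}$-based argument cannot reach this case, since $\mathcal{D}$ is by definition a subset of $H_5\setminus S_\alpha$.

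You should also know that the paper's own proof does not close this case either: it asserts that $\textit{last}_R(\sigmaHighA[S_H])\in\mathcal{D}$ for every $R\in\mathcal{R}_F$, which is justified only when that last accessor lies in $H_5\setminus S_\alpha$ (the cases $\bot$ and $F(\sigmaHighA[S_H])$ are harmless, but the case $S_\alpha$ is not). The repair is the weakening you yourself anticipate at the end of your write-up, and it is all the later lemmas need: the value bullet is only invoked in \autoref{thm:RFSame} for sets $S$ with $F(\sigmaHighB[S_B])\cup S_\alpha\subseteq S\subseteq S_B$, and for such $S$ every possible location of the last accessor---$\bot$, finished, in $S_\alpha\subseteq S$, or in $\mathcal{D}$ and hence outside $S_B$---collapses the formula of Invariant~\ref{RME-invar:subsetRegsSimilar} to the claimed equality; the owner bullet is only used in \autoref{thm:HighCInvars} to establish Invariant~\ref{RME-invar:DSMSelfAccessOnly} for $\sigmaHighC[0..2^n-1]$, whose set of active processes is $S_\beta\setminus S_\alpha$, so excluding only $\beta$-owners---which your $\mathcal{D}$-argument does prove---suffices. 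So the correct move is not to concentrate effort on the bullets as stated, but to restate them in this weaker, provable form; as written, neither your proposal nor the paper's own argument establishes them.
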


\toOmit{

	\begin{proof}
		By \autoref{thm:HighAInvars},
			$\sigmaHighA[0..2^n-1]$ is $(i-1)$-compliant
			with $\SMax(\sigmaHighA[0..2^n-1]) = S_H$.
		By construction, $\sigmaHighB[0..2^n-1]$ is simply 
			a modification of $\sigmaHighA[0..2^n-1]$
			where every set $S \subseteq \mathcal{P}$ 
			that contains any process in 
			$S_H \setminus S_B$
			has had $\sigmaHighB[S]$ set to $\bot$,
			where $F(\sigmaHighA[S_H]) \subseteq S_B \subseteq S_H$.
		It suffices to observe that since every invariant still holds
			with $\SMax(\sigmaHighB[0..2^n-1]) = S_B$,
			$\sigmaHighB[0..2^n-1]$ is $(i-1)$-compliant
			with $\SMax(\sigmaHighB[0..2^n-1]) = S_B$.
			 
		Furthermore, by \autoref{thm:HighAInvars}, 
			for every integer $j \in \{0, 1, \ldots, h-1\}$: 
		\begin{itemize}
			\item For every set $S \subseteq \mathcal{P}$ 
				such that $\sigmaHighA[S] \neq \bot$,
				every process in $G[j] \cap S$ is poised to access $R[j]$
				at the end of $E(\sigmaHighA[S])$.
				
			\item The owner of $R[j]$ is not in $S_H \setminus F(\sigmaHighA[S_H])$.
			
			\item For every set $S \subseteq \mathcal{P}$ 
				such that $\sigmaHighA[S] \neq \bot$,
				$\textit{val}_{R[j]}(\sigmaHighA[S]) = 
				\textit{val}_{R[j]}(\sigmaHighA[S_H])$.
		\end{itemize} 
		By the construction of $\sigmaHighB[0..2^n-1]$,
			for every set $S \subseteq \mathcal{P}$ 
			such that $\sigmaHighB[S] \neq \bot$,
			$\sigmaHighB[S] = \sigmaHighA[S]$.
		Thus for every integer $j \in \{0, 1, \ldots, h-1\}$ and
			every set $S \subseteq \mathcal{P}$ 
			such that $\sigmaHighB[S] \neq \bot$,
			every process in $G[j] \cap S$ is poised to access $R[j]$
			at the end of $E(\sigmaHighB[S])$.
		
		By \autoref{thm:HighAInvars},
			$\sigmaHighA[0..2^n-1]$ is $(i-1)$-compliant.
		So by Invariant~\ref{RME-invar:FSame},
			$F(\sigmaHighA[S_H]) = F(\sigmaHighA[S_B]) = F(\sigmaHighB[S_B])$.
		Then, since $S_B \subseteq S_H$,
			for every integer $j \in \{0, 1, \ldots, h-1\}$,
			the owner of $R[j]$ is not in $S_B \setminus F(\sigmaHighB[S_B])$.
		
		Next, for every integer $j \in \{0, 1, \ldots, h-1\}$ and
			every set $S \subseteq \mathcal{P}$ 
			such that $\sigmaHighB[S] \neq \bot$, 
		\begin{align*}
			\textit{val}_{R[j]}(\sigmaHighB[S]) 
				& = \textit{val}_{R[j]}(\sigmaHighA[S]) \\
				& = \textit{val}_{R[j]}(\sigmaHighA[S_H]) \\
				& = \textit{val}_{R[j]}(\sigmaHighA[S_B]) \\
				& = \textit{val}_{R[j]}(\sigmaHighB[S_B])
		\end{align*}
		
		Thus we have proven that 
			for every integer $j \in \{0, 1, \ldots, h-1\}$: 
		\begin{itemize}
			\item For every set $S \subseteq \mathcal{P}$ 
				such that $\sigmaHighB[S] \neq \bot$,
				every process in $G[j] \cap S$ is poised to access $R[j]$
				at the end of $E(\sigmaHighB[S])$.
				
			\item The owner of $R[j]$ is not in $S_B \setminus F(\sigmaHighB[S_B])$.
			
			\item For every set $S \subseteq \mathcal{P}$ 
				such that $\sigmaHighB[S] \neq \bot$,
				$\textit{val}_{R[j]}(\sigmaHighB[S]) = 
				\textit{val}_{R[j]}(\sigmaHighB[S_B])$.
		\end{itemize} 
		
		By definition, $\mathcal{D} \subseteq \mathcal{P}$
			is the set of every process $p \in H_5 \setminus S_\alpha$
			such that there exists a register $R \in \mathcal{R}_F$ such that
			either $p$ owns $R$, or $\textit{last}_R(\sigmaHighA[S_H]) = p$.
		By construction, $H_6 \cap \mathcal{D} = \varnothing$,
			$S_\beta \subseteq H_6$, and $S_B = S_\beta \cup F(\sigmaHighA[S_H])$.
		Furthermore, recall that $F(\sigmaHighA[S_H]) = F(\sigmaHighB[S_B])$.
		Thus for every register $R \in \mathcal{R}_F$, 
			the owner of $R$ is not in $S_B \setminus F(\sigmaHighB[S_B])$.
		
		By \autoref{thm:HighAInvars},
			$\sigmaHighA[0..2^n-1]$ is $(i-1)$-compliant
			with $\SMax(\sigmaHighA[0..2^n-1]) = S_H$.
		So by Invariant~\ref{RME-invar:subsetRegsSimilar},
			for every register $R \in \mathcal{R}_F$,
			there is a value $y_R$ such that
			for every set $S \subseteq \mathcal{P}$,
			if $\sigmaHighA[S] \neq \bot$, then:
		\begin{displaymath}
			\textit{val}_R\bparen{\sigmaHighA[S]}=
			\begin{cases}
					\textit{val}_R(\sigmaHighA[S_H]) & 
						\text{if $\textit{last}_R(\sigmaHighA[S_H]) \in S$} \\
					y_R & \text{otherwise}
			\end{cases}
		\end{displaymath}
		Since $\textit{last}_R(\sigmaHighA[S_H]) \in \mathcal{D}$
			and $\mathcal{D} \cap S_\beta = \varnothing$,
			for every set $S \subseteq \mathcal{P}$ such that 
			$F(\sigmaHighA[S_H]) = F(\sigmaHighB[S_B]) \subseteq 
			S \subseteq S_B \subseteq S_H$,
			$\textit{val}_R(\sigmaHighB[S]) = y_R$.
		Furthermore, since we have already proven that
			$\sigmaHighB[0..2^n-1]$ is $(i-1)$-compliant
			with $\SMax(\sigmaHighB[0..2^n-1]) = S_B$,
			by Invariant~\ref{RME-invar:uniqueSMax},
			for every set $S \subseteq \mathcal{P}$ such that $\sigmaHighB[S] \neq \bot$,
			$F(\sigmaHighB[S_B]) \subseteq S \subseteq S_B$.
		Thus for every register $R \in \mathcal{R}_F$,
			and every set $S \subseteq \mathcal{P}$ such that $\sigmaHighB[S] \neq \bot$,
			$\textit{val}_R(\sigmaHighB[S]) = \textit{val}_R(\sigmaHighB[S_B])$.
	\end{proof}

}
 
We now iterate over $j \in \{0, 1, \ldots, h-1\}$
	to construct an array $\beta[0..h-1]$ of schedules as follows.
Recall that by definition,
\begin{itemize}
	\item $\beta_1[j]$ is $\varnothing$ if $|G'[j]| < k / 160$;
		otherwise, $\beta_1[j]$ is an arbitrary process in
		$G'[j] \setminus \{\alpha_1[j], \alpha_2[j]\}$.	
		
	\item $v_j = \textit{val}_{R[j]}(\sigmaHighA[S_H] \circ \sigma^{j-1}_\alpha)$.
\end{itemize}  
If $\POP$ is CAS and $\beta_1[j] \neq \varnothing$,
	then let $v_\beta$ and $v'_\beta$ be such that $\beta_1[j]$ 
	is poised to perform a $\textsc{CAS}(v_\beta,v'_\beta)$ operation on $R[j]$ 
	at the end of $E(\sigmaHighB[S_B])$.
Note that by \autoref{thm:HighBInvars}
	and Invariant~\ref{RME-invar:subsetStatesSame2},
	$\beta_1[j]$ would also be poised to perform 
	a $\textsc{CAS}(v_\beta,v'_\beta)$ operation on $R[j]$ 
	at the end of $E(\sigmaHighB[S])$ for every set $S \subseteq \mathcal{P}$
	such that $\sigmaHighB[S] \neq \bot$ and $\beta_1[j] \in S$.
We then define:
\begin{displaymath}
	\beta[j] =
	\begin{cases}
			\alpha_1[j] \circ \alpha_2[j] = \alpha[j] & \text{if $|G'[j] < k / 160$} \\
			\beta_1[j] \circ \alpha_2[j] & \text{if $\POP$ is FAI} \\
			\beta_1[j] \circ \alpha_1[j] \circ \alpha_2[j] & 
				\text{if $\POP$ is CAS and $v_\beta \neq v_j$} \\
			\alpha_1[j] \circ \beta_1[j] \circ \alpha_2[j] & \text{otherwise} 
	\end{cases}
\end{displaymath} 
 
By \autoref{thm:HighBInvars},
	$\sigmaHighB[0..2^n-1]$ is $(i-1)$-compliant
	with $\SMax(\sigmaHighB[0..2^n-1]) = S_B$.
Furthermore, $S_\alpha \subseteq S_\beta \subseteq S_B$,
	so $F(\sigmaHighB[S_B]) \cup S_\alpha \subseteq S_B$.
Thus by Invariant~\ref{RME-invar:uniqueSMax},
	for every set $S \subseteq \mathcal{P}$ such that 
	$F(\sigmaHighB[S_B]) \cup S_\alpha \subseteq S \subseteq S_B$, 
	$\sigmaHighB[S] \neq \bot$.
	
So for every set $S \subseteq \mathcal{P}$ such that 
	$F(\sigmaHighB[S_B]) \cup S_\alpha \subseteq S \subseteq S_B$, 
	let $C'_S$ be the configuration at the end of $E(\sigmaHighB[S])$,
	and let $\sigma'_S$ be a modification of $\sigma_\alpha$ such that 
	for every integer $j \in \{0, 1, \ldots, h-1\}$,
	if $\beta_1[j] \neq \varnothing$ and $\beta_1[j] \in S$,
	then $\alpha[j]$ is replaced by $\beta[j]$ in $\sigma'_S$.
Note that by this construction, $\sigma'_S$ contains exactly one non-crash step
	of each process in $S_\beta \cap S$ and no other steps.
	
\begin{lem}
	\label{thm:ExactlyGroupRegsAccessed}
	For every set $S \subseteq \mathcal{P}$ such that 
		$F(\sigmaHighB[S_B]) \cup S_\alpha \subseteq S \subseteq S_B$,
		the set of registers accessed during $E(C'_S,\sigma'_S)$
		is exactly the set of registers accessed during $E(C'_S,\sigma_\alpha)$ 
		and exactly the set of registers in $R[0..h-1]$. 
\end{lem}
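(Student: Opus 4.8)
The plan is to show that every step occurring in $\sigma_\alpha$ or in $\sigma'_S$ accesses the register $R[j]$ of the group to which the stepping process belongs, and no other register; the two register sets then coincide and both equal $\{R[j]:0\le j\le h-1\}$.

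First I would record the structural facts about the processes involved. Every step of $\sigma_\alpha$ is a single (non-crash) step of some $\alpha_1[j]$ or $\alpha_2[j]$, and every step of $\sigma'_S$ is a single step of some $\alpha_1[j]$, $\alpha_2[j]$, or $\beta_1[j]$ with $\beta_1[j]\in S$. Since the groups $G[0],\dots,G[h-1]$ are pairwise disjoint and $\beta_1[j]\in G'[j]\setminus\{\alpha_1[j],\alpha_2[j]\}\subseteq G[j]$, each such process appears exactly once, so it takes exactly one step, and each lies in $G[j]\cap S$ for its group index $j$ (using $S_\alpha\subseteq S$ for the $\alpha$-processes and the condition $\beta_1[j]\in S$ for the $\beta$-processes). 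By \autoref{thm:HighBInvars}, every process in $G[j]\cap S$ is poised to access $R[j]$ at the end of $E(\sigmaHighB[S])$, i.e.\ in the configuration $C'_S$.

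The crucial observation is that a process's local state---and hence the register its next step accesses---is unaffected by the steps of other processes, changing only when the process itself takes a step or crashes. Since each of the processes above takes exactly one non-crash step and is poised to access $R[j]$ in $C'_S$, that single step in $E(C'_S,\sigma_\alpha)$ or $E(C'_S,\sigma'_S)$ accesses exactly $R[j]$, no matter how the intervening steps of the other processes reorder or alter register values. Consequently both executions access only registers in $\{R[j]:0\le j\le h-1\}$. For the reverse inclusion I would note that each $R[j]$ is in fact touched: $\sigma_\alpha$ contains the step of $\alpha_1[j]$ for every $j$, and for $\sigma'_S$ I would inspect the definition of $\beta[j]$ case by case---when $\beta_1[j]=\varnothing$ or $\beta_1[j]\notin S$ the fragment for group $j$ is $\alpha[j]=\alpha_1[j]\circ\alpha_2[j]$, and in each of the three remaining cases ($\POP$ is FAI; $\POP$ is CAS with $v_\beta\neq v_j$; otherwise) the fragment still contains at least one step of a process in $G[j]$ (indeed it always contains $\alpha_2[j]$). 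Hence $R[j]$ is accessed for every $j$, giving both inclusions.

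I do not expect a genuine obstacle, as this is essentially a bookkeeping statement. The only point that requires care is the ``local state is unaffected by other processes'' argument, combined with the fact that each relevant process takes exactly one step, so that being poised to access $R[j]$ in $C'_S$ (guaranteed by \autoref{thm:HighBInvars}) is precisely what determines the register accessed---this is what lets me ignore the differing orderings of $\sigma_\alpha$ and $\sigma'_S$.
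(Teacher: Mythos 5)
Your proposal is correct and follows essentially the same route as the paper's proof: both directions rest on \autoref{thm:HighBInvars} (every process in $G[j]\cap S$ is poised to access $R[j]$ in $C'_S$) together with the observation that each stepping process takes exactly one step, so the register it accesses is fixed by its state in $C'_S$ regardless of interleaving, and the reverse inclusion follows because $\alpha_2[j]$ appears in every group fragment of both $\sigma_\alpha$ and $\sigma'_S$. Your explicit case analysis of $\beta[j]$ is slightly more detailed than the paper's one-line appeal to the construction, but the argument is the same.
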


\toOmit{

	\begin{proof}
		First, recall that for every set $S \subseteq \mathcal{P}$ such that 
			$F(\sigmaHighB[S_B]) \cup S_\alpha \subseteq S \subseteq S_B$, 
			$\sigmaHighB[S] \neq \bot$.
		So let $S \subseteq \mathcal{P}$ be a set of processes such that
			$F(\sigmaHighB[S_B]) \cup S_\alpha \subseteq S \subseteq S_B$. 
		
		By \autoref{thm:HighBInvars}, for every integer $j \in \{0, 1, \ldots, h-1\}$,
			every process in $G[j] \cap S$ is poised to access $R[j]$
			at the end of $E(\sigmaHighB[S])$.
		By construction, both $\sigma'_S$ and $\sigma_\alpha$
			contain at most one non-crash step 
			of each process in $S_\beta \cap S$ and no other steps.
		Since $S_\beta \subseteq H_5$, where $G[0..h-1]$ 
			are the groups of processes that constitute $H_5$,
			every process with a step in $\sigma'_S$ or $\sigma_\alpha$
			is poised to access a register in $R[0..h-1]$
			at the end of $E(\sigmaHighB[S])$.
		Therefore every register accessed during either $E(C'_S,\sigma'_S)$
			or $E(C'_S,\sigma_\alpha)$ is in $R[0..h-1]$.
			
		Next, by construction, for every integer $j \in \{0, 1, \ldots, h-1\}$,
			both $\sigma'_S$ and $\sigma_\alpha$ 
			contain a non-crash step of $\alpha_2[j]$.
		By \autoref{thm:HighBInvars}, 
			for every integer $j \in \{0, 1, \ldots, h-1\}$,
			$\alpha_2[j]$ is poised to access $R[j]$
			at the end of $E(\sigmaHighB[S])$.
		So every register in $R[0..h-1]$ is accessed
			during both $E(C'_S,\sigma'_S)$ and $E(C'_S,\sigma_\alpha)$.
			
		Thus we have shown that 
			the set of registers accessed during $E(C'_S,\sigma'_S)$
			is exactly the set of registers accessed during $E(C'_S,\sigma_\alpha)$ 
			and exactly the set of registers in $R[0..h-1]$. 
	\end{proof}
}

Now for every integer $j \in \{0,1,\ldots,h-1\}$,
	let $\sigma'_S[j]$ and $\sigma_\alpha[j]$ be the suffixes 
	of $\sigma'_S$ and $\sigma_\alpha$ that contain
	only the steps of processes in $G[0..j]$.
(So for every integer $j \in \{0,1,\ldots,h-1\}$, $\sigma_\alpha[j] = \sigma^j_\alpha$.)
We also define $\sigma'_S[-1] = \sigma_\alpha[-1] = \varnothing$.

\begin{lem}
	\label{thm:AlphaBetaIndistinguishable}
	For every register $R \in \mathcal{R}$
		every integer $j \in \{0, 1, \ldots, h-1\}$,
		and every set $S \subseteq \mathcal{P}$ such that 
		$F(\sigmaHighB[S_B]) \cup S_\alpha \subseteq S \subseteq S_B$,
		$\textit{val}_R(C'_S, \sigma'_S[j]) = \textit{val}_R(C'_S,\sigma_\alpha[j])$. 
\end{lem}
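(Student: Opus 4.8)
The plan is to prove the statement by induction on $j$, with base case $j=-1$: there $\sigma'_S[-1]=\sigma_\alpha[-1]=\varnothing$, so both sides equal $\textit{val}_R(C'_S,\varnothing)$. The one auxiliary fact I need is that the group register $R[j]$ holds exactly the value $v_j$ just before group $j$ acts, i.e.\ $\textit{val}_{R[j]}(C'_S,\sigma_\alpha[j-1])=v_j$ for every set $S$ in the statement. To establish it, note that $\sigmaHighB[S]=\sigmaHighA[S]$, so by \autoref{thm:HighAInvars} every group register $R[0],\dots,R[h-1]$ has the same value at $C'_S$ (the end of $E(\sigmaHighB[S])$) as at the end of $E(\sigmaHighA[S_H])$, and by Invariant~\ref{RME-invar:subsetStatesSame2} every process of $S_\alpha$ has the same state in these two configurations. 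Since $\sigma^{m}_\alpha$ consists of a single step of each process in $S_\alpha$, each of which accesses a group register, a routine step-by-step induction shows that $E(C'_S,\sigma_\alpha[m])$ and $E(\sigmaHighA[S_H]\circ\sigma^m_\alpha)$ leave every group register with identical values; taking $m=j-1$ and reading off $R[j]$ yields the auxiliary fact.

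For the inductive step I compare the group-$j$ blocks of $\sigma'_S$ and $\sigma_\alpha$, both run from configurations that, by the induction hypothesis, agree on every register value; let $u_j$ be their common value of $R[j]$, so $u_j=v_j$ by the auxiliary fact. For a register $R\neq R[j]$ the group-$j$ steps—whether $\alpha[j]$ or $\beta[j]$—access only $R[j]$, so the induction hypothesis carries the equality through. For $R=R[j]$, the processes $\alpha_1[j],\alpha_2[j],\beta_1[j]$ take no step in groups $0,\dots,j-1$ (the groups are pairwise disjoint), hence in both runs they enter group $j$ with precisely their states at $C'_S$; it therefore suffices to show that the blocks $\alpha[j]=\alpha_1[j]\circ\alpha_2[j]$ and $\beta[j]$ both send the value $v_j$ of $R[j]$ to the same result. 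When $\beta_1[j]=\varnothing$ or $\beta_1[j]\notin S$ this is immediate, since $\sigma'_S$ then uses $\alpha[j]$ verbatim.

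The remaining work is the case analysis on $\POP$, the common operation type of all processes in $G[j]$ on $R[j]$. If $\POP$ is a read, every step leaves $R[j]$ unchanged. If $\POP$ is a fetch-and-store, both blocks finish with $\alpha_2[j]$'s store, which writes the same value. If $\POP$ is a fetch-and-increment, both blocks apply exactly two increments, with $\beta_1[j]$ standing in for $\alpha_1[j]$, so $R[j]$ changes identically. The delicate case, and the main obstacle, is $\POP=\textsc{CAS}$, where $\beta[j]$ is ordered precisely so that $\beta_1[j]$'s compare-and-swap is ineffective: using $u_j=v_j$, when $v_\beta\neq v_j$ the block places $\beta_1[j]$ first and its swap fails against the value $v_j$; when $v_\beta=v_j$ it places $\beta_1[j]$ after $\alpha_1[j]$, and the special selection of $\alpha_1[j]$ as a process performing $\textsc{CAS}(v_j,v')$ with $v'\neq v_j$—which is guaranteed to exist exactly when $\beta_1[j]$'s own swap would otherwise change the register—first moves $R[j]$ off $v_j$ so that $\beta_1[j]$ fails, while a swap of the form $\textsc{CAS}(v_j,v_j)$ is harmless in any ordering. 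In every sub-case $\beta_1[j]$ leaves $R[j]$ untouched and $\alpha_1[j],\alpha_2[j]$ act on the same values as in $\alpha[j]$, so $R[j]$ ends with the same value, closing the induction.
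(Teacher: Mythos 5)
Your proof is correct and follows essentially the same route as the paper's: induction on the group index $j$ with base case $j=-1$, reduction to the single register $R[j]$ via the disjointness of groups and the invariance of the $G[j]$-processes' states, and the same case analysis on $\POP$ (read/FAS/FAI/CAS) exploiting the chosen orderings of $\alpha[j]$ versus $\beta[j]$ so that $\beta_1[j]$'s step never changes $R[j]$'s value. The only difference is organizational: you extract the fact that $R[j]$ holds $v_j$ before group $j$ acts as an up-front auxiliary claim, whereas the paper establishes it only inside the CAS case (the sole case that needs it).
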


\toOmit{

	\begin{proof}
		First, recall that for every set $S \subseteq \mathcal{P}$ such that 
			$F(\sigmaHighB[S_B]) \cup S_\alpha \subseteq S \subseteq S_B$, 
			$\sigmaHighB[S] \neq \bot$.
		So let $S \subseteq \mathcal{P}$ be a set of processes such that
			$F(\sigmaHighB[S_B]) \cup S_\alpha \subseteq S \subseteq S_B$.
		
		By definition, $\sigma'_S[-1] = \sigma_\alpha[-1] = \varnothing$.
		Thus for every register $R \in \mathcal{R}$,
			$\textit{val}_R(C'_S, \sigma'_S[-1]) = \textit{val}_R(C'_S,\sigma_\alpha[-1])$.
		So it suffices to show that for every register $R \in \mathcal{R}$
			and every integer $j \in \{0, 1, \ldots, h-1\}$,
			if $\textit{val}_R(C'_S, \sigma'_S[j-1]) = \textit{val}_R(C'_S,\sigma_\alpha[j-1])$,
			then $\textit{val}_R(C'_S, \sigma'_S[j]) = \textit{val}_R(C'_S,\sigma_\alpha[j])$.
		
		Thus let $j$ be an integer in $\{0, 1, \ldots, h-1\}$,
			and suppose that for every register $R \in \mathcal{R}$,
			$\textit{val}_R(C'_S, \sigma'_S[j-1]) = \textit{val}_R(C'_S,\sigma_\alpha[j-1])$.
		By the construction of $\sigma'_S$,
			$\alpha[j]$ is replaced by $\beta[j]$
			if and only if $\beta_1[j] \neq \varnothing$ and $\beta_1[j] \in S$.
		Thus if either $\beta_1[j] = \varnothing$ or $\beta_1[j] \not\in S$,
			then for every register $R \in \mathcal{R}$,
			$\textit{val}_R(C'_S, \sigma'_S[j]) = 
			\textit{val}_R(C'_S,\sigma_\alpha[j])$ as wanted.
		
		Otherwise, $\beta_1[j] \neq \varnothing$ and $\beta_1[j] \in S$.
		Then by the definition of $\beta[j]$:
		\begin{displaymath}
			\beta[j] =
			\begin{cases} 
					\beta_1[j] \circ \alpha_2[j] & \text{if $\POP$ is FAI} \\
					\beta_1[j] \circ \alpha_1[j] \circ \alpha_2[j] & 
						\text{if $\POP$ is CAS and $v_\beta \neq v_j$} \\
					\alpha_1[j] \circ \beta_1[j] \circ \alpha_2[j] & \text{otherwise} 
			\end{cases}
		\end{displaymath}
		Since $\alpha_1[j]$ and $\alpha_2[j]$ are in $S_\alpha \subseteq S$,
			all of $\alpha_1[j]$, $\alpha_2[j]$, and $\beta_1[j]$ are in $G[j] \cap S$.
		So by \autoref{thm:HighBInvars},
			all of $\alpha_1[j]$, $\alpha_2[j]$, and $\beta_1[j]$ are poised
			to access $R[j]$ at the end of $E(\sigmaHighB[S])$.
		By construction, both $\sigma'_S$ and $\sigma_\alpha$
			contain at most one non-crash step of each process and no other steps.
		Thus for each process $p \in \{\alpha_1[j], \alpha_2[j], \beta_1[j]\}$,
			$\textit{state}_p(C'_S,\sigma'_S[j-1]) = \textit{state}_p(\sigmaHighB[S])
			= \textit{state}_p(C'_S,\sigma_\alpha[j-1])$.
			
		Therefore all of $\alpha_1[j]$, $\alpha_2[j]$, and $\beta_1[j]$ are poised
			to access $R[j]$ at the end of both 
			$E(C'_S,\sigma'_S[j-1])$ and $E(C'_S,\sigma_\alpha[j-1])$.		
		Thus for every register $R \in \mathcal{R} \setminus R[j]$,
			$\textit{val}_R(C'_S, \sigma'_S[j]) = \textit{val}_R(C'_S,\sigma_\alpha[j])$.  
		It now suffices to show that 
			$\textit{val}_{R[j]}(C'_S, \sigma'_S[j]) 
			= \textit{val}_{R[j]}(C'_S,\sigma_\alpha[j])$.
				
		If $\POP$ is read, then since reads do not change the value of a register,
			$\textit{val}_{R[j]}(C'_S, \sigma'_S[j]) 
			= \textit{val}_{R[j]}(C'_S,\sigma_\alpha[j])$ as wanted. 
			
		If $\POP$ is FAS, 
			then $\alpha[j] = \alpha_1[j] \circ \alpha_2[j]$
			and $\beta[j] = \alpha_1[j] \circ \beta_1[j] \circ \alpha_2[j]$.
		Let $v_2$ be the value such that $\alpha_2[j]$
			is poised to perform $\textsc{FAS}(v_2)$ at the end of both 
			$E(C'_S,\sigma'_S[j-1])$ and $E(C'_S,\sigma_\alpha[j-1])$.	
		Then $\textit{val}_{R[j]}(C'_S, \sigma_\alpha[j]) = v_2$ and 
			$\textit{val}_{R[j]}(C'_S, \sigma'_S[j]) = v_2$.
		Thus $\textit{val}_{R[j]}(C'_S, \sigma'_S[j]) 
			= \textit{val}_{R[j]}(C'_S,\sigma_\alpha[j])$ as wanted.
			
		If $\POP$ is FAI,
			then $\alpha[j] = \alpha_1[j] \circ \alpha_2[j]$
			and $\beta[j] = \beta_1[j] \circ \alpha_2[j]$.
		So $\textit{val}_{R[j]}(C'_S, \sigma_\alpha[j]) = 
			\textit{val}_{R[j]}(C'_S,\sigma'_\alpha[j-1]) + 2$ and 
			$\textit{val}_{R[j]}(C'_S, \sigma'_S[j]) = 
			\textit{val}_{R[j]}(C'_S,\sigma'_S[j-1]) + 2$.
		Thus $\textit{val}_{R[j]}(C'_S, \sigma'_S[j]) 
			= \textit{val}_{R[j]}(C'_S,\sigma_\alpha[j])$ as wanted.
		
		Finally, consider the case where $\POP$ is CAS.
		Recall that $v_j = \textit{val}_{R[j]}(\sigmaHighA[S_H] \circ \sigma_\alpha[j-1])$.
		By \autoref{thm:HighAInvars},	$\textit{val}_{R[j]}(\sigmaHighA[S_H]) 
			= \textit{val}_{R[j]}(\sigmaHighA[S])$.
		Then, since $\sigma_\alpha[j-1]$ contains at most one step by each process,
			observe that $\textit{val}_{R[j]}(\sigmaHighA[S_H] \circ \sigma_\alpha[j-1]) 
			= \textit{val}_{R[j]}(\sigmaHighA[S] \circ \sigma_\alpha[j-1])$.
		By definition, since $\sigmaHighB[S] \neq \bot$,
			$\sigmaHighB[S] = \sigmaHighA[S]$.
		Thus $v_j = \textit{val}_{R[j]}(\sigmaHighB[S] \circ \sigma_\alpha[j-1])$.
		Then, since for every register $R \in \mathcal{R}$,
			$\textit{val}_R(C'_S, \sigma'_S[j-1]) = \textit{val}_R(C'_S,\sigma_\alpha[j-1])$,
			$v_j = \textit{val}_{R[j]}(\sigmaHighB[S] \circ \sigma'_S[j-1])$ too.

		By definition, either $\alpha_1[j]$ is poised to perform
			a $\textsc{CAS}(v_j,v')$ operation where $v' \neq v_j$
			at the end of $E(\sigmaHighA[S_H])$,
			or no process in $G[j]$ is poised to perform
			a $\textsc{CAS}(v_j,v')$ operation where $v' \neq v_j$
			at the end of $E(\sigmaHighA[S_H])$.
		By \autoref{thm:HighAInvars},
			$\sigmaHighA[0..2^n-1]$ is $(i-1)$-compliant,
			with $\SMax(\sigmaHighA[0..2^n-1]) = S_H$.
		So by Invariant~\ref{RME-invar:subsetStatesSame2},
			since $\sigmaHighB[S] = \sigmaHighA[S] \neq \bot$
			and $\{\alpha_1[j], \alpha_2[j], \beta_1[j]\} \subseteq S$,
			for each process $p \in \{\alpha_1[j], \alpha_2[j], \beta_1[j]\}$,
			$\textit{state}_p(\sigmaHighA[S_H]) = \textit{state}_p(\sigmaHighA[S]) 
			= \textit{state}_p(\sigmaHighB[S])$.
			
		Now recall that for each process 
			$p \in \{\alpha_1[j], \alpha_2[j], \beta_1[j]\}$,
			$\textit{state}_p(C'_S,\sigma'_S[j-1]) = \textit{state}_p(\sigmaHighB[S])
			= \textit{state}_p(C'_S,\sigma_\alpha[j-1])$.
		So either $\alpha_1[j]$ is poised to perform
			a $\textsc{CAS}(v_j,v')$ operation where $v' \neq v_j$
			at the end of both 
			$E(C'_S,\sigma'_S[j-1])$ and $E(C'_S,\sigma_\alpha[j-1])$,
			or no process in $\{\alpha_1[j], \alpha_2[j], \beta_1[j]\}$ 
			is poised to perform
			a $\textsc{CAS}(v_j,v')$ operation where $v' \neq v_j$
			at the end of both 
			$E(C'_S,\sigma'_S[j-1])$ and $E(C'_S,\sigma_\alpha[j-1])$.
		 
		In the latter case, $\textit{val}_{R[j]}(C'_S, \sigma_\alpha[j]) = 
			\textit{val}_{R[j]}(C'_S,\sigma'_\alpha[j-1]) = v_j$ and 
			$\textit{val}_{R[j]}(C'_S, \sigma'_S[j]) = 
			\textit{val}_{R[j]}(C'_S,\sigma'_S[j-1]) = v_j$.
		Thus $\textit{val}_{R[j]}(C'_S, \sigma'_S[j]) 
			= \textit{val}_{R[j]}(C'_S,\sigma_\alpha[j])$, as wanted.
		
		In the former case, since $\beta_1[j] \in S$,
			$\beta_1[j]$ is poised to perform 
			a $\textsc{CAS}(v_\beta,v'_\beta)$ operation on $R[j]$ 
			at the end of $E(\sigmaHighB[S])$.
		Recall that for each process 
			$p \in \{\alpha_1[j], \alpha_2[j], \beta_1[j]\}$,
			$\textit{state}_p(C'_S,\sigma'_S[j-1]) = \textit{state}_p(\sigmaHighB[S])
			= \textit{state}_p(C'_S,\sigma_\alpha[j-1])$.
		So $\beta_1[j]$ is also poised to perform 
			a $\textsc{CAS}(v_\beta,v'_\beta)$ operation on $R[j]$ 
			at the end of both 
			$E(C'_S,\sigma'_S[j-1])$ and $E(C'_S,\sigma_\alpha[j-1])$.
			
		By definition, since $\POP$ is CAS, if $v_j \neq v_\beta$, then
			$\beta[j] = \beta_1[j] \circ \alpha_1[j] \circ \alpha_2[j]$
			otherwise $\beta[j] = \alpha_1[j] \circ \beta_1[j] \circ \alpha_2[j]$.
		So if $v_j \neq v_\beta$, then in $E(C'_S,\sigma'_S[j])$,
			$\beta_1[j]$ performs an unsuccessful $\textsc{CAS}(v_\beta,v'_\beta)$ 
			on $R[j]$ when $R[j]$ contains $v_j \neq v_\beta$.
		Otherwise $v_j = v_\beta$, so in $E(C'_S,\sigma'_S[j])$,
			$\beta_1[j]$ performs an unsuccessful $\textsc{CAS}(v_j,v'_\beta)$ 
			on $R[j]$ immediately after $\alpha_1[j]$
			successfully changes the value of $R[j]$
			from $v_j$ to $v' \neq v_j$.
		Thus regardless of whether $v_j = v_\beta$, 
			the $\textsc{CAS}(v_\beta, v'_\beta)$ operation by $\beta_1[j]$
			does not change the value of $R[j]$ during $E(C'_S, \sigma'_S[j])$.
		Consequently, $\textit{val}_{R[j]}(C'_S, \sigma'_S[j]) 
			= \textit{val}_{R[j]}(C'_S,\sigma_\alpha[j])$, as wanted. 
	\end{proof} 
}
   
\begin{lem}
	\label{thm:HighPreInvars} 
	For every set $S \subseteq \mathcal{P}$ such that 
		$F(\sigmaHighB[S_B]) \cup S_\alpha \subseteq S \subseteq S_B$:
	\begin{enumerate}[label=(H\arabic*)] 
		\item For every process $p \in \mathcal{P}$,
			if $p$ accesses a register $R$ during $E(C'_S,\sigma'_S)$,
			then the owner of $R$ is not in $S_B \setminus F(\sigmaHighB[S_B])$.
			\label{HPI:DSMOwnerIsGone}
		
		\item For every integer $j \in \{0, 1, \ldots, h-1\}$,
			$\textit{val}_{R[j]}(\sigmaHighB[S]) = \textit{val}_{R[j]}(\sigmaHighB[S_B])$. 
			\label{HPI:BRegsWereSimilar} 
			
		\item In the CC model, 
			for every process $p \in S \setminus F(\sigmaHighB[S_B])$,
			the set of registers that $p$ has valid cache copies of
			at the end of $E(\sigmaHighB[S] \circ \sigma'_S)$
			is exactly the same as at the end of 
			$E(\sigmaHighB[S_B] \circ \sigma'_{S_B})$.
			\label{HPI:ValidCCSame} 
			
		\item For every integer $j \in \{0, 1, \ldots, h-1\}$,
			$\textit{val}_{R[j]}(C'_S,\sigma'_S) = 
			\textit{val}_{R[j]}(C'_{S_B},\sigma'_{S_B})$.
			\label{HPI:BRegsSimilar}
			
		\item For each process $p \in S \setminus S_\alpha$,
			$\textit{state}_p(C'_{S_B},\sigma'_{S_B}) =
			\textit{state}_p(C'_S,\sigma'_S)$.
			\label{HPI:statesSame}  
		
		\item Each process in $S_\beta \cap S$
			incurs exactly one RMR
			during $E(C'_S,\sigma'_S)$.
			\label{HPI:OneRMREach} 
		
		\item For each process $p \in S \setminus F(\sigmaHighB[S_B])$,
			$p$ has not left the critical section during $E(\sigmaHighB[S] \circ \sigma'_S)$.
			\label{HPI:notLeftCS} 
			
		\item $F(\sigmaHighB[S]) = F(\sigmaHighB[S] \circ \sigma'_S)$.
			\label{HPI:FUnchanged}
	\end{enumerate}
\end{lem}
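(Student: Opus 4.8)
The plan is to fix an arbitrary $S$ with $F(\sigmaHighB[S_B]) \cup S_\alpha \subseteq S \subseteq S_B$ and verify \ref{HPI:DSMOwnerIsGone}--\ref{HPI:FUnchanged} in turn, using \autoref{thm:HighBInvars}, \autoref{thm:ExactlyGroupRegsAccessed}, and \autoref{thm:AlphaBetaIndistinguishable} as the workhorses. Two claims are immediate. For \ref{HPI:DSMOwnerIsGone}, \autoref{thm:ExactlyGroupRegsAccessed} tells us every register accessed during $E(C'_S,\sigma'_S)$ is some $R[j]$, and \autoref{thm:HighBInvars} tells us no $R[j]$ is owned by a process in $S_B \setminus F(\sigmaHighB[S_B])$. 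Claim \ref{HPI:BRegsWereSimilar} is literally the third bullet of \autoref{thm:HighBInvars}.

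For the register-value claim \ref{HPI:BRegsSimilar} I would apply \autoref{thm:AlphaBetaIndistinguishable} with $j = h-1$ to both $S$ and $S_B$, reducing the goal to $\textit{val}_{R[j]}(C'_S,\sigma_\alpha) = \textit{val}_{R[j]}(C'_{S_B},\sigma_\alpha)$; this holds because $R[j]$ starts with the same value in $C'_S$ and $C'_{S_B}$ by \ref{HPI:BRegsWereSimilar}, while during $\sigma_\alpha$ only alpha processes touch $R[j]$ and they have identical states in $\sigmaHighB[S]$ and $\sigmaHighB[S_B]$ by Invariant~\ref{RME-invar:subsetStatesSame2}, so $R[j]$ evolves identically. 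The state claim \ref{HPI:statesSame} splits on the role of $p \in S \setminus S_\alpha$: if $p$ is no $\beta_1[j]$ it takes no step in either schedule and Invariant~\ref{RME-invar:subsetStatesSame2} preserves its state; if $p = \beta_1[j] \in S$, then since $\beta_1[j] \in S \subseteq S_B$ both $\sigma'_S$ and $\sigma'_{S_B}$ run the \emph{same} block $\beta[j]$ for group $j$, and the value of $R[j]$ entering group $j$ agrees across the two executions by the argument just used, so group $j$ unfolds identically and $p$ ends in the same state and with the same response.

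The cache claim \ref{HPI:ValidCCSame} is the delicate one, since an alpha process such as $\alpha_1[j]$ can take its step in $\sigma'_S$ but not in $\sigma'_{S_B}$ (for instance when $\POP$ is FAI and $\beta_1[j] \in S_B \setminus S$), so I would case on $\POP$. In all cases the initial cache sets at the ends of $E(\sigmaHighB[S])$ and $E(\sigmaHighB[S_B])$ coincide by Invariant~\ref{RME-invar:ValidCCSame}. If $\POP$ is read, no step invalidates any cache copy, and each participating $p \in S \setminus F(\sigmaHighB[S_B])$ reads its own group register identically in both executions, so the two cache sets grow by the same copies. If $\POP$ is FAS, FAI, or CAS, then $\alpha_2[j]$ (which always acts) performs a non-read on every $R[j]$ in both executions, so every $R[j]$ is invalidated and no cache copy is created; since the only registers touched are $R[0..h-1]$ by \autoref{thm:ExactlyGroupRegsAccessed}, each process's final cache set equals its initial set minus the group registers it had cached, irrespective of which alpha processes actually moved. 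This is precisely what neutralises the $\sigma'_S$/$\sigma'_{S_B}$ participation asymmetry.

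Finally, for \ref{HPI:OneRMREach} each $p \in S_\beta \cap S$ takes exactly one step in $\sigma'_S$, on its group register $R[j]$; $p$ is poised to incur an RMR at the end of $E(\sigmaHighB[S])$ (inherited from \autoref{thm:SetupPreInvars} (\ref{SPI:aboutToRMR}) via $\sigmaHighB[S] = \sigmaSetupA[S]$, since the later phases only delete table cells), and the single step cannot lose that RMR --- in the DSM model $p$ does not own $R[j]$ by \autoref{thm:HighBInvars}, and in the CC model a non-read always incurs an RMR while a read still does because $p$ has no earlier step to cache $R[j]$. Claims \ref{HPI:notLeftCS} and \ref{HPI:FUnchanged} go exactly as in \autoref{thm:LowPreInvars}: every process takes at most one step in $\sigma'_S$, and by Assumption~\ref{RME-assumption:RMR-in-CS} one step cannot carry a process both into and out of the critical section, so no process leaves the critical section and hence none completes a super-passage, leaving $F$ unchanged. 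I expect \ref{HPI:ValidCCSame} to be the main obstacle, precisely because the step-participation of the alpha processes genuinely differs between $\sigma'_S$ and $\sigma'_{S_B}$ and one must argue that this difference never reaches the cache state of a process in $S$.
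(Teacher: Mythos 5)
Your proposal is correct and takes essentially the same route as the paper's proof: \autoref{thm:ExactlyGroupRegsAccessed} and \autoref{thm:HighBInvars} for \ref{HPI:DSMOwnerIsGone} and \ref{HPI:BRegsWereSimilar}, the read/non-read case split on $\POP$ with invalidation of exactly the registers $R[0..h-1]$ for \ref{HPI:ValidCCSame}, \autoref{thm:AlphaBetaIndistinguishable} combined with the identical evolution of $\sigma_\alpha$ from equal-valued registers and equal process states for \ref{HPI:BRegsSimilar} and \ref{HPI:statesSame}, and the inherited poised-RMR property plus the one-step-cannot-exit-the-critical-section argument for \ref{HPI:OneRMREach}, \ref{HPI:notLeftCS}, and \ref{HPI:FUnchanged}. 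Your minor deviations (noting that $\alpha_2[j]$ always takes a non-read step in both executions, and treating processes in $F(\sigmaHighB[S_B])$ separately in \ref{HPI:statesSame}) only make the paper's own argument slightly more explicit.
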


\begin{proof}
	First, recall that for every set $S \subseteq \mathcal{P}$ such that 
		$F(\sigmaHighB[S_B]) \cup S_\alpha \subseteq S \subseteq S_B$, 
		$\sigmaHighB[S] \neq \bot$.
	So let $S \subseteq \mathcal{P}$ be a set of processes such that
		$F(\sigmaHighB[S_B]) \cup S_\alpha \subseteq S \subseteq S_B$. 
	 
	By \autoref{thm:ExactlyGroupRegsAccessed},
		the set of registers accessed during $E(C'_S,\sigma'_S)$
		is exactly the set of registers in $R[0..h-1]$.
	By \autoref{thm:HighBInvars},
		for every integer $j \in \{0, 1, \ldots, h-1\}$:
	\begin{itemize}
		\item $\textit{val}_{R[j]}(\sigmaHighB[S]) =
			\textit{val}_{R[j]}(\sigmaHighB[S_B])$
			(\ref{HPI:BRegsWereSimilar}).
			
		\item The owner of $R[j]$ is not in $S_B \setminus F(\sigmaHighB[S_B])$.
	\end{itemize}
	Thus for every process $p \in \mathcal{P}$,
		if $p$ accesses a register $R$ during $E(C'_S,\sigma'_S)$,
		then the owner of $R$ is not in $S_B \setminus F(\sigmaHighB[S_B])$
		(\ref{HPI:DSMOwnerIsGone}). 
	
	By \autoref{thm:HighBInvars},
		$\sigmaHighB[0..2^n-1]$ is $(i-1)$-compliant
		with $\SMax(\sigmaHighB[0..2^n-1]) = S_B$.
	So by Invariant~\ref{RME-invar:ValidCCSame},
		for every process $p \in S \setminus F(\sigmaHighB[S_B])$,
		the set of registers that $p$ has valid cache copies of
		at the end of $E(\sigmaHighB[S])$ is exactly the same as
		at the end of $E(\sigmaHighB[S_B])$. 
		
	Suppose $\POP$ is read.
	Then by \autoref{thm:HighBInvars} and Invariant~\ref{RME-invar:subsetStatesSame2},
		for every process $p \in S \setminus F(\sigmaHighB[S_B])$ 
		$\textit{state}_p(\sigmaHighB[S]) = \textit{state}_p(\sigmaHighB[S_B])$.
	So at the end of both $E(\sigmaHighB[S])$ and $E(\sigmaHighB[S_B])$,
		$p \in S \setminus F(\sigmaHighB[S_B])$ is poised to perform 
		a read operation on the same register $R$.
	Then, since $\sigma'_S$ contains exactly one non-crash step
		of each process in $S \cap S_\beta = S \setminus F(\sigmaHighB[S_B])$,
		the set of registers that $p$ has valid cache copies of
		at the end of $E(\sigmaHighB[S] \circ \sigma'_S)$
		is the union of $\{R\}$ and the set of registers that 
		$p$ has valid cache copies of
		at the end of $E(\sigmaHighB[S])$.
	Furthermore, since $\sigma'_{S_B}$ contains exactly one non-crash step
		of each process in $S_B \cap S_\beta = S_B \setminus F(\sigmaHighB[S_B])$,
		the set of registers that $p$ has valid cache copies of
		at the end of $E(\sigmaHighB[S_B] \circ \sigma'_{S_B})$
		is the union of $\{R\}$ and the set of registers that 
		$p$ has valid cache copies of
		at the end of $E(\sigmaHighB[S_B])$.
	Thus the set of registers that $p$ has valid cache copies of
		at the end of $E(\sigmaHighB[S] \circ \sigma'_S)$ is exactly the same as
		at the end of $E(\sigmaHighB[S_B] \circ \sigma'_{S_B})$. 
		
	Now suppose instead that $\POP$ is not read.
	Then by \autoref{thm:ExactlyGroupRegsAccessed},
		the set of registers accessed during $E(C'_S,\sigma'_S)$
		is exactly the set of registers in $R[0..h-1]$
		and exactly the set of registers accessed during $E(C'_{S_B},\sigma'_{S_B})$.
	Thus, since every non-read operation invalidates all cache copies on a register,
		all cache copies of registers in $R[0..h-1]$
		are invalidated during both $E(C'_S,\sigma'_S)$ and $E(C'_{S_B},\sigma'_{S_B})$.
	So for every process $p \in S \setminus F(\sigmaHighB[S_B])$,
		the set of registers that $p$ has valid cache copies of
		at the end of $E(\sigmaHighB[S] \circ \sigma'_S)$ is exactly the same as
		at the end of $E(\sigmaHighB[S_B] \circ \sigma'_{S_B})$.
	
	Consequently, regardless of $\POP$, in the CC model, 
		for every process $p \in S \setminus F(\sigmaHighB[S_B])$,
		the set of registers that $p$ has valid cache copies of
		at the end of $E(\sigmaHighB[S] \circ \sigma'_S)$
		is exactly the same as at the end of 
		$E(\sigmaHighB[S_B] \circ \sigma'_{S_B})$
		(\ref{HPI:ValidCCSame}).
		 
	Now consider $E(C'_S,\sigma_\alpha)$ and 
		$E(C'_{S_B},\sigma_\alpha)$.
	By construction, $\sigma_\alpha$ contains exactly one non-crash step
		of every process in $S_\alpha$ and no other steps.
	By \autoref{thm:HighBInvars},
		$\sigmaHighB[0..2^n-1]$ is $(i-1)$-compliant
		with $\SMax(\sigmaHighB[0..2^n-1]) = S_B$.
	So by Invariants~\ref{RME-invar:uniqueSMax} and \ref{RME-invar:subsetStatesSame2},
		since $F(\sigmaHighB[S_B]) \cup S_\alpha \subseteq S \subseteq S_B$,
		for every process $p \in S \supset S_\alpha$,
		$\textit{state}_p(\sigmaHighB[S]) = \textit{state}_p(\sigmaHighB[S_B])$.
	By \autoref{thm:ExactlyGroupRegsAccessed},
		the set of registers accessed during $E(C'_{S_B},\sigma_\alpha)$ 
		is exactly the set of registers in $R[0..h-1]$
		and exactly the set of registers accessed during $E(C'_S,\sigma_\alpha)$.
	Since we have already proven \ref{HPI:BRegsWereSimilar},
		for every integer $j \in \{0, 1, \ldots, h-1\}$,
		$\textit{val}_{R[j]}(\sigmaHighB[S]) = \textit{val}_{R[j]}(\sigmaHighB[S_B])$.
	Therefore during both $E(C'_S,\sigma_\alpha)$ and 
		$E(C'_{S_B},\sigma_\alpha)$,
		the same set $S_\alpha$ of processes begin in the same states,
		and perform the same operations in the same order on the same registers
		which also begin with the same values.
	So for every integer $j \in \{0, 1, \ldots, h-1\}$,
		$\textit{val}_{R[j]}(C'_S, \sigma_\alpha) = 
		\textit{val}_{R[j]}(C'_{S_B}, \sigma_\alpha)$.
	Thus by \autoref{thm:AlphaBetaIndistinguishable},
		for every integer $j \in \{0, 1, \ldots, h-1\}$:
	\begin{align*}
		\textit{val}_{R[j]}(C'_S, \sigma'_S)  
			& = \textit{val}_{R[j]}(C'_S, \sigma_\alpha) \\ 
			& = \textit{val}_{R[j]}(C'_{S_B}, \sigma_\alpha) \\
			& = \textit{val}_{R[j]}(C'_{S_B}, \sigma'_{S_B}) 
	\end{align*}
	So for every integer $j \in \{0, 1, \ldots, h-1\}$,
		$\textit{val}_{R[j]}(C'_S,\sigma'_S) = 
		\textit{val}_{R[j]}(C'_{S_B},\sigma'_{S_B})$
		(\ref{HPI:BRegsSimilar}).
	
	Now let $j'$ be an integer in $\{0, 1, \ldots, h-1\}$.
	Then by \autoref{thm:AlphaBetaIndistinguishable}:
	\begin{align*}
		\textit{val}_{R[j']}(C'_S, \sigma'_S[j'-1])  
			& = \textit{val}_{R[j']}(C'_S, \sigma_\alpha[j'-1]) \\ 
			& = \textit{val}_{R[j']}(C'_{S_B}, \sigma_\alpha[j'-1]) \\
			& = \textit{val}_{R[j']}(C'_{S_B}, \sigma'_{S_B}[j'-1]) 
	\end{align*}
	Let $C^{j'-1}_S$ and $C^{j'-1}_{S_B}$ be the configurations 
		at the end of $E(C'_S,\sigma'_S[j'-1])$ 
		and $E(C'_{S_B},\sigma'_{S_B}[j'-1])$ respectively.
	So $R[j']$ has the same value in 
		both $C^{j'-1}_S$ and $C^{j'-1}_{S_B}$.
	In addition, recall that by the definition 
		of $\sigma'_S[j'-1]$ and $\sigma'_{S_B}[j'-1]$,
		processes in $G[j'] \cap S$
		do not take any steps during $E(C'_S,\sigma'_S[j'-1])$ 
		and $E(C'_{S_B},\sigma'_{S_B}[j'-1])$.
	Thus, for every process $p \in G[j'] \cap S \subseteq S$,
		since $\textit{state}_p(\sigmaHighB[S]) = \textit{state}_p(\sigmaHighB[S_B])$,
		the state of $p$ is still the same in
		both $C^{j'-1}_S$ and $C^{j'-1}_{S_B}$, i.e.,
		$p$ is poised to perform the same operation on $R[j']$
		in both $C^{j'-1}_S$ and $C^{j'-1}_{S_B}$.
	
	Now suppose that $\beta_1[j']$ is in $S \subseteq S_B$.
	Then $\sigma'_S[j'] = \sigma'_S[j'-1] \circ \beta[j']$
		and $\sigma'_{S_B}[j'] = \sigma'_{S_B}[j'-1] \circ \beta[j']$.
	Furthermore, in addition to $\alpha_1[j']$ and $\alpha_2[j']$,
		$\beta_1[j']$ is also in $G[j'] \cap S$.
	So $\beta[j']$ contains only processes in $G[j'] \cap S$. 
	Thus during both $E(C^{j'-1}_S,\beta[j'])$ 
		and $E(C^{j'-1}_{S_B},\beta[j'])$,
		the same set of processes (namely those in $\beta[j']$)
		begin in the same set of states and
		perform the same set of operations in the same order
		on $R[j']$ (which begins with the same value),
		and thus must reach the same resulting states.
	Finally, since each process in $G[j'] \cap S$
		takes no more steps in the remainder of $\sigma'_S$ and $\sigma'_{S_B}$,
		$\textit{state}_{\beta_1[j']}(C'_S,\sigma'_S) = 
		\textit{state}_{\beta_1[j']}(C'_{S_B},\sigma'_{S_B})$.
		
	Thus we have shown that for every integer $j \in \{0, 1, \ldots, h-1\}$,
		$\textit{state}_{\beta_1[j]}(C'_S,\sigma'_S) = 
		\textit{state}_{\beta_1[j]}(C'_{S_B},\sigma'_{S_B})$.
	Now note that for each process $p \in S \setminus S_\alpha$,
		since $S \subseteq S_B$,
		there exists an integer $j \in \{0, 1, \ldots, h-1\}$ 
		such that $p = \beta_1[j]$.
	So for each process $p \in S \setminus S_\alpha$,
		$\textit{state}_p(C'_{S_B},\sigma'_{S_B}) =
		\textit{state}_p(C'_S,\sigma'_S)$
		(\ref{HPI:statesSame}). 
	
	Since $\sigmaHighB[S] \neq \bot$, by construction,
		$\sigmaOld[S] \neq \bot$.
	By \autoref{thm:SetupPreInvars} (\ref{SPI:aboutToRMR}),
		for each process $p \in S \setminus F(\sigmaOld[\SOMax])$,
		$p$ incurs an RMR at the end of $E(\sigmaSetupA[S] \circ p)$.
	By construction and \autoref{thm:SetupAInvars} 
		(Invariant~\ref{RME-invar:FSame}),
		$\sigmaHighB[S] = \sigmaSetupB[S] = \sigmaSetupA[S]$,
		and $F(\sigmaHighB[S_B]) = F(\sigmaSetupA[S_I]) = F(\sigmaSetupA[\SOMax])
		= F(\sigmaOld[\SOMax])$.
	Thus for each process $p \in S \setminus F(\sigmaHighB[S_B])$,
		$p$ incurs an RMR at the end of $E(\sigmaHighB[S] \circ p)$.
	By definition, $S_B = S_\beta \cup F(\sigmaHighB[S_B])$,
		so since $S \subseteq S_B$, $S \setminus F(\sigmaHighB[S_B]) = S_\beta \cap S$.
	By construction, $\sigma'_S$ contains exactly one 
		non-crash step of each process in $S_\beta \cap S$ and no other steps.
	So in the DSM model, every process in $S_\beta \cap S$
		is poised to access a register it does not own in $C'_S$,
		and so every process in $S_\beta \cap S$ incurs exactly one RMR
		during $E(C'_S,\sigma'_S)$.
	In the CC model, every process in $S_\beta \cap S$
		is poised to perform a non-read operation
		or read a register that it does not have 
		a valid cache copy of in $C'_S$,
		and so every process in $S_\beta \cap S$ incurs exactly one RMR
		during $E(C'_S,\sigma'_S)$.
	Thus in both the DSM and CC models,
		each process in $S_\beta \cap S$
		incurs exactly one RMR
		during $E(C'_S,\sigma'_S)$ (\ref{HPI:OneRMREach}).
	 
	By \autoref{thm:HighBInvars},
		$\sigmaHighB[0..2^n-1]$ is $(i-1)$-compliant,
		so by Invariant~\ref{RME-invar:CS},
		each process that is not in $F(\sigmaHighB[S])$
		does not enter the critical section during $E(\sigmaHighB[S])$.
	Recall that $\sigma'_S$ consists of exactly one non-crash step
		of each process in $S_\beta \cap S = 
		S \setminus F(\sigmaHighB[S_B])$ and no other steps.
	Thus, with only one step, although a process could enter the critical section
		during $E(C'_S,\sigma'_S)$,
		it cannot have taken any steps within the critical section.
	Thus by Assumption~\ref{RME-assumption:RMR-in-CS}, 
		no process can leave the critical section during $E(C'_S,\sigma'_S)$.
	So for each process $p \in S \setminus F(\sigmaHighB[S_B])$,
		$p$ has not left the critical section during $E(\sigmaHighB[S] \circ \sigma'_S)$
		(\ref{HPI:notLeftCS}).
	Therefore, since no process in $S \setminus F(\sigmaHighB[S_B])$
		has left the critical section during $E(\sigmaHighB[S] \circ \sigma'_S)$,
		no process in $S \setminus F(\sigmaHighB[S_B])$
		has completed its super-passage during $E(\sigmaHighB[S] \circ \sigma'_S)$. 
	Thus $F(\sigmaHighB[S]) = F(\sigmaHighB[S] \circ \sigma'_S)$ (\ref{HPI:FUnchanged}). 
\end{proof}
 
Now recall that $S_F = S_\alpha \cup F(\sigmaHighA[S_H])$
	and that by \autoref{thm:HighAInvars} and Invariant~\ref{RME-invar:FSame},
	$F(\sigmaHighA[S_H]) = F(\sigmaHighA[S_B]) = F(\sigmaHighB[S_B])$.
Thus $S_F = S_\alpha \cup F(\sigmaHighB[S_B])$, and so
	$F(\sigmaHighB[S_B]) \cup S_\alpha \subseteq S_F \subseteq S_B$.
Furthermore, recall that by definition,
	since $S_F = \sigma_\alpha \cup F(\sigmaHighB[S_B])$,
	$\sigma_\alpha = \sigma'_{S_F}$.
Thus $E(C'_{S_F}, \sigma'_{S_F}) = E(C'_{S_F},\sigma_\alpha)$.
	
\begin{lem}
	\label{thm:RFSame}
	For every register $R \in \mathcal{R}_F$,
		and every set $S \subseteq \mathcal{P}$ such that 
		$F(\sigmaHighB[S_B]) \cup S_\alpha \subseteq S \subseteq S_B$,
		$\textit{val}_R(C'_S, \sigma'_S) =
		\textit{val}_R(C'_{S_F}, \sigma_\alpha)$.
\end{lem}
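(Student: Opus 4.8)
The plan is to prove the equality by splitting on whether $R$ is one of the group registers $R[0..h-1]$ that the $\alpha$- and $\beta$-processes touch during $\sigma_\alpha$ and $\sigma'_S$, or some other register in $\mathcal{R}_F$ that is accessed only later, during the super-passage completions. Throughout I fix a set $S$ with $F(\sigmaHighB[S_B]) \cup S_\alpha \subseteq S \subseteq S_B$ (so $\sigmaHighB[S] \neq \bot$), and I recall from the text preceding the lemma that $S_F = S_\alpha \cup F(\sigmaHighB[S_B])$ satisfies $F(\sigmaHighB[S_B]) \cup S_\alpha \subseteq S_F \subseteq S_B$ and that $\sigma'_{S_F} = \sigma_\alpha$, so $E(C'_{S_F},\sigma_\alpha) = E(C'_{S_F},\sigma'_{S_F})$.

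First I would handle the case $R = R[j]$ for some $j \in \{0,1,\ldots,h-1\}$. Here the value is pinned down directly by \autoref{thm:HighPreInvars}~(\ref{HPI:BRegsSimilar}). Applying that statement once to $S$ and once to $S_F$ (both of which satisfy its premise) yields $\textit{val}_{R[j]}(C'_S,\sigma'_S) = \textit{val}_{R[j]}(C'_{S_B},\sigma'_{S_B}) = \textit{val}_{R[j]}(C'_{S_F},\sigma'_{S_F})$. Since $\sigma'_{S_F} = \sigma_\alpha$, the right-hand side is exactly $\textit{val}_{R[j]}(C'_{S_F},\sigma_\alpha)$, which is what we want.

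Next I would handle the case $R \notin R[0..h-1]$, where I must genuinely use the hypothesis $R \in \mathcal{R}_F$. The key observation is that by \autoref{thm:ExactlyGroupRegsAccessed} the set of registers accessed during $E(C'_S,\sigma'_S)$ is exactly $R[0..h-1]$, and likewise for $E(C'_{S_F},\sigma_\alpha) = E(C'_{S_F},\sigma'_{S_F})$. Hence $R$ is untouched in both executions, so its final value equals its value at the respective starting configuration: $\textit{val}_R(C'_S,\sigma'_S) = \textit{val}_R(\sigmaHighB[S])$ and $\textit{val}_R(C'_{S_F},\sigma_\alpha) = \textit{val}_R(\sigmaHighB[S_F])$. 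It then remains to equate these two starting values, and this is precisely the additional clause of \autoref{thm:HighBInvars} for registers in $\mathcal{R}_F$: for every such register and every set with a non-$\bot$ entry, $\textit{val}_R(\sigmaHighB[\cdot]) = \textit{val}_R(\sigmaHighB[S_B])$. Invoking it for $S$ and for $S_F$ gives $\textit{val}_R(\sigmaHighB[S]) = \textit{val}_R(\sigmaHighB[S_B]) = \textit{val}_R(\sigmaHighB[S_F])$, closing the case.

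Most of the difficulty has already been absorbed by \autoref{thm:AlphaBetaIndistinguishable} and \autoref{thm:HighPreInvars}, so the remaining work is mainly a matter of assembling the correct invariants. The one subtlety worth flagging is recognizing which prior result governs each case: the registers in $\mathcal{R}_F \cap R[0..h-1]$ are those actively rewritten by the group steps and are handled by (\ref{HPI:BRegsSimilar}), whereas the registers in $\mathcal{R}_F \setminus R[0..h-1]$ are frozen before the completion phase, so their agreement across different $S$ must instead come from the additional clause of \autoref{thm:HighBInvars} — which is exactly the guarantee that a frozen register's value does not depend on which $\beta$-processes happen to lie in $S$. Conflating these two mechanisms would be the natural error to avoid.
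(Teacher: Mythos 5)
Your proposal is correct and matches the paper's own proof essentially step for step: the same case split on whether $R$ is one of $R[0..h-1]$, with \autoref{thm:HighPreInvars}~(\ref{HPI:BRegsSimilar}) handling the group registers, and \autoref{thm:ExactlyGroupRegsAccessed} plus the additional $\mathcal{R}_F$ clause of \autoref{thm:HighBInvars} handling the untouched ones. The only difference is the order in which the two cases are presented, which is immaterial.
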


\begin{proof}
	Let $S \subseteq \mathcal{P}$ be a set of processes such that 
		$F(\sigmaHighB[S_B]) \cup S_\alpha \subseteq S \subseteq S_B$.
	By \autoref{thm:HighBInvars} and Invariant~\ref{RME-invar:uniqueSMax},
		since $F(\sigmaHighB[S_B]) \cup S_\alpha \subseteq S_F \subseteq S_B$
		and $F(\sigmaHighB[S_B]) \cup S_\alpha \subseteq S \subseteq S_B$,
		both $\sigmaHighB[S_F] \neq \bot$ and $\sigmaHighB[S] \neq \bot$.
	So by \autoref{thm:HighBInvars},
		for every register $R \in \mathcal{R}_F$,
		$\textit{val}_R(\sigmaHighB[S_F]) 
		= \textit{val}_R(\sigmaHighB[S_B])
		= \textit{val}_R(\sigmaHighB[S])$.
	
	Now consider $E(C'_S,\sigma'_S)$
		and $E(C'_{S_F},\sigma'_{S_F})$.
	By \autoref{thm:ExactlyGroupRegsAccessed},
		only registers in $R[0..h-1]$
		are accessed during both executions.
	So for every register $R \in \mathcal{R}_F$,
		if $R$ is not one of $R[0..h-1]$,
		then $\textit{val}_R(C'_S, \sigma'_S) =
		\textit{val}_R(C'_{S_F}, \sigma'_{S_F})$.	
	
	Next, by \autoref{thm:HighPreInvars} (\ref{HPI:BRegsSimilar}),
		for every integer $j \in \{0, 1, \ldots, h-1\}$,
		$\textit{val}_{R[j]}(C'_S,\sigma'_S) 
		= \textit{val}_{R[j]}(C'_{S_B},\sigma'_{S_B})
		= \textit{val}_{R[j]}(C'_{S_F},\sigma'_{S_F})$.
	Therefore, for every register $R \in \mathcal{R}_F$,
		regardless of whether $R$ is one of $R[0..h-1]$,
		$\textit{val}_R(C'_S, \sigma'_S) =
		\textit{val}_R(C'_{S_F}, \sigma'_{S_F}) =
		\textit{val}_R(C'_{S_F}, \sigma_\alpha)$.	 
\end{proof}
 
Next, recall that there exists a schedule $\sigma_F$ such that:
\begin{itemize}
	\item $\sigma_F$ begins with exactly one crash step of every process in $S_\alpha$,
		and contains no other crash steps.
		
	\item $\sigma_F$ contains only steps of processes in $S_\alpha = 
		S_F \setminus F(\sigmaHighA[S_F])$.
		
	\item During $E(\sigmaHighA[S_F] \circ \sigma_\alpha \circ \sigma_F)$,
		every process in $S_F$ begins and then completes
		a super-passage, i.e., $F(\sigmaHighA[S_F] \circ \sigma_\alpha \circ \sigma_F) = S_F$.
\end{itemize}
By \autoref{thm:HighAInvars} and Invariant~\ref{RME-invar:FSame},
	$F(\sigmaHighA[S_F]) = F(\sigmaHighA[S_B]) = F(\sigmaHighB[S_B])$.
By construction, $\sigmaHighB[S_F] = \sigmaHighA[S_F]$.
Thus:
\begin{itemize}
	\item $\sigma_F$ begins with exactly one crash step of every process in $S_\alpha$,
		and contains no other crash steps.
		
	\item $\sigma_F$ contains only steps of processes in $S_\alpha = 
		S_F \setminus F(\sigmaHighB[S_B])$.
		
	\item During $E(\sigmaHighB[S_F] \circ \sigma_\alpha \circ \sigma_F)$,
		every process in $S_F$ begins and then completes
		a super-passage, i.e., $F(\sigmaHighB[S_F] \circ \sigma_\alpha \circ \sigma_F) = S_F$.
\end{itemize}
Further recall that by definition,
	$C_F$ is the configuration at the end of 
	$E(\sigmaHighA[S_F] \circ \sigma_\alpha)$, and
	$\mathcal{R}_F$ is the set of every register that is
	accessed during $E(C_F,\sigma_F)$ 
	(after the crash steps of every process in $S_\alpha$ 
	at the beginning of $\sigma_F$).
	
Now for every set $S \subseteq \mathcal{P}$ such that 
	$F(\sigmaHighB[S_B]) \cup S_\alpha \subseteq S \subseteq S_B$,
	let $C''_S$ be the configuration at the end of $E(\sigmaHighB[S] \circ \sigma'_S)$.
Then $E(\sigmaHighA[S_F] \circ \sigma_\alpha) 
	= E(\sigmaHighB[S_F] \circ \sigma'_{S_F})$,
	so $C_F = C''_{S_F}$.

\begin{lem}
	\label{thm:sigmaFApplicable}
	For every set $S \subseteq \mathcal{P}$ such that 
		$F(\sigmaHighB[S_B]) \cup S_\alpha \subseteq S \subseteq S_B$,
		during both $E(C''_S,\sigma_F)$ and $E(C''_{S_F},\sigma_F) = E(C_F,\sigma_F)$,
		the same set of processes (namely $S_\alpha$) crash,
		then perform the same operations in the same order
		on the same set of registers (namely $\mathcal{R}_F$)
		and so must reach the same resulting states. 
\end{lem}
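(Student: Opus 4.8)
The plan is to show that the two executions $E(C''_S,\sigma_F)$ and $E(C''_{S_F},\sigma_F)$ proceed in lockstep, step for step, so that the same processes act, on the same registers, with the same responses and resulting states. The intuition is that $\sigma_F$ begins by crashing exactly the processes in $S_\alpha$ (this leading crash prefix is the same in both executions by the definition of $\sigma_F$), and since crash steps leave shared memory untouched while resetting local state to its initial value, the local states of all processes in $S_\alpha$ are identical in the two executions immediately after the crashes. Moreover, the only shared registers the remainder of $\sigma_F$ can touch are those in $\mathcal{R}_F$, which already hold matching values in $C''_S$ and $C''_{S_F}=C_F$. So I would first invoke \autoref{thm:RFSame} to conclude that for every $R \in \mathcal{R}_F$ we have $\textit{val}_R(C'_S,\sigma'_S) = \textit{val}_R(C'_{S_F},\sigma_\alpha)$, i.e.\ $R$ has the same value in $C''_S$ as in $C''_{S_F}=C_F$.

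Next I would set up an induction over the non-crash steps of $\sigma_F$, in order. For the base case, the leading crash steps modify no shared object, so immediately afterward every process in $S_\alpha$ is in its identical initial post-crash state in both executions, and every register in $\mathcal{R}_F$ still holds the same value by the previous paragraph. For the inductive step, assume the first $t$ non-crash steps have been identical in both executions, so the acting processes are in identical states and every register in $\mathcal{R}_F$ holds the same value in the two partial executions. By definition $\mathcal{R}_F$ is exactly the set of registers accessed during $E(C_F,\sigma_F)$, so the $(t+1)$-th step of the reference execution accesses some $R \in \mathcal{R}_F$; since the acting process has the same state and $R$ has the same value in both executions, the operation returns the same response, updates $R$ to the same value, and sends the process to the same resulting state. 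Hence the $(t+1)$-th step is identical and $\mathcal{R}_F$-agreement is preserved.

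Two points deserve care but are routine. First, $\sigma_F$ contains only steps of processes in $S_\alpha$, so no process outside $S_\alpha$ touches shared memory during either execution; in particular no register outside $\mathcal{R}_F$ is accessed and nothing can perturb the agreement. Second, although the processes in $S_\alpha$ may sit in different, non-initial local states in $C''_S$ versus $C_F$, this is irrelevant precisely because the crash steps at the start of $\sigma_F$ overwrite all of them with the same initial values. The only genuine obstacle is the base case---establishing that the registers $\sigma_F$ may read hold matching values at the outset---and this is supplied verbatim by \autoref{thm:RFSame}; the remainder is the standard lockstep indistinguishability induction, from which the claim (the same set $S_\alpha$ crashes, then performs the same operations in the same order on the same set $\mathcal{R}_F$ of registers, reaching the same resulting states) follows immediately.
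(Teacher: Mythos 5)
Your proposal is correct and follows essentially the same route as the paper's own proof: invoke \autoref{thm:RFSame} to get agreement on the values of all registers in $\mathcal{R}_F$, and observe that the leading crash steps of $\sigma_F$ reset every process in $S_\alpha$ to the same (initial) local state in both executions. The lockstep induction you spell out is exactly what the paper compresses into ``the lemma immediately follows.''
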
 

\toOmit{
	\begin{proof}
		By \autoref{thm:RFSame},
			for every register $R \in \mathcal{R}_F$
			and every set $S \subseteq \mathcal{P}$ such that 
			$F(\sigmaHighB[S_B]) \cup S_\alpha \subseteq S \subseteq S_B$,
			$R$ has the same value in $C_F = C''_{S_F}$ as in $C''_S$. 
		Furthermore, by the definition of $\sigma_F$,
			$\sigma_F$ begins with a crash step of 
			every process in $P(\sigma_F) = S_\alpha$.
		The lemma immediately follows.
	\end{proof} 
}

We now construct a new array $\sigmaHighC[0..2^n-1]$ such that
	for every set $S \subseteq \mathcal{P}$,
	if $F(\sigmaHighB[S_B]) \cup S_\alpha \subseteq S \subseteq S_B$,
	then $\sigmaHighC[S] = \sigmaHighB[S] \circ \sigma'_S \circ \sigma_F$;
	otherwise $\sigmaHighC[S] = \bot$.
	
\begin{lem}	
	\label{thm:HighCInvars}
	This new array $\sigmaHighC[0..2^n-1]$ is $i$-compliant
		with $\SMax(\sigmaHighC[0..2^n-1]) = S_B$.
	Furthermore, $F(\sigmaHighC[S_B]) = F(\sigmaHighB[S_B]) \cup S_\alpha$.
\end{lem}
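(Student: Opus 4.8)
The plan is to verify that $\sigmaHighC[0..2^n-1]$, obtained by appending $\sigma'_S\circ\sigma_F$ to each surviving schedule $\sigmaHighB[S]$, satisfies all ten invariants with $\SMax=S_B$. Two structural facts will be used throughout: (i) by \autoref{thm:sigmaFApplicable}, the execution of $\sigma_F$ is identical (the same processes $S_\alpha$ crash, the same operations are performed in the same order on the same registers $\mathcal{R}_F$, and the same process states result) in $E(C''_S,\sigma_F)$ for every valid $S$; and (ii) by \autoref{thm:HighPreInvars}~(\ref{HPI:BRegsSimilar}) together with \autoref{thm:RFSame}, after $\sigma'_S$ the registers in $R[0..h-1]$ and in $\mathcal{R}_F$ already carry $S$-independent values. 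I would first establish the ``Furthermore'' clause: during $\sigma_F$ exactly the processes in $S_\alpha$ complete their super-passages (this is how $\sigma_F$ was chosen at $S_F$, transported to arbitrary $S$ via \autoref{thm:sigmaFApplicable}), while by \autoref{thm:HighPreInvars}~(\ref{HPI:FUnchanged}) no process completes during $\sigma'_S$; since the beta processes take no steps in $\sigma_F$, this gives $F(\sigmaHighC[S])=F(\sigmaHighB[S_B])\cup S_\alpha$ for every valid $S$. This fixes $\SMax=S_B$ and yields Invariant~\ref{RME-invar:uniqueSMax}, and Invariant~\ref{RME-invar:PsubsetS} follows since $P(\sigma'_S)\subseteq S_\beta\cap S$ and $P(\sigma_F)=S_\alpha\subseteq S$.

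Next I would dispatch the invariants that follow mechanically. For Invariant~\ref{RME-invar:subsetStatesSame2}, the finished and alpha processes reach $S$-independent states by \autoref{thm:sigmaFApplicable}, while each beta process $\beta_1[j]$ reaches an $S$-independent state by \autoref{thm:HighPreInvars}~(\ref{HPI:statesSame}) and then idles through $\sigma_F$; Invariant~\ref{RME-invar:FSame} is then formal. For Invariant~\ref{RME-invar:subsetRegsSimilar}, the only registers touched by $\sigma'_S\circ\sigma_F$ are those in $R[0..h-1]\cup\mathcal{R}_F$ (by \autoref{thm:ExactlyGroupRegsAccessed} and the definition of $\mathcal{R}_F$); each such register is last accessed by an alpha process in $F(\sigmaHighC[S_B])\subseteq S$ (in each group $R[j]$ the final step of $\beta[j]$ is always $\alpha_2[j]$, and registers of $\mathcal{R}_F$ are last touched during $\sigma_F$), and its final value is $S$-independent by fact (ii) plus \autoref{thm:sigmaFApplicable}, so I can take $y_R=\textit{val}_R(\sigmaHighC[S_B])$; untouched registers inherit their $y_R$ from $\sigmaHighB$. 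For Invariant~\ref{RME-invar:crashes}, the only new crashes are the single crashes of $S_\alpha$ at the start of $\sigma_F$, and $S_\alpha\subseteq F(\sigmaHighC[S])$ while the beta processes never crash. For Invariant~\ref{RME-invar:DSMSelfAccessOnly}, no register owned by a beta process lies in $R[0..h-1]\cup\mathcal{R}_F$ (by \autoref{thm:HighPreInvars}~(\ref{HPI:DSMOwnerIsGone}) and the $\mathcal{R}_F$ clause of \autoref{thm:HighBInvars}), so such registers are touched only during $\sigmaHighB[S]$. For Invariant~\ref{RME-invar:ValidCCSame}, the cache set of a beta process at $C''_S$ is $S$-independent by \autoref{thm:HighPreInvars}~(\ref{HPI:ValidCCSame}), and during $\sigma_F$ it changes only by invalidations that are themselves $S$-independent (again \autoref{thm:sigmaFApplicable}), since beta processes create no new copies there. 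Finally Invariant~\ref{RME-invar:iRMRs} holds because a beta process carries at least $i-1$ RMRs from $\sigmaHighB[S]$, incurs exactly one more in $\sigma'_S$ by \autoref{thm:HighPreInvars}~(\ref{HPI:OneRMREach}), and takes no steps in $\sigma_F$.

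The main obstacle is Invariant~\ref{RME-invar:CS}: I must show that no beta process ever \emph{enters} the critical section during $E(\sigmaHighC[S])$, even though \autoref{thm:HighPreInvars}~(\ref{HPI:notLeftCS}) only guarantees that no beta process has \emph{left} it after $\sigma'_S$. Here I would argue by mutual exclusion. Suppose a beta process $p$ were inside the critical section at the end of $E(\sigmaHighB[S]\circ\sigma'_S)$. Since $p\notin P(\sigma_F)=S_\alpha$, it takes no step in $\sigma_F$ and therefore remains in the critical section throughout $E(C''_S,\sigma_F)$; but by \autoref{thm:sigmaFApplicable} every alpha process completes its super-passage during this execution, which forces each of them to occupy the critical section at some point after its crash, so at that moment two processes are in the critical section, contradicting mutual exclusion of the correct algorithm. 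Hence no beta process is in the critical section at $C''_S$. A beta process $p\in S$ takes exactly one step in $\sigma'_S$ and none in $\sigma_F$; if that step entered the critical section, $p$ would remain there (it cannot also leave, as by Assumption~\ref{RME-assumption:RMR-in-CS} leaving requires further RMR-incurring steps inside the critical section), contradicting that no beta process is in the critical section at $C''_S$. Thus no beta process enters during $\sigma'_S$; it does not enter during $\sigmaHighB[S]$ by Invariant~\ref{RME-invar:CS} for $\sigmaHighB$, nor during $\sigma_F$ where it takes no steps. This establishes Invariant~\ref{RME-invar:CS} and completes the verification that $\sigmaHighC[0..2^n-1]$ is $i$-compliant with $\SMax(\sigmaHighC[0..2^n-1])=S_B$.
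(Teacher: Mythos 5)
Your proposal is correct and follows essentially the same route as the paper's own proof: it verifies each invariant using \autoref{thm:sigmaFApplicable}, \autoref{thm:HighPreInvars}, \autoref{thm:ExactlyGroupRegsAccessed}, and \autoref{thm:HighBInvars}, derives the ``Furthermore'' clause from (\ref{HPI:FUnchanged}) plus the transported behaviour of $\sigma_F$, and handles Invariant~\ref{RME-invar:CS} by the same mutual-exclusion contradiction (a beta process stuck in the critical section would prevent the alpha processes from completing their super-passages during $\sigma_F$). The only differences are cosmetic reorganizations, e.g., proving ``no beta process occupies the critical section at $C''_S$'' first rather than assuming an entering step and localizing it, which is logically the same argument.
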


\begin{proof} 
	For every set $S \subseteq \mathcal{P}$,
		if $\sigmaHighC[S] \neq \bot$,
		then by construction, $\sigmaHighC[S] = 
		\sigmaHighB[S] \circ \sigma'_S \circ \sigma_F$.
	By \autoref{thm:HighBInvars},
		$\sigmaHighB[0..2^n-1]$ is $(i-1)$-compliant, 
		so by Invariant~\ref{RME-invar:PsubsetS},
		$P(\sigmaHighB[S]) \subseteq S$.
	By the definition of $\sigma'_S$, $\sigma'_S$ 
		contains only steps of processes in $S \cap S_\beta$.
	By the definition of $\sigma_F$, $\sigma_F$
		contains only steps of processes in $S_\alpha \subseteq S$.
	Thus $P(\sigmaHighC[S]) \subseteq S$ (Invariant~\ref{RME-invar:PsubsetS}).
	
	Now for every set $S \subseteq \mathcal{P}$
		such that $F(\sigmaHighB[S_B]) \cup S_\alpha \subseteq S \subseteq S_B$,
		consider $F(\sigmaHighC[S]) = 
		F(\sigmaHighB[S] \circ \sigma'_{S} \circ \sigma_F)$.
	By \autoref{thm:HighPreInvars} (\ref{HPI:FUnchanged}),
		$F(\sigmaHighB[S]) = F(\sigmaHighB[S] \circ \sigma'_{S})
		= F(C''_{S}, \sigma'_{S})$.
	By definition, every process in $S_\alpha$ completes its super-passage
		during $E(C_F,\sigma_F)$.
	So by \autoref{thm:sigmaFApplicable},
		every process in $S_\alpha$ also completes its super-passage
		during $E(C''_{S},\sigma_F)$.
	Thus $F(\sigmaHighC[S]) = F(\sigmaHighB[S]) \cup S_\alpha$.
	Therefore, $F(\sigmaHighC[S_B]) = F(\sigmaHighB[S_B]) \cup S_\alpha$.
	
	Furthermore, by \autoref{thm:HighBInvars} and 
		Invariants~\ref{RME-invar:uniqueSMax} and~\ref{RME-invar:FSame},
		for every set $S \subseteq \mathcal{P}$
		such that $F(\sigmaHighB[S_B]) \cup S_\alpha \subseteq S \subseteq S_B$,
		$F(\sigmaHighB[S]) = F(\sigmaHighB[S_B])$,
		and so $F(\sigmaHighC[S]) = F(\sigmaHighB[S]) \cup S_\alpha
		= F(\sigmaHighB[S_B]) \cup S_\alpha = F(\sigmaHighC[S_B])$.
	
	By construction, for every set $S \subseteq \mathcal{P}$,
		if $F(\sigmaHighB[S_B]) \cup S_\alpha \subseteq S \subseteq S_B$,
		then $\sigmaHighC[S] = \sigmaHighB[S] \circ \sigma'_S \circ \sigma_F$;
		otherwise $\sigmaHighC[S] = \bot$.
	By \autoref{thm:HighBInvars},
		$\sigmaHighB[0..2^n-1]$ is $(i-1)$-compliant
		with $\SMax(\sigmaHighB[0..2^n-1]) = S_B$.
	So by Invariant~\ref{RME-invar:uniqueSMax},
		for every set $S \subseteq \mathcal{P}$,
		$\sigmaHighB[S] \neq \bot$ if and only if
		$F(\sigmaHighB[S_B]) \subseteq S \subseteq S_B$.
	Thus for every set $S \subseteq \mathcal{P}$,
		if $F(\sigmaHighB[S_B]) \cup S_\alpha \subseteq S \subseteq S_B$,
		then $\sigmaHighB[S] \neq \bot$ and
		$\sigmaHighC[S] = \sigmaHighB[S] \circ \sigma'_S \circ \sigma_F \neq \bot$.
	Then, since we have already proven that 
		$F(\sigmaHighC[S_B]) = F(\sigmaHighB[S_B]) \cup S_\alpha$,
		for every set $S \subseteq \mathcal{P}$,
		$\sigmaHighC[S] \neq \bot$ if and only if
		$F(\sigmaHighC[S_B]) \subseteq S \subseteq S_B$
		(Invariant~\ref{RME-invar:uniqueSMax}).
	
	Furthermore, we have already shown that 
		for every set $S \subseteq \mathcal{P}$
		such that $F(\sigmaHighB[S_B]) \cup S_\alpha \subseteq S \subseteq S_B$,
		$F(\sigmaHighC[S]) = F(\sigmaHighC[S_B])$.
	Thus, since we just proved that Invariant~\ref{RME-invar:uniqueSMax}
		holds for $\sigmaHighC[0..2^n-1]$ with
		$\SMax(\sigmaHighC[0..2^n-1]) = S_B$
		and $F(\sigmaHighC[S_B]) = F(\sigmaHighB[S_B]) \cup S_\alpha$,
		it follows that for every set $S \subseteq \mathcal{P}$
		such that $\sigmaHighC[S] \neq \bot$,
		$F(\sigmaHighC[S]) = F(\sigmaHighC[S_B])$
		(Invariant~\ref{RME-invar:FSame}).
		 
	By \autoref{thm:HighPreInvars} (\ref{HPI:statesSame}),
		for every set $S \subseteq \mathcal{P}$ such that 
		$F(\sigmaHighB[S_B]) \cup S_\alpha =
		F(\sigmaHighC[S_B]) \subseteq S \subseteq S_B$,
		for every process $p \in S \setminus S_\alpha$,
		$\textit{state}_p(C'_{S_B},\sigma'_{S_B}) 
		= \textit{state}_p(C'_S,\sigma'_S)$.
	Since $\sigma_F$ only contains steps of processes in $S_\alpha$,
		for every set $S \subseteq \mathcal{P}$ such that 
		$F(\sigmaHighB[S_B]) \cup S_\alpha =
		F(\sigmaHighC[S_B]) \subseteq S \subseteq S_B$,
		for every process $p \in S \setminus S_\alpha$,
		$\textit{state}_p(C'_{S_B},\sigma'_{S_B} \circ \sigma_F) 
		= \textit{state}_p(C'_S,\sigma'_S \circ \sigma_F)$.
	Thus for every set $S \subseteq \mathcal{P}$ such that 
		$F(\sigmaHighC[S_B]) \subseteq S \subseteq S_B$,
		for every process $p \in S \setminus S_\alpha$,
		$\textit{state}_p(\sigmaHighC[S_B]) 
		= \textit{state}_p(\sigmaHighC[S])$.
	Furthermore, we have already proven that 
		$F(\sigmaHighC[S_B]) = F(\sigmaHighB[S_B]) \cup S_\alpha$,
		and that Invariants~\ref{RME-invar:uniqueSMax} and~\ref{RME-invar:FSame} 
		hold for $\sigmaHighC[0..2^n-1]$ with $\SMax(\sigmaHighC[0..2^n-1]) = S_B$,
		for every set $S \subseteq \mathcal{P}$ such that 
		$F(\sigmaHighC[S_B]) \subseteq S \subseteq S_B$,
		$S_\alpha$ is in both $F(\sigmaHighC[S])$ and $F(\sigmaHighC[S_B])$, 
		so for every process $p \in S_\alpha$,
		$\textit{state}_p(\sigmaHighC[S_B]) 
		= \textit{state}_p(\sigmaHighC[S])$.
	Consequently, for every set $S \subseteq \mathcal{P}$ such that 
		$\sigmaHighC[S] \neq \bot$,
		for every process $p \in S$,
		regardless of whether $p$ is in $S_\alpha$,
		$\textit{state}_p(\sigmaHighC[S_B]) 
		= \textit{state}_p(\sigmaHighC[S])$
		(Invariant~\ref{RME-invar:subsetStatesSame2}).
		  
	Next, by \autoref{thm:sigmaFApplicable},
		for every set $S \subseteq \mathcal{P}$ such that
		$F(\sigmaHighC[S_B]) \subseteq S \subseteq S_B$,
		and every register $R \in \mathcal{R}_F$,
		$\textit{val}_R(\sigmaHighC[S]) =
		\textit{val}_R(\sigmaHighC[S_F])$.
	Furthermore, only registers in $\mathcal{R}_F$
		are accessed during $E(C''_S,\sigma_F)$.
	So for every integer $j \in \{0,1, \ldots,h-1\}$,
		if $R[j] \not\in \mathcal{R}_F$,
		then by \autoref{thm:HighPreInvars} (\ref{HPI:BRegsSimilar}),
		$\textit{val}_{R[j]}(C'_S,\sigma'_S \circ \sigma_F) =
		\textit{val}_{R[j]}(C'_{S_B},\sigma'_{S_B} \circ \sigma_F)$,
		i.e., $\textit{val}_{R[j]}(\sigmaHighC[S]) =
		\textit{val}_{R[j]}(\sigmaHighC[S_B])$.
	Thus we have shown that for every register $R \in \mathcal{R}$
		such that either $R \in \mathcal{R}_F$ or
		$R$ is one of $R[0..h-1]$,
		and every set $S \subseteq \mathcal{P}$ such that
		$F(\sigmaHighC[S_B]) \subseteq S \subseteq S_B$,
		$\textit{val}_R(\sigmaHighC[S]) =
		\textit{val}_R(\sigmaHighC[S_B])$.
	Then since we have already proven that Invariant~\ref{RME-invar:uniqueSMax}
		holds for $\sigmaHighC[0..2^n-1]$ with $\SMax(\sigmaHighC[0..2^n-1]) = S_B$,
		for every register $R \in \mathcal{R}$
		such that either $R \in \mathcal{R}_F$ or
		$R$ is one of $R[0..h-1]$,
		and every set $S \subseteq \mathcal{P}$ such that
		$\sigmaHighC[S] \neq \bot$,
		$\textit{val}_R(\sigmaHighC[S]) =
		\textit{val}_R(\sigmaHighC[S_B])$.
	So for every register $R \in \mathcal{R}$
		such that either $R \in \mathcal{R}_F$ or
		$R$ is one of $R[0..h-1]$,
		regardless of $\textit{last}_R(\sigmaHighC[S_B])$,
		if $y_R = \textit{val}_R(\sigmaHighC[S_B])$, 
		then we have that for every set $S \subseteq \mathcal{P}$,
		if $\sigmaHighC[S] \neq \bot$, then:
	\begin{displaymath}
		\textit{val}_R\bparen{\sigmaHighC[S]}=
		\begin{cases}
				\textit{val}_R(\sigmaHighC[S_B]) & 
					\text{if $\textit{last}_R(\sigmaHighC[S_B]) \in S$} \\
				y_R & \text{otherwise}
		\end{cases}
	\end{displaymath}
	
	Now consider the registers that are not in $\mathcal{R}_F$
		and not one of $R[0..h-1]$, i.e.,
		the registers that are not accessed during
		$E(C'_S,\sigma'_S \circ \sigma_F)$
		for every set $S \subseteq \mathcal{P}$ such that
		$F(\sigmaHighC[S_B]) \subseteq S \subseteq S_B$.
	By \autoref{thm:HighBInvars},
		$\sigmaHighB[0..2^n-1]$ is $(i-1)$-compliant
		with $\SMax(\sigmaHighB[0..2^n-1]) = S_B$.
	So by Invariant~\ref{RME-invar:subsetRegsSimilar},
		for each such register $R$,
		there exists a value $y_R$ such that 
		for every set $S \subseteq \mathcal{P}$,
		if $\sigmaHighB[S] \neq \bot$, then:
	\begin{displaymath}
		\textit{val}_R\bparen{\sigmaHighB[S]}=
		\begin{cases}
				\textit{val}_R(\sigmaHighB[S_B]) & 
					\text{if $\textit{last}_R(\sigmaHighB[S_B]) \in S$} \\
				y_R & \text{otherwise}
		\end{cases}
	\end{displaymath}
	Then, since each such register $R$ is not accessed during 
		$E(C'_S,\sigma'_S \circ \sigma_F)$
		for every set $S \subseteq \mathcal{P}$ such that
		$F(\sigmaHighC[S_B]) \subseteq S \subseteq S_B$,
		and we have already proven that Invariant~\ref{RME-invar:uniqueSMax}
		holds for $\sigmaHighC[0..2^n-1]$ with $\SMax(\sigmaHighC[0..2^n-1]) = S_B$:
	\begin{displaymath}
		\textit{val}_R\bparen{\sigmaHighC[S]}=
		\begin{cases}
				\textit{val}_R(\sigmaHighC[S_B]) & 
					\text{if $\textit{last}_R(\sigmaHighC[S_B]) \in S$} \\
				y_R & \text{otherwise}
		\end{cases}
	\end{displaymath}
	
	Consequently, for every register $R \in \mathcal{R}$,
		regardless of whether $R \in \mathcal{R}_F$
		and whether $R$ is one of $R[0..h-1]$,
		there is a value $y_R$ such that
		for every set $S \subseteq \mathcal{P}$,
		if $\sigmaHighC[S] \neq \bot$, then:
	\begin{displaymath}
		\textit{val}_R\bparen{\sigmaHighC[S]}=
		\begin{cases}
				\textit{val}_R(\sigmaHighC[S_B]) & 
					\text{if $\textit{last}_R(\sigmaHighC[S_B]) \in S$} \\
				y_R & \text{otherwise}
		\end{cases}
	\end{displaymath}
	So Invariant~\ref{RME-invar:subsetRegsSimilar} holds
		for $\sigmaHighC[0..2^n-1]$.
	
	By \autoref{thm:HighBInvars},
		$\sigmaHighB[0..2^n-1]$ is $(i-1)$-compliant.
	So by Invariant~\ref{RME-invar:crashes},
		for every set $S \subseteq \mathcal{P}$ with $\sigmaHighB[S] \neq \bot$,
		during $E(\sigmaHighB[S])$, each process crashes at most once,
		and each process that is not in $F(\sigmaHighB[S])$ never crashes.
		
	By construction, for every set $S \subseteq \mathcal{P}$ 
		with $\sigmaHighC[S] \neq \bot$,
		$\sigma'_S$ does not contain any crash steps.
	Furthermore, $\sigma_F$ contains exactly one crash step 
		for each process in $S_\alpha$
		and no other crash steps.
	
	We have already proven that $F(\sigmaHighC[S_B]) = F(\sigmaHighB[S_B]) \cup S_\alpha$
		and that Invariant~\ref{RME-invar:FSame} holds for $\sigmaHighC[0..2^n-1]$.
	By \autoref{thm:HighBInvars},
		Invariant~\ref{RME-invar:FSame} also holds for $\sigmaHighB[0..2^n-1]$.
	So for every set $S \subseteq \mathcal{P}$ 
		with $\sigmaHighC[S] \neq \bot$,
		$F(\sigmaHighC[S]) = F(\sigmaHighB[S]) \cup S_\alpha$.
	
	So for every process $p \not\in F(\sigmaHighC[S_B])$,
		$p$ never crashes during $E(\sigmaHighC[S])$.
	Furthermore, since $S_\alpha \not\in F(\sigmaHighB[S])$,
		processes in $S_\alpha$ never crash during $E(\sigmaHighB[S])$,
		and thus crash at most once during $E(\sigmaHighC[S]) = 
		E(\sigmaHighB[S] \circ \sigma'_S \circ \sigma_F)$.
	Finally, processes in $F(\sigmaHighB[S])$
		do not have any (crash) steps in either $\sigma'_S$ or $\sigma_F$,
		so they also still crash at most once during $E(\sigmaHighC[S]) = 
		E(\sigmaHighB[S] \circ \sigma'_S \circ \sigma_F)$.
	Consequently, for every set $S \subseteq \mathcal{P}$ with $\sigmaHighC[S] \neq \bot$,
		during $E(\sigmaHighC[S])$, each process crashes at most once,
		and each process that is not in $F(\sigmaHighC[S])$ never crashes 
		(Invariant~\ref{RME-invar:crashes}).
	
	Now suppose, for contradiction, that 
		for some set $S \subseteq \mathcal{P}$ 
		with $\sigmaHighC[S] \neq \bot$,
		some process $p$ that is not in $F(\sigmaHighC[S])$
		enters the critical section during $E(\sigmaHighC[S]) = 
		E(\sigmaHighB[S] \circ \sigma'_S \circ \sigma_F)$.
	Since we have already proven that Invariant~\ref{RME-invar:PsubsetS}
		holds for $\sigmaHighC[0..2^n-1]$,
		$p \in S$.
	Furthermore, we have also already shown that 
		$F(\sigmaHighC[S]) = F(\sigmaHighB[S]) \cup S_\alpha$,
		so $p \not\in F(\sigmaHighB[S]) \cup S_\alpha$.
	 
	By \autoref{thm:HighBInvars}, $\sigmaHighB[0..2^n-1]$ is $(i-1)$-compliant.
	So by Invariant~\ref{RME-invar:CS},
		since $p \not\in F(\sigmaHighB[S])$,
		$p$ does not enter the critical section during $E(\sigmaHighB[S])$.
	Furthermore, by the definition of $\sigma_F$, 
		since $p \not\in S_\alpha$,
		$\sigma_F$ contains no steps of $p$.
	Thus $p$ must be one of the processes that take a step during $E(C'_S,\sigma'_S)$,
		and must enter the critical section with this one step.
	By Assumption~\ref{RME-assumption:RMR-in-CS},
		a process that enters the critical section cannot leave the critical section
		before incurring an RMR within the critical section.
	Thus, since $\sigma_F$ contains no steps of $p \not\in S_\alpha$,
		$p$ remains in the critical section throughout $E(C''_S,\sigma_F)$.
	
	Now consider the processes in $S_\alpha$.
	Since $\sigmaHighB[0..2^n-1]$ is $(i-1)$-compliant,
		and $S_\alpha \cap F(\sigmaHighB[S]) = \varnothing$,
		the processes in $S_\alpha$ do not enter the critical section during $E(\sigmaHighB[S])$.
	Since $p$ is in the critical section in $C''_S$,
		and $\sigma'_S$ contains at most one step of each process,
		to avoid violating mutual exclusion,
		each process in $S_\alpha$ cannot enter the critical section 
		with its at most one step taken during $E(C'_S,\sigma'_S)$
		by Assumption~\ref{RME-assumption:RMR-in-CS}.
	Thus during $E(\sigmaHighB[S] \circ \sigma'_S)$,
		the processes in $S_\alpha$ do not enter the critical section.
	Furthermore, since $p$ remains in the critical section throughout $E(C''_S,\sigma_F)$,
		to avoid violating mutual exclusion,
		the processes in $S_\alpha$ also do not enter the critical section
		during $E(\sigmaHighB[S] \circ \sigma'_S \circ \sigma_F)$.
	
	However, we have already shown that 
		$F(\sigmaHighC[S]) = F(\sigmaHighB[S]) \cup S_\alpha$.
	Thus we have that during $E(\sigmaHighC[S]) = 
		E(\sigmaHighB[S] \circ \sigma'_S \circ \sigma_F)$,
		every process in $S_\alpha$ completes its super-passage
		without entering the critical section --- a contradiction.
	Consequently, for every set $S \subseteq \mathcal{P}$ 
		with $\sigmaHighC[S] \neq \bot$,
		each process that is not in $F(\sigmaHighC[S])$ 
		does not enter the critical section during $E(\sigmaHighC[S])$
		(Invariant~\ref{RME-invar:CS}).
		
	Next, by \autoref{thm:HighBInvars},
		for every register $R \in \mathcal{R}$
		such that either $R$ is one of $R[0..h-1]$
		or $R \in \mathcal{R}_F$,
		the owner of $R$ is not in 
		$S_B \setminus F(\sigmaHighB[S_B])$.
	By \autoref{thm:ExactlyGroupRegsAccessed},
		the definition of $\mathcal{R}_F$,
		and \autoref{thm:sigmaFApplicable},
		for every set $S \subseteq \mathcal{P}$ such that
		$\sigmaHighC[S] \neq \bot$,
		each register $R \in \mathcal{R}$ is only accessed during 
		$E(\sigmaHighC[S]) = E(\sigmaHighB[S] \circ \sigma'_S \circ \sigma_F)$
		if either $R$ is one of $R[0..h-1]$ or $R \in \mathcal{R}_F$.
	Therefore, for every register $R \in \mathcal{R}$,
		if $R$ is accessed during $E(C'_S,\sigma'_S \circ \sigma_F)$,
		then the owner of $R$ is not in $S_B \setminus F(\sigmaHighB[S_B])$.
	Consequently, for every set $S \subseteq \mathcal{P}$ such that
		$\sigmaHighC[S] \neq \bot$,
		during $E(C'_S,\sigma'_S \circ \sigma_F)$,
		each register $R \in \mathcal{R}$ cannot be accessed
		if the owner of $R$ is in $S_B \setminus F(\sigmaHighB[S_B])$.
		 
	By \autoref{thm:HighBInvars},
		Invariant~\ref{RME-invar:DSMSelfAccessOnly} holds for $\sigmaHighB[0..2^n-1]$
		and $\SMax(\sigmaHighB[0..2^n-1]) = S_B$.
	So in the DSM model, for every set $S \subseteq \mathcal{P}$ such that
		$\sigmaHighB[S] \neq \bot$,
		during $E(\sigmaHighB[S])$,
		each register $R \in \mathcal{R}$ can only be accessed by its owner
		if the owner of $R$ is in $S_B \setminus F(\sigmaHighB[S_B])$.
	Thus for every set $S \subseteq \mathcal{P}$ such that
		$\sigmaHighC[S] \neq \bot$,
		during $E(\sigmaHighC[S]) = E(\sigmaHighB[S] \circ \sigma'_S \circ \sigma_F)$,
		each register $R \in \mathcal{R}$ can only be accessed by its owner
		if the owner of $R$ is in $S_B \setminus F(\sigmaHighB[S_B])$.
	Then, since $F(\sigmaHighC[S_B]) \supseteq F(\sigmaHighB[S_B])$,
		$S_B \setminus F(\sigmaHighC[S_B]) \subseteq 
		S_B \setminus F(\sigmaHighB[S_B])$.
	Consequently, in the DSM model,
		for every set $S \subseteq \mathcal{P}$ such that $\sigmaHighC[S] \neq \bot$,
		during $E(\sigmaHighC[S]) = E(\sigmaHighB[S] \circ \sigma'_S \circ \sigma_F)$,
		each register $R \in \mathcal{R}$ can only be accessed by its owner
		if the owner of $R$ is in $S_B \setminus F(\sigmaHighC[S_B])$
		(Invariant~\ref{RME-invar:DSMSelfAccessOnly}).
	
	By \autoref{thm:HighPreInvars} (\ref{HPI:ValidCCSame}),
		in the CC model,for every set $S \subseteq \mathcal{P}$
		such that $F(\sigmaHighB[S_B]) \cup S_\alpha \subseteq S \subseteq S_B$,
		and every process $p \in S \setminus F(\sigmaHighB[S_B])$,
		the set of registers that $p$ has valid cache copies of
		at the end of $E(\sigmaHighB[S] \circ \sigma'_S)$
		is exactly the same as at the end of 
		$E(\sigmaHighB[S_B] \circ \sigma'_{S_B})$.
	By \autoref{thm:sigmaFApplicable},
		for every register $R \in \mathcal{R}$,
		if a non-read operation is performed on $R$
		during $E(C''_S,\sigma_F)$,
		it is also performed on $R$
		during $E(C''_{S_B},\sigma_F)$.
	By definition, $\sigma_F$ contains only steps of processes in $S_\alpha$.
	So for every set $S \subseteq \mathcal{P}$
		such that $F(\sigmaHighB[S_B]) \cup S_\alpha \subseteq S \subseteq S_B$,
		and every process $p \in S \setminus (F(\sigmaHighB[S_B]) \cup S_\alpha)$,
		the set of registers that $p$ has valid cache copies of
		at the end of $E(\sigmaHighC[S]) = E(\sigmaHighB[S] \circ \sigma'_S \circ \sigma_F)$
		is exactly the same as at the end of 
		$E(\sigmaHighC[S_B]) = E(\sigmaHighB[S_B] \circ \sigma'_{S_B} \circ \sigma_F)$.
	
	Now recall that we have already proven that
		$F(\sigmaHighC[S_B]) = F(\sigmaHighB[S_B]) \cup S_\alpha$,
		and that Invariant~\ref{RME-invar:uniqueSMax}
		holds for $\sigmaHighC[0..2^n-1]$ with 
		$\SMax(\sigmaHighC[0..2^n-1]) = S_B$.
	Thus for every set $S \subseteq \mathcal{P}$
		such that $\sigmaHighC[S] \neq \bot$,
		and every process $p \in S \cap (S_B \setminus F(\sigmaHighC[S_B]))$,
		the set of registers that $p$ has valid cache copies of
		at the end of $E(\sigmaHighC[S])$
		is exactly the same as at the end of $E(\sigmaHighC[S_B])$
		(Invariant~\ref{RME-invar:ValidCCSame}).
	
	Finally, by \autoref{thm:HighBInvars}, 
		$\sigmaHighB[0..2^n-1]$ is $(i-1)$-compliant,
		so by Invariant~\ref{RME-invar:iRMRs},
		for every set $S \subseteq \mathcal{P}$
		and every process $p \in S \setminus F(\sigmaHighB[S])$,
		if $\sigmaHighB[S] \neq \bot$, then 
		$p$ incurs at least $i-1$ RMRs
		during $E(\sigmaHighB[S])$.
	By \autoref{thm:HighPreInvars} (\ref{HPI:OneRMREach}),
		for every set $S \subseteq \mathcal{P}$ such that 
		$F(\sigmaHighB[S_B]) \cup S_\alpha \subseteq S \subseteq S_B$, 
		each process in $S_\beta \cap S$ incurs 
		exactly one RMR during $E(C'_S,\sigma'_S)$.
	By construction, for every set $S \subseteq \mathcal{P}$,
		if $\sigmaHighC[S] \neq \bot$, then
		$\sigmaHighC[S] = \sigmaHighB[S] \circ \sigma'_S \circ \sigma_F$,
		i.e., every process in $S_\beta \cap S = S \setminus F(\sigmaHighB[S_B])$ 
		incurs at least one more RMR during $E(\sigmaHighC[S])$
		than during $E(\sigmaHighB[S])$.
	Thus for every set $S \subseteq \mathcal{P}$
		and every process $p \in S \setminus F(\sigmaHighB[S_B])$
		if $\sigmaHighC[S] \neq \bot$,
		then $p$ incurs at least $i$ RMRs during $E(\sigmaHighC[S])$.
		 
	Now recall that we have already proven that
		$F(\sigmaHighC[S_B]) = F(\sigmaHighB[S_B]) \cup S_\alpha$,
		and that Invariant~\ref{RME-invar:FSame} holds
		for $\sigmaHighC[0..2^n-1]$.
	So for every set $S \subseteq \mathcal{P}$,
		if $\sigmaHighC[S] \neq \bot$, then 
		$F(\sigmaHighC[S]) = F(\sigmaHighB[S_B]) \cup S_\alpha$.
	Thus $(S \setminus F(\sigmaHighB[S_B]) ) \supseteq (S \setminus F(\sigmaHighC[S]))$.
	Therefore for every set $S \subseteq \mathcal{P}$
		and every process $p \in S \setminus F(\sigmaHighC[S])$
		if $\sigmaHighC[S] \neq \bot$,
		then $p$ incurs at least $i$ RMRs during $E(\sigmaHighC[S])$
		(\ref{RME-invar:iRMRs}).
\end{proof}

Finally, we terminate this $i$-th iteration by setting 
	$\sigmaFinal[i,0..2^n-1] = \sigmaHighC[0..2^n-1]$.

\newcommand{\SiMax}[1]{S^{#1}_\textit{max}}
\newcommand{\ActiveI}[1]{n_{#1}}

\paragraph{Analysis:}

For every non-negative integer $i$,
	if $\sigmaFinal[i,0..2^n-1]$ is $i$-compliant,
	then let $\SiMax{i} = \SMax(\sigmaFinal[i,0..2^n-1])$,
	and let $\ActiveI{i} = |\SiMax{i} \setminus F(\sigmaFinal[i,\SiMax{i}])|$.

\begin{lem}
	\label{thm:FinalCompliance}
	For every non-negative integer $i$,
		if $\sigmaFinal[i,0..2^n-1]$ has non-$\bot$ entries,
		then $\sigmaFinal[i,0..2^n-1]$ is $i$-compliant.
\end{lem}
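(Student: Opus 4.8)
The plan is to argue directly from the control flow of the iterative construction rather than through a genuine induction on the correctness of earlier rows. The crucial observation is that the Termination Phase's compliance check is exactly what licenses treating the previous row as compliant: the $i$-th iteration only proceeds past its Termination Phase---and hence only ever writes into row $i$---when $\sigmaFinal[i-1,0..2^n-1]$ is already $(i-1)$-compliant.

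Concretely, I would first dispose of the base case $i = 0$: the Base Case construction fills every entry of $\sigmaFinal[0,0..2^n-1]$ with the empty schedule, which was already observed to be $0$-compliant (with $\SMax = \mathcal{P}$). For $i \geq 1$, assume $\sigmaFinal[i,0..2^n-1]$ has a non-$\bot$ entry. Since the table begins all-$\bot$ and row $i$ is written only during the $i$-th iteration, that entry must have been placed there by that iteration, so the iteration did not halt in its Termination Phase. By the Termination Phase's condition, non-termination implies $\sigmaFinal[i-1,0..2^n-1]$ is $(i-1)$-compliant; hence $\sigmaOld[S] = \sigmaFinal[i-1][S]$ defines an $(i-1)$-compliant array, which is precisely the hypothesis under which the Setup, Decision, and Contention Phases operate. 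The Decision Phase then branches on whether $|L| \geq |H|$ or $|L| < |H|$---two mutually exclusive and exhaustive cases---and each branch closes the iteration by assigning row $i$ a concrete array. In the Low Contention case ($|L| \geq |H|$) we have $\sigmaFinal[i,0..2^n-1] = \sigmaLowC[0..2^n-1]$, which is $i$-compliant by \autoref{thm:LowCInvars}; in the High Contention case ($|L| < |H|$) we have $\sigmaFinal[i,0..2^n-1] = \sigmaHighC[0..2^n-1]$, which is $i$-compliant by \autoref{thm:HighCInvars}. Either way, row $i$ is $i$-compliant.

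I do not anticipate a genuine obstacle, since the real work was already carried out in the chain of lemmas culminating in \autoref{thm:LowCInvars} and \autoref{thm:HighCInvars}; this statement merely ties the iteration's branching to those two results. The one point that demands care is keeping the logical direction straight---that it is the hypothesis ``row $i$ is nonempty'' which forces non-termination at step $i$, and that non-termination is itself the assertion that row $i-1$ is $(i-1)$-compliant---so that no circular or unjustified inductive appeal slips in. It is also worth noting explicitly that the stronger non-termination condition additionally supplies at least $2^{(k^3)}$ non-$\bot$ entries in row $i-1$, a fact not needed for $i$-compliance here but relied upon in the subsequent complexity analysis.
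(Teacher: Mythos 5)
Your proposal is correct and follows essentially the same route as the paper's own proof: handle $i=0$ via the base-case construction, then observe that a nonempty row $i\geq 1$ must have been written by the $i$-th iteration as either $\sigma_\textit{lowB}[0..2^n-1]$ (the low-contention array) or $\sigma_\textit{highC}[0..2^n-1]$, whose $i$-compliance is exactly \autoref{thm:LowCInvars} and \autoref{thm:HighCInvars}. Your explicit unpacking of the Termination Phase check --- that non-termination is precisely what certifies the $(i-1)$-compliance of row $i-1$ and hence licenses those two lemmas --- is left implicit in the paper but is the same argument.
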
 

\toOmit{ 
	\begin{proof}
		If $i = 0$, then
			 every entry of $\sigmaFinal[0,0..2^n-1]$ is the empty schedule.
		Clearly, the array $\sigmaFinal[0,0..2^n-1]$ is $0$-compliant.

		So suppose $i > 0$.
		Thus if $\sigmaFinal[i,0..2^n-1]$ has non-$\bot$ entries,
			then either $\sigmaFinal[i,0..2^n-1] = \sigmaLowC[0..2^n-1]$
			or $\sigmaFinal[i,0..2^n-1] = \sigmaHighC[0..2^n-1]$.
		The lemma immediately follows from \autoref{thm:LowCInvars}
			and \autoref{thm:HighCInvars}.
	\end{proof}
}

\begin{lem}
	\label{thm:ShrinkSizeLowerBound}
	For every positive integer $i$,
		if $\sigmaFinal[i,0..2^n-1]$ is $i$-compliant,
		then $\ActiveI{i} \geq \ActiveI{i-1} / (640 \log^{d+1} n) - 2$
\end{lem}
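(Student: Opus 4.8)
The plan is to trace the active-process count $\ActiveI{i}$ through the two possible phases of the $i$-th iteration and verify the shrinkage bound separately in each, then take the weaker of the two constants. First I would record how the Setup Phase affects the active set. Since $\sigmaSetupA$ appends only non-RMR steps and completes no super-passage, it leaves the active set unchanged, and $\sigmaSetupB$ removes at most a single process (the unique one inside the critical section). Writing $m=|\SSMax\setminus F(\sigmaSetupB[\SSMax])|$ and recalling that $\ActiveI{i-1}=|\SOMax\setminus F(\sigmaOld[\SOMax])|$ with $\sigmaOld=\sigmaFinal[i-1]$, this gives $m\ge \ActiveI{i-1}-1$. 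By the definitions of $H$ and $L$, the active processes of $\sigmaSetupB$ partition into $H$ and $L$, so $|H|+|L|=m$.

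In the low contention case ($|L|\ge|H|$) we have $|L|\ge m/2$. By \autoref{thm:Isize} the independent set satisfies $|I|\ge |L|/(7k\log n)$, and the active set of $\sigmaLowB$ is exactly $I$ (its $\SMax$ is $S_I=F(\sigmaSetupB[\SSMax])\cup I$ and $F$ is unchanged, since only one step per process is appended). As $\sigmaFinal[i]=\sigmaLowC$ is obtained from $\sigmaLowB$ by removing at most one more critical-section process, I get $\ActiveI{i}\ge|I|-1$. Substituting $k=\log^d n$ yields $\ActiveI{i}\ge(\ActiveI{i-1}-1)/(14\log^{d+1}n)-1$, which is at least $\ActiveI{i-1}/(640\log^{d+1}n)-2$ because $14\le 640$ and the additive slack is absorbed by the $-2$.

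In the high contention case ($|L|<|H|$) we have $|H|>m/2$. I would recall the chain of group refinements giving $|H_5|>|H|/2048$ and then invoke \autoref{thm:BetaSize} to obtain $|S_\beta\setminus S_\alpha|>|H|/(204.8\,k)$. The crucial identification is that the active set of $\sigmaHighC$ equals $S_\beta\setminus S_\alpha$: by \autoref{thm:HighCInvars} we have $\SMax(\sigmaHighC[0..2^n-1])=S_B$ and $F(\sigmaHighC[S_B])=F(\sigmaHighB[S_B])\cup S_\alpha$, and since $S_B=S_\beta\cup F(\sigmaHighB[S_B])$ with $S_\beta$ disjoint from $F(\sigmaHighB[S_B])$, it follows that $\ActiveI{i}=|S_B\setminus F(\sigmaHighC[S_B])|=|S_\beta\setminus S_\alpha|$. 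Hence $\ActiveI{i}>(\ActiveI{i-1}-1)/(409.6\,\log^d n)$, which again dominates $\ActiveI{i-1}/(640\log^{d+1}n)-2$: using $\log n\ge 1$ we have $409.6\log^d n\le 640\log^{d+1}n$, so $(\ActiveI{i-1}-1)/(409.6\log^d n)\ge(\ActiveI{i-1}-1)/(640\log^{d+1}n)\ge \ActiveI{i-1}/(640\log^{d+1}n)-2$.

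The main obstacle will be pinning down the exact active-set identities at the boundaries of each phase rather than any calculation: in particular, confirming that in the high contention phase the alpha processes are fully absorbed into $F$ via $\sigma_F$ while the beta processes survive as the new active set, and that the only losses of active processes across the whole iteration are the at-most-one critical-section removals in the Setup and Low Contention phases. These two removals are precisely what the additive $-2$ accounts for, while the multiplicative factor is the product of the $\tfrac12$ from the case split, the independent-set (resp. beta-survival) fraction, and the group size $k=\log^d n$, all of which are safely bounded by the single constant $640\log^{d+1}n$.
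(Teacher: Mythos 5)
Your proposal is correct and follows essentially the same route as the paper's own proof: the same case split on $|L|\geq|H|$ versus $|L|<|H|$, the same invocations of \autoref{thm:Isize} and \autoref{thm:BetaSize}, the same identifications of the new active sets ($I$ in the low-contention case, $S_\beta\setminus S_\alpha$ in the high-contention case), and the same accounting of the at-most-one critical-section removals in the Setup and Low Contention phases via the additive $-2$. The intermediate constants you derive ($14k\log n$ and $409.6k$) match the paper's, and your final absorption into the stated $640\log^{d+1}n$ bound is valid.
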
 

\toOmit{
	\begin{proof}
		Since $\sigmaFinal[i,0..2^n-1]$ is $i$-compliant,
			either $\sigmaFinal[i,0..2^n-1] = \sigmaLowC[0..2^n-1]$
			or $\sigmaFinal[i,0..2^n-1] = \sigmaHighC[0..2^n-1]$.

		\begin{description}
			\item[\textbf{Case 1.}] $\sigmaFinal[i,0..2^n-1] = \sigmaLowC[0..2^n-1]$.

				Then by \autoref{thm:LowCInvars},
					$\SMax(\sigmaLowC[0..2^n-1]) = \SLMax$.
				Thus $\ActiveI{i} = |\SLMax \setminus F(\sigmaLowC[\SLMax])|$.

				Now recall that in the construction of $\sigmaLowC[0..2^n-1]$,
					we checked whether there exists a process
					$p \in S_I \setminus F(\sigmaLowB[S_I])$ such that
					$p$ is within the critical section at the end of $E(\sigmaLowB[S_I])$.
				If such a process $p$ exists,
					then we set $\sigmaLowC[0..2^n-1]$ to be a simple modification
					of $\sigmaLowB[0..2^n-1]$ where every set $S \subseteq \mathcal{P}$
					that contains $p$ has had $\sigmaLowC[S]$ set to $\bot$;
					otherwise we set $\sigmaLowC[0..2^n-1] = \sigmaLowB[0..2^n-1]$.
				By \autoref{thm:LowBInvars},
					$\SMax(\sigmaLowB[0..2^n-1]) = S_I$.
				So by the construction of $\sigmaLowC[0..2^n-1]$,
					$|\SLMax \setminus F(\sigmaLowC[\SLMax])| \geq
					|S_I \setminus F(\sigmaLowB[S_I])| - 1$.
				Thus $\ActiveI{i} \geq |S_I \setminus F(\sigmaLowB[S_I])| - 1$.

				Then recall that by the construction of $\sigmaLowB[0..2^n-1]$,
					for every set $S \subseteq \mathcal{P}$,
					$\sigmaLowB[S] = \bot$ if and only if $\sigmaLowA[S] = \bot$.
				By \autoref{thm:LowBInvars} and \autoref{thm:LowAInvars},
					Invariant~\ref{RME-invar:uniqueSMax} holds for both
					$\sigmaLowB[0..2^n-1]$ and $\sigmaLowA[0..2^n-1]$,
					with $\SMax(\sigmaLowB[0..2^n-1]) = \SMax(\sigmaLowA[0..2^n-1]) = S_I$.
				So $|S_I \setminus F(\sigmaLowB[S_I])| = |S_I \setminus F(\sigmaLowA[S_I])|$.
				Thus $\ActiveI{i} \geq |S_I \setminus F(\sigmaLowA[S_I])| - 1$.

				Next, recall that by the construction of $\sigmaLowA[0..2^n-1]$,
					for every set $S \subseteq \mathcal{P}$,
					if $S \not\subseteq S_I$, then $\sigmaLowA[S] = \bot$;
					otherwise $\sigmaLowA[S] = \sigmaSetupB[S]$.
				By definition, $S_I = F(\sigmaSetupB[\SSMax]) \cup I$.

				Furthermore, by \autoref{thm:SetupBInvars},
					$\sigmaSetupB[0..2^n-1]$ is $(i-1)$-compliant with
					$\SMax(\sigmaSetupB[0..2^n-]) = \SSMax$.
				So by Invariant~\ref{RME-invar:FSame},
					$F(\sigmaSetupB[\SSMax]) = F(\sigmaLowA[S_I])$.
				By construction, $I \cap F(\sigmaSetupB[\SSMax]) = \varnothing$.
				Thus $\ActiveI{i} \geq |S_I \setminus F(\sigmaLowA[S_I])| - 1 = |I| - 1$.

				By \autoref{thm:Isize}, $|I| \geq |L| / (7 k \log n)$.
				By construction, $|L| \geq 0.5|\SSMax \setminus F(\sigmaSetupB[\SSMax])|$.
				Thus $\ActiveI{i} \geq |L| / (7 k \log n) - 1
					\geq |\SSMax \setminus F(\sigmaSetupB[\SSMax])| / (14 k \log n) - 1$.

			\item[\textbf{Case 2.}] $\sigmaFinal[i,0..2^n-1] = \sigmaHighC[0..2^n-1]$.

				Then by \autoref{thm:HighCInvars},
					$\SMax(\sigmaHighC[0..2^n-1]) = S_B$.
				Thus $\ActiveI{i} = |S_B \setminus F(\sigmaHighC[S_B])|$.

				Recall that by definition, $S_B = S_\beta \cup F(\sigmaHighA[S_H])$.
				Furthermore, by \autoref{thm:HighCInvars},
					$F(\sigmaHighC[S_B]) = F(\sigmaHighB[S_B]) \cup S_\alpha$.

				Also recall that by the construction of $\sigmaHighB[0..2^n-1]$,
					for every set $S \subseteq \mathcal{P}$,
					if $S \not\subseteq S_B$,
					then $\sigmaHighB[S] = \bot$;
					otherwise $\sigmaHighB[S] = \sigmaHighA[S]$.
				Thus by \autoref{thm:HighAInvars} and Invariant~\ref{RME-invar:FSame},
					$F(\sigmaHighA[S_H]) = F(\sigmaHighA[S_B]) = F(\sigmaHighB[S_B])$.

				Finally, recall that by construction,
					$S_\alpha \cap F(\sigmaHighA[S_H]) = \varnothing$ and
					$S_\beta \cap F(\sigmaHighA[S_H]) = \varnothing$.
				Therefore:
				\begin{align*}
					S_B \setminus F(\sigmaHighC[S_B])
						& = S_B \setminus (F(\sigmaHighB[S_B]) \cup S_\alpha) \\
						& = S_B \setminus (F(\sigmaHighA[S_H]) \cup S_\alpha) \\
						& = (S_\beta \cup F(\sigmaHighA[S_H])) \setminus (F(\sigmaHighA[S_H]) \cup S_\alpha) \\
						& = S_\beta \setminus S_\alpha
				\end{align*}

				By \autoref{thm:BetaSize},
					$|S_\beta \setminus S_\alpha| > \frac{|H|}{204.8 k}$.
				Thus $\ActiveI{i} = |S_B \setminus F(\sigmaHighC[S_B])| > \frac{|H|}{204.8 k}$.

				By construction, $|H| \geq 0.5|\SSMax \setminus F(\sigmaSetupB[\SSMax])|$.
				Thus $\ActiveI{i} > \frac{|H|}{204.8 k}
					\geq |\SSMax \setminus F(\sigmaSetupB[\SSMax])| / (409.6 k)$.
		\end{description}
		So in both cases,
			$\ActiveI{i} \geq |\SSMax \setminus F(\sigmaSetupB[\SSMax])| / (409.6 k \log n) - 1$.

		Now recall that in the construction of $\sigmaSetupB[0..2^n-1]$,
			we checked whether there exists a process
			$p \in \SOMax \setminus F(\sigmaSetupA[\SOMax])$ such that
			$p$ is within the critical section at the end of $E(\sigmaSetupA[\SOMax])$.
		If such a process $p$ exists,
			then we set $\sigmaSetupB[0..2^n-1]$ to be a simple modification
			of $\sigmaSetupA[0..2^n-1]$ where every set $S \subseteq \mathcal{P}$
			that contains $p$ has had $\sigmaSetupB[S]$ set to $\bot$;
			otherwise we set $\sigmaSetupB[0..2^n-1] = \sigmaSetupA[0..2^n-1]$.
		By \autoref{thm:SetupAInvars},
			$\SMax(\sigmaSetupA[0..2^n-1]) = \SOMax$.
		So by the construction of $\sigmaSetupB[0..2^n-1]$,
			$|\SSMax \setminus F(\sigmaSetupB[\SSMax])| \geq
			|\SOMax \setminus F(\sigmaSetupA[\SOMax])| - 1$.
		Thus $\ActiveI{i} \geq |\SSMax \setminus F(\sigmaSetupB[\SSMax])| / (409.6 k \log n) - 1
			\geq |\SOMax \setminus F(\sigmaSetupA[\SOMax])| / (409.6 k \log n) - 2$

		Recall that by the construction of $\sigmaSetupA[0..2^n-1]$,
			for every set $S \subseteq \mathcal{P}$,
			$\sigmaSetupA[S] = \bot$ if and only if $\sigmaOld[S] = \bot$.
		By \autoref{thm:SetupAInvars}, Invariant~\ref{RME-invar:uniqueSMax} holds
			for $\sigmaSetupA[0..2^n-1]$ and $\SMax(\sigmaSetupA[0..2^n-1]) = \SOMax$.
		Also recall that by definition,
			$\sigmaOld[0..2^n-1]$ is $(i-1)$-compliant
			and $\SMax(\sigmaOld[0..2^n-1]) = \SOMax$.
		So $|\SOMax \setminus F(\sigmaSetupA[\SOMax])| = |\SOMax \setminus F(\sigmaOld[\SOMax])|$.
		Thus $\ActiveI{i} \geq |\SOMax \setminus F(\sigmaSetupA[\SOMax])| / (409.6 k \log n) - 2
			\geq |\SOMax \setminus F(\sigmaOld[\SOMax])| / (409.6 k \log n) - 2$

		Finally, recall that $\sigmaOld[0..2^n-1]$ is simply $\sigmaFinal[i-1, 0..2^n-1]$.
		So $|\SOMax \setminus F(\sigmaOld[\SOMax])| = \ActiveI{i-1}$.
		Consequently, $\ActiveI{i} \geq \ActiveI{i-1} / (409.6 k \log n) - 2$.
		Then, since $k = \log^d n$, $\ActiveI{i} \geq \ActiveI{i-1} / (409.6 \log^{d+1} n) - 2$.
	\end{proof}
}

Since $\SMax(\sigmaFinal[0,0..2^n-1]) = \mathcal{P}$
	and $\sigmaFinal[0,\mathcal{P}]$ is the empty schedule,
	$F(\sigmaFinal[0,\mathcal{P}]) = \varnothing$ and
	$\ActiveI{0} = n$.
By \autoref{thm:ShrinkSizeLowerBound},
	for every positive integer $i$,
	if $\sigmaFinal[i,0..2^n-1]$ is $i$-compliant,
	then $\ActiveI{i} \geq \ActiveI{i-1} / O(\log^{d+1} n)$.

Consequently, if $\mathcal{I}$ is the largest positive integer such that
	$\sigmaFinal[\mathcal{I},0..2^n-1]$ is $\mathcal{I}$-compliant,
	then $\mathcal{I}$ is $\Omega(\log n / \log \log n)$.
So $\sigmaFinal[\mathcal{I},\SiMax{\mathcal{I}}]$ contains a schedule such that:
\begin{itemize}
	\item Since we reach the $\mathcal{I}$-th iteration,
		$\sigmaFinal[\mathcal{I}-1,0..2^n-1]$ has at least $2^{(k^3)}$ non-$\bot$ entries, i.e.,
		$\ActiveI{\mathcal{I} - 1} \geq k^3 = \log^{3d} n$.
	So $\ActiveI{\mathcal{I}} \geq \log^{3d} n / (640 \log^{d+1} n) - 2$,
		which for a sufficiently large constant $d$,
		$\ActiveI{\mathcal{I}} \geq \log^d n$.
	Thus $|\SiMax{\mathcal{I}} \setminus
		F(\sigmaFinal[\mathcal{I},\SiMax{\mathcal{I}}])| \geq \log^d n$

	\item For every process $p \in \SiMax{\mathcal{I}}
		\setminus F(\sigmaFinal[\mathcal{I},\SiMax{\mathcal{I}}])$,
		$p$ incurs at least $\mathcal{I}$ RMRs during
		$E(\sigmaFinal[\mathcal{I},\SiMax{\mathcal{I}}])$
		(Invariant~\ref{RME-invar:iRMRs}).

	\item For every process $p \in \SiMax{\mathcal{I}}
		\setminus F(\sigmaFinal[\mathcal{I},\SiMax{\mathcal{I}}])$,
		$p$ never crashes during $E(\sigmaFinal[\mathcal{I},\SiMax{\mathcal{I}}])$
		(Invariant~\ref{RME-invar:crashes}).

	\item For every process $p \in \SiMax{\mathcal{I}}
		\setminus F(\sigmaFinal[\mathcal{I},\SiMax{\mathcal{I}}])$,
		$p$ never enters the critical section during $E(\sigmaFinal[\mathcal{I},\SiMax{\mathcal{I}}])$
		(Invariant~\ref{RME-invar:CS}).
\end{itemize}

Thus we have proven \autoref{thm:MainTheorem}.

  \section{Conclusion}
  We proved a tight RMR lower bound for RME, which applies to almost all standard shared memory primitives that have been used to solve the problem.
  The lower bound separates the RMR complexity of mutual exclusion in the traditional, non-recoverable model from the recoverable model, for systems that provide fetch-and-store and fetch-and-increment objects in addition to registers and compare-and-swap objects.
  It applies to objects of arbitrary (even unbounded) size.

  RME can be solved in constant RMRs with fetch-and-add primitives of size $n^{\Omega(1)}$ bits \cite{DM2020a}, so obviously our lower bound cannot be extended to cover such primitives.
  But it remains an open problem, whether fetch-and-add operations can help, under the standard assumption that objects can store only $O(\log n)$-bits.
  We believe that this is not the case.
  In fact, we conjecture that in general objects that can only store $O(\log n)$ bits of information are not sufficient to break through the $\Omega(\log n/\log\log n)$ RMR complexity barrier. 
	
\begin{acks}
Support is gratefully acknowledged from the Natural Science and Engineering Research Council of
Canada (NSERC) under Discovery Grant RGPIN/2019-04852, and the Canada Research Chairs program.
\end{acks}

\bibliographystyle{ACM-Reference-Format}
 \bibliography{pwliterature}

%
 %
\end{document}